\begin{document}
 \title{Estimation of Integrated Quadratic Covariation With Endogenous Sampling Times \footnote{We would like to thank Simon Clinet, Mathieu Rosenbaum, Steven Lalley, Jianqing Fan (the Editor), an anonymous Associate Editor,
and one anonymous referee for helpful discussions and advice. Financial support from the National Science Foundation under grant DMS 14-07812 is greatly acknowledged.}}
\author{Yoann Potiron\footnote{Faculty of Business and Commerce, Keio University. 2-15-45 Mita, Minato-ku, Tokyo, 108-8345, Japan. Phone:  +81-3-5418-6571. Email: potiron@fbc.keio.ac.jp}  and Per A. Mykland\footnote{Department of Statistics, The University of Chicago. 5734 S. University Avenue Chicago, IL 60637. Phone:  + 1 (773) 702 8044/8333. Fax: + 1 (773) 702 9810. Email: mykland@pascal.uchicago.edu} \vspace{-2ex}}
\date{Forthcoming in \emph{Journal of Econometrics}}

\maketitle

\newcommand{\reels}{\mathbb{R}}
\newcommand{\naturels}{\mathbb{N}}
\newcommand{\relatifs}{\mathbb{Z}}
\newcommand{\rat}{\mathbb{Q}}
\newcommand{\complex}{\mathbb{C}}
\newcommand{\esp}{\mathbb{E}}
\newcommand{\proba}{\mathbb{P}}
\newcommand{\var}{\operatorname{Var}}
\newcommand{\cov}{\operatorname{Cov}}
\newcommand{\Tau}{\mathrm{T}}

\theoremstyle{plain}
\newtheorem{main1}{Theorem}
\newtheorem{maincor}[main1]{Corollary}
\newtheorem{estcor}[main1]{Corollary}
\newtheorem{approxreturn}[main1]{Definition}
\newtheorem{distrib}[main1]{Proposition}
\newtheorem{variance}[main1]{Definition}
\newtheorem{holdingconst}[main1]{Lemma}
\newtheorem{approxdistrib}[main1]{Lemma}
\newtheorem{approxtau}[main1]{Lemma}
\newtheorem{qs}[main1]{Definition}
\newtheorem{avgpsis}[main1]{Definition}
\newtheorem{avgapproxexp}[main1]{Definition}
\newtheorem{distribconst}[main1]{Lemma}
\newtheorem{termsvanishing}[main1]{Lemma}
\newtheorem{distribexists}[main1]{Lemma}
\newtheorem{snk}[main1]{Lemma}
\newtheorem{esptauk}[main1]{Lemma}
\newtheorem{numb}[main1]{Lemma}
\newtheorem{sumtight}[main1]{Lemma}
\newtheorem{s}[main1]{Lemma}
\newtheorem{tauapp}[main1]{Lemma}
\newtheorem{scale}[main1]{Lemma}
\newtheorem{estimating}[main1]{Lemma}
\newtheorem{lemmajumps}[main1]{Lemma}

\theoremstyle{definition}
\newtheorem{stableconvergence}{Definition}

\theoremstyle{remark}
\newtheorem{continuousass}{Remark}
\newtheorem{A3eq}[continuousass]{Remark}
\newtheorem{pl}[continuousass]{Remark}
\newtheorem{asymptoticbias}[continuousass]{Remark}
\newtheorem{asympbias}[continuousass]{Remark}
\newtheorem{rkendomodel}[continuousass]{Remark}
\newtheorem{mainrk}[continuousass]{Remark}
\newtheorem{estrk1}[continuousass]{Remark}
\newtheorem{estrk2}[continuousass]{Remark}
\newtheorem{pathbiased}[continuousass]{Remark}
\newtheorem{cvrate}[continuousass]{Remark}
\newtheorem{ass3rk}[continuousass]{Remark}
\newtheorem{jumps}[continuousass]{Remark}
\newtheorem{logprice}[continuousass]{Remark}
\newtheorem{N2start}[continuousass]{Remark}

\theoremstyle{remark}
\newtheorem{hittingbarrier}{Example}
\newtheorem{hittingbarriernoise}[hittingbarrier]{Example}
\newtheorem{hittingbarriernoisejump}[hittingbarrier]{Example}
\newtheorem{uncertaintyzones}[hittingbarrier]{Example}
\newtheorem{irregulargrid}[hittingbarrier]{Example}
\newtheorem{autoregressive}[hittingbarrier]{Example}

\theoremstyle{definition}
\newtheorem*{assumptions}{Assumptions}
\newtheorem*{assumptionA1}{Assumption (A1)}
\newtheorem*{assumptionA2}{Assumption (A2)}
\newtheorem*{assumptionA3}{Assumption (A3)}
\newtheorem*{assumptionA4}{Assumption (A4)}

\begin{abstract}
 When estimating high-frequency covariance (quadratic covariation) of two arbitrary assets observed asynchronously, 
 simple assumptions, such as independence, are usually imposed on the relationship between 
 the prices process and the observation times. In this paper, we introduce a general endogenous two-dimensional 
 nonparametric model. Because an observation is generated whenever an auxiliary process called \emph{observation time process} 
 hits one of the two boundary processes, it is called the \emph{hitting boundary process 
 with time process} (HBT) model. We establish a central limit theorem for the Hayashi-Yoshida (HY)
 estimator under HBT in the case where the price process and the observation price process follow a continuous It\^{o} process. We obtain an 
 asymptotic bias. We provide an estimator of the latter as well as a bias-corrected HY estimator of the 
 high-frequency covariance. In addition, we give a consistent estimator of the associated 
 standard error. \\ \textbf{Keywords}: asymptotic bias; asynchronous times; endogenous model; Hayashi-Yoshida estimator; high-frequency data; quadratic covariation; time endogeneity \\ \textbf{JEL codes}: C01; C02; C13; C14; C22; C32; C58 
\end{abstract}

\section{Introduction}
Covariation between two assets is a crucial quantity in finance. Fundamental examples include optimal asset allocation 
and risk management. In the past few years, using the increasing amount of high-frequency data available, many papers have been published about estimating 
this covariance. Suppose that the latent log-price of two arbitrary assets $X_t = (X_t^{(1)}, X_t^{(2)})$ follows a continuous It\^{o} process 
\begin{eqnarray}
\label{ito1} dX_t^{(1)} & := & \mu_t^{(1)} dt + \sigma_t^{(1)} dW_t^{(1)}, \\
\label{ito2} dX_t^{(2)} & := & \mu_t^{(2)} dt + \sigma_t^{(2)} dW_t^{(2)},
\end{eqnarray}
where $\mu_t^{(1)}, \mu_t^{(2)}, \sigma_t^{(1)}, \sigma_t^{(2)}$ are random processes, and $W_t^{(1)}$ and $W_t^{(2)}$ are standard Brownian motions, with 
(random) high-frequency correlation $d \langle W^{(1)}, W^{(2)} \rangle_t = \rho_t dt$. 
Econometrics usually seeks to infer the \emph{integrated covariation}
$$\langle X^{(1)}, X^{(2)} \rangle_t = \int_0^t \rho_u \sigma_u^{(1)} \sigma_u^{(2)} du.$$
Earlier results 
were focused on estimating the integrated variance of a single asset, starting from the probabilistic point of view 
(Genon-Catalot and Jacod (1993), Jacod (1994)). 
Barndorff-Nielsen and Shephard (2001, 2002) introduced the 
problem in econometrics. Adapted to two dimensions, if each process is observed simultaneously 
at (possibly random) times $\tau_{0,n} := 0$, $\tau_{1,n}$ , \ldots , $\tau_{N_n,n}$ the \emph{realized covariation} 
$\big[X^{(1)},X^{(2)} \big]_t$ is defined as the sum 
of cross log returns 
\begin{eqnarray}
\label{volest} \big[X^{(1)} , X^{(2)} \big]_t = \sum_{\tau_{i,n} \leq t} \Delta X_{\tau_{i,n}}^{(1)} \Delta X_{\tau_{i,n}}^{(2)},
\end{eqnarray}
where for any positive integer $i$, $\Delta X_{\tau_{i,n}}^{(k)} = X_{\tau_{i,n}}^{(k)} - X_{\tau_{i-1,n}}^{(k)}$ corresponds to 
the increment of the $k$th process between the last two sampling times. As the observation intervals 
$\Delta \tau_{i,n}$ get closer (and the number of observations $N_n$ goes to 
infinity), $\big[ X^{(1)} , X^{(2)} \big]_t \overset{\proba}{\rightarrow} \langle X^{(1)}, X^{(2)} \rangle_t$ 
(see e.g. Theorem I.4.47 in Jacod and Shiryaev (2003)). Furthermore, when the observation times $\tau_{i,n}$ are independent 
of the prices process $X_t$, its estimation error follows a mixed normal distribution (Jacod and Protter (1998), Zhang (2001), Mykland and Zhang (2006)). 
This gives us insight on how 
to estimate the integrated covariation. However, in 
practice, these two assumptions are usually not satisfied. The observation times of the two assets are rarely 
\emph{synchronous} and there is \emph{endogeneity} in the price sampling times.

\bigskip
The first issue has been studied for a long time. The lack of synchronicity often creates undesirable effects in inference. 
If we sample at very high frequencies, we observe the Epps effect (Epps (1979)), i.e. the correlation estimates are drastically decreased compared to 
an estimate with sparse observations.  Hayashi and Yoshida (2005) introduced the so-called 
\emph{Hayashi-Yoshida estimator} (HY)
\begin{eqnarray}
\label{HY} \langle \widehat{X^{(1)}, X^{(2)}} \rangle_t^{HY} = \sum_{\tau_{i,n}^{(1)} , \tau_{j,n}^{(2)} < t} \Delta X_{\tau_{i,n}^{(1)}}^{(1)} 
\Delta X_{\tau_{j,n}^{(2)}}^{(2)} 
\mathbf{1}_{ \big\{ [ \tau_{i-1,n}^{(1)}, \tau_{i,n}^{(1)} ) \cap [ \tau_{j-1,n}^{(2)}, \tau_{j,n}^{(2)} ) \neq \emptyset \big\} },
\end{eqnarray}
where $\tau_{i,n}^{(k)}$ are the observation times of the $k$th asset. Note that if the observations of both 
processes occur simultaneously, 
(\ref{volest}) and (\ref{HY}) are equal. The consistency of this estimator was achieved in Hayashi and Yoshida (2005) and Hayashi and Kusuoka (2008). The corresponding central limit theorems were investigated in Hayashi and Yoshida 
(2008, 2011) under strong predictability of observation times, which is a more restrictive assumption than only assuming they are stopping times 
but still allows some dependence between prices and observation times. Recently, Koike (2014, 2015) extended the pre-averaged 
Hayashi-Yoshida estimator first under predictability of observation times, and then under a more general endogenous setting of stopping times. Other examples of high-frequency covariance estimators can be found in Zhang (2011), Barndorff-Nielsen et al. (2011), A\"{i}t-Sahalia et al. (2010), Christensen et al. (2010, 2013).  

\bigskip
In a general one-dimensional endogenous model, the asymptotic behaviour of the realized volatility (\ref{volest}) 
has been investigated in the case of sampling times given by hitting times on a grid (Fukasawa (2010a), Robert and Rosenbaum (2011, 2012), Fukasawa and Rosenbaum (2012)). Due to the regularity of those three models (see the discussion in the latter paper), they don't obtain any bias in the 
limit distribution of the normalized error. 
Also, the case of strongly predictable stopping times is treated in Hayashi et al. (2011). Finally, two general results (Fukasawa (2010b), Li and al. (2014)) showed that we can identify and estimate the asymptotic bias.

\bigskip
The primary goal of this paper is to bias-correct the HY. Note that estimating the bias is more challenging than in 
the volatility case because observations are asynchronous. In particular, the estimator will involve a quantity that can be considered as the 
\emph{tricity} of Li et al. (2014), but with a more intricate definition because of the asynchronicity in sampling 
times. This new definition can be seen as an analogy with the generalization of the RV estimator (\ref{volest}) 
by the HY estimator (\ref{HY}).

\bigskip
Another very important issue to address is the estimation of the asymptotic standard deviation. First, because the model 
is more general than in the no-endogeneity work, the theoretical asymptotic variance will be different. Consequently, 
a new variance estimator, which takes into proper account the endogeneity, will be given. 

\bigskip
The authors want to take no position on the joint distribution of the log-return and the next observation time 
that corresponds to an asset price change because they know that their unknown relationship is most 
likely contributing to the bias and the variance of the high-frequency covariance's estimate 
when we (wrongly) assume full independence between the price process and observation times. For this purpose, they 
introduce the \emph{hitting boundary process with time process} (HBT) model.

\bigskip
Finally, techniques developed in the proofs are innovative in the sense that they reduce the normalized error of the Hayashi-Yoshida estimator to a discrete process, which is locally a uniformly ergodic homogeneous Markov chain. Thus, the problem can be solved locally, and because we assume that the volatility of assets is continuous, the error of approximation between the 
local Markov structure and the real structure of the normalized error vanishes asymptotically. This technique is not problem-specific, and it can very much be applied to other estimators dealing with temporal data. 

\bigskip
The paper is organized as follows. We introduce the HBT model in Section 2. Examples covered by this model are given in Section 3. The main theorem of this work, the limit distribution of the normalized error is given in Section 4. Estimators of the asymptotic bias and variance are provided in Section 5. We carry out numerical simulations in Section 6 to corroborate the theory. Proofs are developed in Appendix.

\section{Definition of the HBT model}

We first introduce the model in $1$-dimension. We assume that for any positive integer $i$, $\tau_{i+1}$ is the next arrival time (after $\tau_i$) that corresponds to an actual change of price. In particular, several trades can occur at the same price $Z_{\tau_i}$ between $\tau_i$ and $\tau_{i+1}$, but no trade can occur with a price different than $Z_{\tau_i}$ before $\tau_{i+1}$. We also assume that $X_t$ is the efficient (log) price of the security of interest. In addition, we assume that the observations are noisy and that we observe $Z_{\tau_i} := X_{\tau_i} + \epsilon_{\tau_i}$ where the microstructure noise $\epsilon_{\tau_i}$ can be expressed as a known function of the observed prices $Z_0, \ldots, Z_{\tau_{i}}$. As an example, Robert and Rosenbaum (2012) showed in $(2.3)$ in p. 5 that the model with uncertainty zones can be written with that noise structure if we assume that we know the friction parameter $\eta$. Finally, we define $\alpha > 0$ as the tick size, and we assume that the observed price $Z_{\tau_i}$ lays on the tick grid, i.e. there exists positive integers $m_i$ such that $Z_{\tau_i} := m_i \alpha$.

\bigskip
Empirically, no economical model based on rational behaviors of agents on the stock markets, that shed light on 
the relationship between the efficient return $\Delta X_{\tau_i}$ and time before the next price 
change $\Delta \tau_i = \tau_i - \tau_{i-1}$, has won unanimous support. When arrival times are independent of the asset price, it follows directly from the continuous It\^{o}-assumption that the dependence structure 
is such that the return $\Delta X_{\tau_i}$ is a function of $\Delta \tau_i$. The longer we wait, the bigger the variance of the return 
is expected to be. In this paper, we take the opposite point of view by building a model in which $\tau_i$ is defined 
as a function of the efficient price path. For that purpose, we define the \emph{observation time process} 
 $X_t^{(t)}$ that will drive the 
observation times. We also define the \emph{down process} $d_t(s)$ and the \emph{up process} $u_t(s)$. Note 
that for any $t \geq 0$, we assume that $d_t$ and $u_t$ are functions on 
$\reels^+$. We also assume 
that the down process takes only negative values and that the up 
process takes only positive values. A new 
observation time will be generated whenever one of those two processes is hit by the increment of the observation time process. 
Then, the increment of the observation time process will start again from $0$, and the next observation time will be generated whenever it 
hits the up or the down process. Figure \ref{illustration} illustrates the HBT model. Formally, we define $\tau_0 := 0$ and for any positive integer $i$ as
\begin{eqnarray}
 \label{generateobstimes} \tau_i := \inf \Big\{ t > \tau_{i-1} : \Delta X_{[\tau_{i-1}, t]}^{(t)} \notin \big[ d_{t} 
 \left( t - \tau_{i-1} \right), u_{t} \left(t - \tau_{i-1} \right) 
\big] \Big\},
\end{eqnarray}
where $\Delta Y_{[a,b]} := Y_b - Y_a$. Note that if the observation time process $X_t^{(t)}$ is equal to the price process 
$X_t$ itself, then the price will go up (respectively go down) whenever it hits the up process (down process). Note also that if the time process, the up process and the down process are independent of the efficient price process, then the arrival times are independent of the efficient price process. 
We assume that the two-dimensional process $( X_t, X_t^{(t)} )$ is an It\^{o}-process. Section $3.1$ provides examples 
of the literature identifying the observation time process, the down process and the up process. 

\bigskip
Generalizing to two dimensions is straightforward. We define $X_t^{(t,k)}$ for $k=1,2$ to be 
the observation time process associated with 
the $k$th price process , $u_t^{(k)}$ the up process, $d_t^{(k)}$  the down process, and the arrival times $\tau_i^{(k)}$ generated by (\ref{generateobstimes}). We also define the four dimensional process $Y_t := (X_t^{(1)}, X_t^{(2)}, X_t^{(t,1)}, X_t^{(t,2)})$, 
and assume $Y_t$ follows an It\^{o}-process with volatility
$$\sigma_t := \begin{pmatrix}
               \sigma_t^{1,1} & \sigma_t^{1,2} & \sigma_t^{1,3} & \sigma_t^{1,4}\\
               \sigma_t^{2,1} & \sigma_t^{2,2} & \sigma_t^{2,3} & \sigma_t^{2,4}\\
               \sigma_t^{3,1} & \sigma_t^{3,2} & \sigma_t^{3,3} & \sigma_t^{3,4}\\
               \sigma_t^{4,1} & \sigma_t^{4,2} & \sigma_t^{4,3} & \sigma_t^{4,4}\\
              \end{pmatrix}.
$$
In particular, we have $d Y_t = \mu_t dt + \sigma_t dW_t$, where $W_t$ is a four dimensional standard Brownian motion (for 
$i=1, \ldots, 4$ and $j = 1, \ldots, 4$ such that $i \neq j$, $W_t^{(i)}$ is independent of $W_t^{(j)}$). 
If we set $\zeta_t = \sigma_t \sigma_t^T$, then the integrated covariance (or quadratic covariation) process is given by 
$\langle Y, Y \rangle_t = \int_0^t \zeta_s ds$. Let $\rho_t$ be the associated correlation process of $Y_t$, 
i.e. for $i = 1, \ldots, 4$ and $j = 1, \ldots, 4$ we set $\rho_t^{i,j} = \zeta_t^{i,j} (\zeta_t^{i,i})^{-1}$. 
Finally, it is useful sometimes to see $Y_t$ as a four dimensional vector expressed as in equations (\ref{ito1}) and (\ref{ito2}). 
For $k= 1, \ldots, 4$ we define the volatility of the $k$th process as $\sigma_t^{(k)} := (\zeta_t^{k,k})^{\frac{1}{2}}$, we can thus 
express $Y_t^{(k)}$ as
$$d Y_t^{(k)} = \mu_t^{(k)} dt + \sigma_t^{(k)} dB_t^{(k)}$$
where $B_t^{(k)}$ is a standard Brownian motion, which typically depends on $B_t^{(l)}$ for $l = 1, \ldots, 4$.

\section{Examples} \label{p1examples}
We insist on the fact that estimators of covariance and associated asymptotic variance given in this paper don't require any knowledge of the structure of 
 the observation time process, the up process and the down process. Nonetheless, for financial and economic interpretation purposes, the reader might be interested in getting an idea on how those processes behave in practice. We provide in this section several examples from the literature as well as possible extensions of the model with uncertainty zones of Robert and Rosenbaum (2011) that can be expressed as HBT models.
 
\subsection{Endogenous models contained in the HBT class} \label{ex1}
\begin{hittingbarrier} \label{hittingbarrier}
(hitting constant boundaries) The simplest endogenous semi-parametric model we can think of is a model where the time process $X_t^{(t)}$ is equal to the price process $X_t$, and 
times are generated by hitting a constant barrier. Formally, it means that there exists a two-dimensional parameter 
$(\theta_u, \theta_d)$ such that the up process is equal to $\theta_u$ and the down process is equal to $\theta_d$. We don't assume noise in that model.
\end{hittingbarrier}

\begin{hittingbarriernoise} \label{hittingbarriernoise}
(hitting constant boundaries of the tick size) One issue with Example \ref{hittingbarrier} is that the efficient price $X_{\tau_i}$, which is observed because no microstructure noise is assumed in the model, is not necessarily a modulo of the tick size $\alpha$ if $\theta_u$ and $\theta_d$ are not multiples of $\alpha$. To make Example \ref{ex1} feasible in practice, we assume here that the constant barriers $\theta_u$ and $\theta_d$ are respectively equal to the tick size $\alpha$ and its additive inverse $- \alpha$. We also assume that $Z_{\tau_i} := X_{\tau_i}$.
\end{hittingbarriernoise}

\begin{hittingbarriernoisejump} \label{hittingbarriernoisejump}
(hitting constant boundaries of the jump size) The issue with Example \ref{hittingbarriernoise} is that the absolute jump size of the observed price $Z_{\tau_i}$ is $\alpha$. On the contrary, in practice the absolute jump size can actually be bigger than the tick size $\alpha$. In the notation of 
Robert and Rosenbaum (2011), for any positive integer $i$, we introduce a discrete variables $L_i$ which 
corresponds to the observed price jump's tick number between $\tau_{i}$ and $\tau_{i+1}$, with $L_i \geq 1$. We assume that $L_i$ is bounded. The arrival times are defined recursively as $\tau_0 :=0$ and for any positive 
integer $i$ as
$$\tau_i := \inf \Big\{ t > \tau_{i-1} : X_t = X_{\tau_{i-1}} - L_{i-1} \alpha 
\text{ or } X_t = X_{\tau_{i-1}} + L_{i-1} \alpha \Big\}.$$
We assume that $L_i$ are IID and independent of the other quantities. We finally assume that $Z_{\tau_i} := X_{\tau_i}$. The up and down processes 
are piecewise constant in $t$ and constant in $s$, defined for any $s \geq 0$ as
\begin{eqnarray*}
d_t (s) = & - L_{i-1} \alpha  & \text{ for } 
t \in (\tau_{i-1}, \tau_i]\\
u_t (s)  = & L_{i-1} \alpha  & \text{ for } 
t \in (\tau_{i-1}, \tau_i]
\end{eqnarray*}
\end{hittingbarriernoisejump}

\begin{uncertaintyzones} \label{uncertaintyzones}
(model with uncertainty zones) We go one step further than Example \ref{hittingbarriernoisejump} and introduce now the model with uncertainty zones of Robert and Rosenbaum (2011). In a frictionless market, we can assume that a trade with change of price $Z_{\tau_i}$ will occur whenever the efficient price process crosses one of the mid-tick values $Z_{\tau_{i-1}} + \frac{\alpha}{2}$ or $Z_{\tau_{i-1}} - \frac{\alpha}{2}$. In that case, if the efficient price process hits the former value, we would observe an increment of the observed price $Z_{\tau_{i}} = Z_{\tau_{i-1}} + \alpha$ and if it hits the former value, we would observe a decrement $Z_{\tau_{i}} = Z_{\tau_{i-1}} - \alpha$. There are two reasons why in practice such a frictionless model is too simplistic. The first reason is that the absolute value of the increment (or the decrement) of the observed price can be bigger than the tick size $\alpha$ and was already pointed out in Example \ref{hittingbarriernoisejump}. We will thus keep the notation $L_i$ in this example. The second reason is that the frictions induce that the transaction will not exactly occur when the efficient process is equal to the mid-tick values. For this purpose in the notation of Robert and Rosenbaum (2012), let $0 < \eta < 1$ be a parameter that quantifies 
the aversion to price changes of the market participants. If we let $X_t^{(\alpha)}$ be the value of $X_t$ rounded to the nearest multiple of $\alpha$, the sampling times are defined recursively as $\tau_0 :=0$ and for any positive 
integer $i$ as
$$\tau_i := \inf \Big\{ t > \tau_{i-1} : X_t = X_{\tau_{i-1}}^{(\alpha)} - \alpha \big(L_{i-1} - \frac{1}{2} + \eta \big) 
\text{ or } X_t = X_{\tau_{i-1}}^{(\alpha)} + \alpha \big(L_{i-1} - \frac{1}{2} + \eta \big) \Big\}$$
The observed price is equal to the rounded efficient price $Z_{\tau_i} := X_{\tau_i}^{(\alpha)}$. The time process $X_t^{(t)}$ is again equal to the price process $X_t$ itself in this model. The up and down processes 
are piecewise constant in $t$ and constant in $s$, defined for any $s \geq 0$ as
\begin{eqnarray*}
d_t (s) = & - L_{i-1} \alpha \mathbf{1}_{ \{ X_{\tau_{i-1}} < X_{\tau_{i-2}} \} } - 
\left( 2 \eta + L_{i-1} - 1 \right) \alpha \mathbf{1}_{ \{ X_{\tau_{i-1}} > X_{\tau_{i-2}} \} } & \text{ for } 
t \in (\tau_{i-1}, \tau_i]\\
u_t (s)  = & L_{i-1} \alpha \mathbf{1}_{ \{ X_{\tau_{i-1}} > X_{\tau_{i-2}} \} } +
\left( 2 \eta + L_{i-1} - 1 \right) \alpha \mathbf{1}_{ \{ X_{\tau_{i-1}} < X_{\tau_{i-2}} \} } & \text{ for } 
t \in (\tau_{i-1}, \tau_i]
\end{eqnarray*}
where $\mathbf{1}_A$ is the indicator function of A. Note that in the case where $\eta = \frac{1}{2}$, we are back to Example \ref{hittingbarriernoisejump}.
\end{uncertaintyzones}

\begin{irregulargrid} \label{irregulargrid}
(times generated by hitting an irregular grid model) The fourth model we are looking at is called \emph{times generated by hitting an irregular grid model}. We follow the 
notation of Fukasawa and Rosenbaum (2012) and consider the irregular grid 
$\mathcal{G} = \{ p_k \}_{k \in \relatifs}$, with $p_k < p_{k+1}$. We set $\tau_0 = 0$ and for $i \geq 1$
$$\tau_i = \inf \Big\{ t > \tau_{i-1} : X_t \in \mathcal{G} - \{ X_{\tau_{i-1}} \} \Big\} ,$$
where $\mathcal{G} - \{ X_{\tau_{i-1}} \}$ is the set obtained by removing $\{ X_{\tau_{i-1}} \}$ from $\mathcal{G}$.  
We can rewrite it as an element of the HBT model where the time process is equal to the price process, and for all $s \geq 0$ the up and down processes 
are defined as
\begin{eqnarray*}
d_t (s)  = & p_{k-1} - p_k & \text{ for } t \in (\tau_{i-1}, \tau_i ]\\
u_t (s)  = & p_{k+1} - p_k & \text{ for } t \in (\tau_{i-1}, \tau_i ],
\end{eqnarray*}
where $k$ is the (random) index such that $p_k = X_{\tau_{i-1}}$. 
\end{irregulargrid}

\begin{autoregressive} \label{autoregressive}
(structural autoregressive conditional duration model) There have been several drafts for this 
model. We follow here a former version 
(Renault et al. (2009)), because we can directly express it as an element of the HBT 
model\footnote{Generating the sampling times (\ref{generateobstimes}) of the HBT model as a first hitting-time of a 
unique barrier 
instead of the first hitting time of one of two barriers as in the latter version of Renault et al. (2014) 
wouldn't change much the proofs of 
this paper, but we chose the two-boundaries setting 
because it seems more natural if interpretation of time processes, up processes and down processes is needed.}. 
In the structural autoregressive conditional duration 
model, the time $\tau_{i}$ when the next event occurs is given by $\tau_0=0$ and for $i>0$
\begin{equation}
\label{autoregressiveACD}
\tau_i = \inf \Big\{ t > \tau_{i-1} : A_t - A_{\tau_{i-1}} = \tilde{d}_{\tau_{i-1}} \text{ or } 
A_t - A_{\tau_{i-1}} = \tilde{c}_{\tau_{i-1}} \Big\}
\end{equation}
where $A_t$ is a standard Brownian motion (not necessarily independent of $X_t$). Expressed as an element of the HBT 
model, we have that the time process $X_t^{(t)}$ is equal to the Brownian motion $A_t$ and for all $s \geq 0$
\begin{eqnarray*}
d_t (s)  = & \tilde{d}_{\tau_{i-1}} \text{ for } & t \in (\tau_{i-1}, \tau_i ]\\
u_t (s)  = & \tilde{c}_{\tau_{i-1}} \text{ for } & t \in (\tau_{i-1}, \tau_i ].
\end{eqnarray*}
\end{autoregressive}

\subsection{Possible extensions of the model with uncertainty zones}  
The model with uncertainty zones of Robert and Rosenbaum (2011) introduced in Example \ref{uncertaintyzones}, which is semi-parametric, assumes that the observed price is the efficient price rounded to the nearest tick value $Z_{\tau_i} = X_{\tau_i}^{(\alpha)}$ and thus the 
noise is equal to $\epsilon_i := \alpha (\frac{1}{2} - \eta)$ if the last trade increased the price and $\epsilon_i := - \alpha (\frac{1}{2} - \eta)$ if the last trade decreased the price. In particular, the noise is auto-correlated and correlated to the efficient price. Because of this specific noise distribution, it is directly possible to estimate the underlying friction parameter $\eta$ without any data pre-processing such as preaveraging (see Robert and Rosenbaum (2012)). We believe the model with uncertainty zones is a very interesting starting point, because all the 
endogenous and noise structure of the model 
is reduced to the estimation of the $1$-dimensional friction parameter $\eta$. Nevertheless, as this semi-parametric model wants to be the 
simplest, it suffers from several issues. We will investigate two of them in the following.

\bigskip
First, the model doesn't allow for asymmetric information between the buyers and the sellers. Define $\eta^+$ and 
$\eta^-$, which are respectively the aversion to a positive price change and a negative price change. As a positive 
price change means that a buyer decided to put an order at the best ask price and a negative price change corresponds to 
a seller that puts an order at the best bid price (if we assume that cancel and repost orders are not the reason why the price changed), the difference 
$\eta^+ - \eta^-$ can be seen as a measure of 
information asymmetry. We define $\tau_0 := 0$ and recursively for $i$ any positive integer 
$$\tau_i := \inf \Big\{ t > \tau_{i-1} : X_t = X_{\tau_{i-1}}^{(\alpha)} - \alpha \big(L_i - \frac{1}{2} + \eta^- \big) 
\text{ or } X_t = X_{\tau_{i-1}}^{(\alpha)} + \alpha \big(L_i - \frac{1}{2} + \eta^+ \big)\Big\} .$$
Note that the HBT class contains this model and that it can be directly fitted if we slightly modify $\hat{\eta}$ in Robert and Rosenbaum (2012) to estimate 
$\eta^+$ and $\eta^-$. One possible application would be to build a test of asymmetric information $\eta^+ := \eta^-$. This is beyond the scope of this paper.

\bigskip
One other issue is that the authors don't do any model checking in their work. According to their empirical work (see pp. 359-361 
of Robert and Rosenbaum (2011)), the estimated values for $\eta$ are stable accross days for the ten French assets 
tested. Stability of $\eta$ favors their model but by doing so, the model doesn't allow any other structure than the 
full-endogeneity for the sampling times. Even if the true structure of sampling times is (mostly) independent of the
asset price, we 
will still estimate an $\eta$ that will be stable across days. If we allow the time process to be different from 
the price process itself, we can estimate the correlation $\rho^{1,3}$ between them and see how endogenous the sampling times 
are (the bigger $\big| \rho^{1,3} \big|$ is, the more endogenous the sampling times are). We would need to add more general microstructure noise in the model, and thus this is left for further work.

\section{Main result}
\subsection{Assumptions and Theorem}
Without loss of generality, we fix the horizon time $T:=1$, and we consider $[0,1]$ to represent the course of an 
economic event, such as a trading day. We first introduce the definition of stable convergence, which is a little bit 
stronger than usual convergence in distribution 
and needed for statistical purposes of inference, such as the prediction value of the high-frequency covariance and the 
construction of a confidence interval at a given confidence level.
\begin{stableconvergence}
We suppose that the random processes $Y_t$, $\mu_t$ and $\sigma_t$ are adapted to a 
filtration $\left( \mathcal{F}_t \right)$. Let $Z_n$ be a sequence of 
$\mathcal{F}_1$-measurable random variables. We say that $Z_n$ converges stably in distribution to $Z$ as $n \rightarrow \infty$ if $Z$ is measurable with 
respect to an extension of $\mathcal{F}_1$ so that for all $A \in \mathcal{F}_1$ and for all bounded continuous\footnote{Note that the continuity of $f$ refers 
to continuity with respect to the Skorokhod topology of $\mathbb{D} [0,1]$. Nevertheless, we can also use continuity 
given by the sup-norm, because all our limits are in $\mathbb{C} [0,1]$. One can look at Chapter $VI$ of Jacod and Shiryaev (2003) as a reference. For 
further definition of stable convergence, one can look at R\'{e}nyi (1963), Aldous and Eagleson (1978), Chapter 3 (p. 56) of Hall and Heyde (1980), 
Rootz\'{e}n (1980), and Section 2 (pp. 169-170) of Jacod and Protter (1998).} functions $f$, $\esp \left[ 
\mathbf{1}_A f \left( Z_n \right) \right] \rightarrow \esp \left[ \mathbf{1}_A f \left( Z \right) \right]$ as $n \rightarrow \infty$.
\end{stableconvergence}

\bigskip
In the setting of Section 2, the target of inference, the integrated covariation, can be written  for all $t \in [0, 1]$ as
\begin{eqnarray*}
\langle X^{(1)}, X^{(2)} \rangle_t := \int_0^{t} \sigma_s^{(1)} \sigma_s^{(2)} \rho_s^{1,2} ds . 
\end{eqnarray*}
We are providing now the asymptotics. We want to make the number of observations go to infinity asymptotically. The idea is to scale and thus keep the 
structure that drives the next return and the next observation time, while making the tick size vanish (and thus the 
number of observations explode on $[0,1]$). Formally, we let the tick size $\alpha > 0$ and we define the observation times 
$\Tau_{\alpha} := \big\{ \tau_{i,\alpha}^{(k)} \big\}_{i \geq 0}^{k=1,2}$ such that for $k=1,2$ we have	
$\tau_{0,\alpha}^{(k)} := 0$ and for $i$ any positive integer
$$\tau_{i,\alpha}^{(k)} := \inf \Big\{ t > \tau_{i-1,\alpha}^{(k)} : \Delta X_t^{(t,k)} \notin \big[ \alpha d_{t}^{(k)}(t - 
\tau_{i-1, \alpha}^{(k)}), \alpha u_{t}^{(k)}(t - \tau_{i-1, \alpha}^{(k)}) \big] \Big\} .$$
We define the HY estimator when the tick size is equal to $\alpha$ as
\begin{eqnarray}
\label{HY0} \langle \widehat{X^{(1)}, X^{(2)}} \rangle_{t,\alpha}^{HY} := \sum_{0 < \tau_{i,\alpha}^{(1)}\text{ , } \tau_{j,\alpha}^{(2)} < t} 
\Delta X^{(1)}_{\tau_{i,\alpha}^{(1)}} \Delta X^{(2)}_{\tau_{j,\alpha}^{(2)}} 
\mathbf{1}_{ \big\{ [ \tau_{i-1, \alpha}^{(1)}, \tau_{i, \alpha}^{(1)} ) \cap [ \tau_{j-1, \alpha}^{(2)}, 
\tau_{j, \alpha}^{(2)} ) \neq \emptyset \big\} } .
\end{eqnarray}
We now give the assumptions needed to prove the central limit theorem of (\ref{HY0}). We need to introduce some definitions for this purpose. 
In view of the different models introduced in Section $3$, there are three different possible assumptions regarding the correlation between the time processes 
$X_t^{(t)}$ and the price processes $X_t$. The first possibility is that they can be equal for all $0 \leq t \leq T$. In this case we define $\lambda_t^{\min}$ as the smallest eigen-value of $(\sigma_t^{(i,j)})_{i=1,2}^{j=1,2}$. The second scenario is that for one $k \in {1,2}$ we
have $X_t^{(k)} := X_t^{(t,k)}$, but the other time process is different from its associated price process. In that case, we define $\lambda_t^{\min}$ the smallest eigen-value of 
$(\sigma_t^{(i,j)})_{i \in \{1,2,3,4\} - \{k+2\}}^{j \in \{1,2,3,4\} - \{k+2\}}$. The third possible setting is that the time process is different from its 
associated asset price for both assets, and we let $\lambda_t^{\min}$ the smallest eigen-value of 
$\sigma_t$ in that case. Assumption $(A1)$ provides conditions 
on the price processes $X_t^{(1)}$ and $X_t^{(2)}$, the time processes $X_t^{(t,1)}$ and $X_t^{(t,2)}$ as well as 
their covariance matrix $\sigma_t$. There are two types of assumptions in $(A1)$. First, we want to get rid of the 
drift in the proofs, and this will be done using condition $(A1)$ together with the Girsanov theorem and local arguments 
(see e.g. pp.158-161 in Mykland and Zhang (2012)). This is a very standard assumption in the literature of 
financial econometrics. Furthermore, we assume that the covariance matrix $\sigma_t$ is continuous. 

\begin{assumptionA1} 
 The drift $\mu_t$, the volatility matrix $\sigma_t$ and the (four dimensional) Brownian motion $W_t$ are adapted 
  to a filtration $( \mathcal{F}_t )$. Also, $\mu_t$ 
  is integrable and locally bounded. Furthermore, $\sigma_t$ is continuous. Finally, we assume that $\underset{t \in (0,1]}{\inf} \lambda_t^{\min} > 0$ a.s.
\end{assumptionA1}
\begin{continuousass} (robustness to jumps in volatility) The proof techniques, holding the volatility constant on small blocks, require the "continuity of volatility". This is the same strategy as in Mykland and Zhang (2009) and Mykland (2012) where the volatility process follows a continuous It\^{o} process. Nonetheless, following the same line of reasoning as for the proof of Remark \ref{rkjumps}, we can add a finite number of jumps in the volatility matrix. The proof of Theorem \ref{main1} will break in the case of infinite number of jumps in $\sigma_t$.
\end{continuousass}

The following condition roughly assumes that both time processes can't be equal to each other, even on a very small time interval. 
Specifically, we will assume that there is a constant strictly smaller than $1$ such that the module of the 
\emph{instantaneous high-frequency correlation} $\rho_t^{3,4}$ can't be bigger than this constant. In practice, 
assumption $(A2)$ is harmless.
\begin{assumptionA2}
 For all $t \in [0,1]$ we have
  \begin{eqnarray}
   \label{A2rho34} \rho_t^{3,4} \in [ \rho_{-}^{3,4} , \rho_{+}^{3,4} ] ,
   \end{eqnarray}
  where $\max ( \mid \rho_{-}^{3,4} \mid, \mid \rho_{+}^{3,4} \mid ) < 1$.
\end{assumptionA2}
The next assumption deals with the down process $d_t$ and the up process $u_t$. It is clear that $d_t$ and $u_t$ have to be known 
with information at time $t$, which is why we assume that they are adapted to $(\mathcal{F}_t)$. The rest of assumption 
$(A3)$ is very technical and we only try to be as general as we can with respect to the proof techniques we will use. 
The reader should understand Assumption $(A3)$ as ``assume the worst dependence structure possible between the return 
$\Delta X_{\tau_i}$ and the time increment $\Delta \tau_i$, knowing that they follow the HBT model''. We insist once again on the fact that we only make the dependence structure as bad as we can in our model so that we can investigate how biased the HY estimator can be in practice, and how much the estimates of 
the variance assuming no endogeneity are wrong.
\begin{assumptionA3}
   For both assets $k=1,2$, define the couple of the down process and the up process 
   $g_t^{(k)} := ( d_t^{(k)}, u_t^{(k)} )$ and let $g_t := (g_t^{(1)}, g_t^{(2)})$. We assume that 
  $$
   \begin{array}{r c  c l}
      g^{(k)} : & \reels^{+} & \rightarrow & \left( \reels^{+} \rightarrow \reels^{-} \times \reels^{+} \right)\\
  & t & \mapsto & g_t^{(k)}
   \end{array}
$$
is adapted to $\left( \mathcal{F}_t \right)$. 
Moreover, there exists two non-random constants $0 < g^- < g^+$ such that a.s.
for any $t \in [0,1]$ and for any $s \geq 0$
\begin{eqnarray}
\label{A3g} g^- \leq \min (- d_t^{(k)}(s), u_t^{(k)}(s) ) \leq 
\max ( - d_t^{(k)}(s), u_t^{(k)}(s) ) \leq g^+
\end{eqnarray}
Furthermore, there exists non-random constants $K > 0$ and $d > 1/2$ such that a.s. 
\begin{eqnarray}
 \label{A3compact} \forall s \geq K \text{ , } g_t \left( s \right) = g_t \left( K \right),
\end{eqnarray} 
\begin{eqnarray}
 \label{A3der} \forall t\geq 0, \text{ } g_t \text{ is differentiable and } \forall s \geq 0 \text{, } 
 \max \big( | ( d_t^{(k)} )' ( s ) |, | ( u_t^{(k)} )'(s) | \big) \leq K,
 \end{eqnarray}
 \begin{eqnarray}
\label{A3sup} \forall \text{ } \left( u, v\right) \in [0,1]^2 \text{ s.t. } 0 < u < v, \text{ }  & 
\| g_{v} - g_{u} \|_{\infty} \leq K | v - u |^{d} ,
 \end{eqnarray}
where $\| (f_1, f_2) \|_{\infty} = \underset{w \geq 0}{\sup} \max \left( | f_1 \left( w \right) |, 
| f_2 \left( w \right) | \right)$. 

\end{assumptionA3}

\begin{A3eq}
 Consider the space $\mathcal{C}$ of constants defined in Assumption $(A3)$
 \begin{eqnarray*}
  \mathcal{C} := \Big\{ (g^- , g^+ , K , d) \text{ s.t. } 0 < g^- < g^+ \text{ , } K > 0 \text{ , } d > \frac{1}{2} 
  \Big\} .
 \end{eqnarray*}
For any $c \in \mathcal{C}$, we define $\mathcal{G} (c)$ to be the functional subspace of 
$\reels^{+} \rightarrow ( \reels^{+}$  $\rightarrow \reels^{-} 
\times \reels^{+} )^2$ such that $\forall g \in \mathcal{G}$, $g$ satisfies (\ref{A3g}), (\ref{A3compact}), (\ref{A3der}) 
and (\ref{A3sup}). When there is no room for confusion, we use $\mathcal{G}$. Assumption (A3) is equivalent to 
$$\exists c \in \mathcal{C} \text{ s.t. } \forall t \in [0,1] \text{ , } g_t \in \mathcal{G} (c).$$
\end{A3eq}
\begin{ass3rk}
 The advised reader will have noticed that Example \ref{hittingbarriernoisejump}, Example \ref{uncertaintyzones}, Example \ref{irregulargrid} and Example \ref{autoregressive}, where time processes are piecewise-constant and may depend on $n$, don't follow Assumption $(A3)$. The adaptation of Theorem \ref{main1} proofs in those examples is discussed in Appendix \ref{adaptationproofs}. We have made the choice not to state more general conditions to keep tractability of Assumption $(A3)$.
\end{ass3rk}
The last assumption is only technical, and also appears in the literature (Mykland and Zhang (2012), Li et al. (2014)).
\begin{assumptionA4}
 The filtration $(\mathcal{F}_t)$ is generated by finitely many Brownian motions.
\end{assumptionA4}
We can now state the main theorem.
\begin{main1} \label{main1}
 Assume $(A1)-(A4)$. Then, there exist processes $AB_t$ and $AV_t$ adapted to 
 $( \mathcal{F}_t )$ such that 
 stably in law as the tick size $\alpha \rightarrow 0$, 
 \begin{eqnarray}
 \label{theorem}
 \alpha^{-1} \left(\langle \widehat{X^{(1)}, X^{(2)}} \rangle_{t,\alpha}^{HY} - 
 \langle X^{(1)}, X^{(2)} \rangle_{t} \right) \rightarrow AB_t + \int_0^t \left( 
 AV_s \right)^{1/2} d Z_s ,
 \end{eqnarray}
 where $Z_t$ is a Brownian motion independent of the underlying $\sigma$-field.  
 The asymptotic bias $AB_t$ and the asymptotic variance $AV_t$ are defined in Section $4.3$ and estimated in Section $5$.
\end{main1}

\begin{pathbiased} \label{pathbiased}
 (path-bias) Note that the asymptotic bias term $AB_t$ on the right-hand side of (\ref{theorem}) doesn't mean that 
 the Hayashi-Yoshida estimator is \emph{biased}, but rather \emph{path-biased}. The latter is a weaker statement which means 
 that once we have seen a path, there is a \emph{bias} for the HY estimator on this specific path of value $AB_t$. 
 In practice, we only get to see one path and thus \emph{bias} and \emph{path-bias} can be confused easily. When doing simulations, we can observe many paths and the reader should keep in mind that the \emph{path-bias} will be different for each path. In addition, note that if we assume that $\sigma_t$ is bounded and bounded away from $0$ on $[0, T]$, there is no \emph{bias} in Theorem \ref{main1} because $\esp [ AB_t ] = 0$.
\end{pathbiased}

\begin{cvrate}
 (convergence rate) At first glance, the convergence rate $\alpha^{-1}$ looks different from the optimal rate of convergence $n^{1/2}$ we obtain in the no-endogeneity case. This is merely a change of perspective because we are looking from the tick size point-of-view. Actually, if for $k=1,2$ we define $N_{t,\alpha}^{(k)}$ as the number of observations before $t$ of the $k$th asset and the sum of observations of both processes 
$N_{t,\alpha}^{(S)} := N_{t,\alpha}^{(1)} + N_{t,\alpha}^{(2)}$, we have that $N_{t,\alpha}^{(S)}$ is exactly of order  $O_p (\alpha^{-2})$. 
Thus, if we define the expected number of observations $n := \esp \big[ N_{t,\alpha}^{(S)} \big]$, we obtain the optimal rate of convergence $n^{\frac{1}{2}}$ in (\ref{theorem}).
\end{cvrate}

\begin{jumps}
\label{rkjumps}
(robustness to jumps in price processes) We assume that we add a jump component to the price process
\begin{eqnarray}
 \label{jumpmodel}
 d X_t^{(k)} = \mu_t^{(k)} dt + \sigma_t^{(k)} dB_t^{(k)} + dJ_t^{(k)}
\end{eqnarray}
for $k=1,2$, where $J_t$ denotes a $2$-dimensional finite activity jump process and $dJ_t^{(k)}$ is either zero (no jump) or a real number indicating the size of the jump at time $t$. We follow exactly the setting of p. 2 in Andersen et al. (2012). We assume that $J_t$ is a general Poisson process independent of the other quantities. Under the same assumptions the conclusion of Theorem \ref{main1} remains valid. The proof can be found in Appendix \ref{proofjumps}. The infinitely many jumps case is complex and beyond the scope of this paper. This was already the case in the $1$-dimensional case (see Remark $4$ in p. $586$ of Li et al. (2014)).
\end{jumps}

\begin{logprice}
(grid on the original non-log scale) Theorem \ref{main1} covers the particular case where $X_t$ corresponds to the log-price and observations are obtained when the price on the original scale hits a boundary. This can be done by a reparametrization of $g_t^{(k)}$ by $\tilde{\tilde{g}}_t^{(k)} (s) := (- \text{exp} (- d_t^{(k)}), \text{exp} (u_t^{(k)}))$.
\end{logprice}
\begin{mainrk}
 (arbitrary number of assets) The authors chose for simplicity to work only with two assets, but they conjecture that this result would stay true for an arbitrary 
 number of assets, and that our proofs would adapt to show it, at the cost of more involved notations and definitions. 
\end{mainrk}

\subsection{Definition of the bias-corrected HY estimator}
Assume that we have a consistent estimator\footnote{$\widehat{AB}_{t,\alpha}$ is consistent means that 
$\alpha^{-1} \widehat{AB}_{t,\alpha} 
= \alpha^{-1} AB_{t,\alpha} + o_p(1)$}
$\widehat{AB}_{t,\alpha}$ of the bias 
$AB_{t,\alpha} := \alpha AB_t$. Such estimator 
will be provided in Section $5$. We define the new estimator $\langle \widehat{ X^{(1)}, X^{(2)}} \rangle_{t,\alpha}^{BC}$  
of high-frequency covariance as the estimate obtained when removing the bias estimate $\widehat{AB}_{t,\alpha}$
from the Hayashi-Yoshida estimator
\begin{eqnarray}
 \label{biascorrected} \langle \widehat{ X^{(1)}, X^{(2)}} \rangle_{t,\alpha}^{BC} := \langle  \widehat{ X^{(1)}, X^{(2)}} \rangle_{t,\alpha}^{HY} - 
 \widehat{AB}_{t,\alpha} .
\end{eqnarray}
With the bias-corrected estimator $\langle \widehat{ X^{(1)}, X^{(2)}} \rangle_{t,\alpha}^{BC}$, we get rid of the 
asymptotic bias and keep the same asymptotic variance as we can see in the 
following corollary.
\begin{maincor}
 Assume $(A1)-(A4)$. Then, stably in law as $\alpha \rightarrow 0$, 
 \begin{eqnarray}
 \label{theoremcor}
 \alpha^{-1} \left(\langle \widehat{X^{(1)}, X^{(2)}} \rangle_{t,\alpha}^{BC} - 
 \langle X^{(1)}, X^{(2)} \rangle_{t} \right) \rightarrow \int_0^t \left( 
 AV_s \right)^{1/2} d Z_s .
 \end{eqnarray}
\end{maincor}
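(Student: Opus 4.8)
The plan is to obtain this corollary directly from Theorem \ref{main1} together with the assumed consistency of the bias estimator $\widehat{AB}_{t,\alpha}$, by a Slutsky-type argument carried out within the framework of stable convergence. No new probabilistic estimate beyond Theorem \ref{main1} is needed.

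First I would rewrite the normalized error of the bias-corrected estimator. By the definition (\ref{biascorrected}),
$$\alpha^{-1} \left( \langle \widehat{X^{(1)}, X^{(2)}} \rangle_{t,\alpha}^{BC} - \langle X^{(1)}, X^{(2)} \rangle_{t} \right) = \alpha^{-1} \left( \langle \widehat{X^{(1)}, X^{(2)}} \rangle_{t,\alpha}^{HY} - \langle X^{(1)}, X^{(2)} \rangle_{t} \right) - \alpha^{-1} \widehat{AB}_{t,\alpha}.$$
I denote the first term on the right-hand side by $U_\alpha$ and the subtracted term by $V_\alpha := \alpha^{-1} \widehat{AB}_{t,\alpha}$. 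The consistency of $\widehat{AB}_{t,\alpha}$ states that $\alpha^{-1}\widehat{AB}_{t,\alpha} = \alpha^{-1} AB_{t,\alpha} + o_p(1)$, and since $AB_{t,\alpha} = \alpha AB_t$ this gives $V_\alpha = AB_t + o_p(1)$. Hence $V_\alpha \overset{\proba}{\rightarrow} AB_t$, where $AB_t$ is measurable with respect to $\mathcal{F}_1$ (indeed adapted to $(\mathcal{F}_t)$).

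Next I would combine this with the stable convergence supplied by Theorem \ref{main1}, namely $U_\alpha \rightarrow AB_t + \int_0^t (AV_s)^{1/2} dZ_s$ stably in law. The key structural fact is a standard property of stable convergence: if $U_\alpha$ converges stably and $V_\alpha$ converges in probability to a limit measurable with respect to $\mathcal{F}_1$, then the pair $(U_\alpha, V_\alpha)$ converges stably to $(AB_t + \int_0^t (AV_s)^{1/2} dZ_s, AB_t)$. Applying the continuous mapping theorem for stable convergence to the (continuous) subtraction map then yields
$$\alpha^{-1} \left( \langle \widehat{X^{(1)}, X^{(2)}} \rangle_{t,\alpha}^{BC} - \langle X^{(1)}, X^{(2)} \rangle_{t} \right) = U_\alpha - V_\alpha \rightarrow \left( AB_t + \int_0^t (AV_s)^{1/2} dZ_s \right) - AB_t = \int_0^t (AV_s)^{1/2} dZ_s$$
stably in law, which is exactly (\ref{theoremcor}).

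The step to be careful about — rather than a genuine obstacle — is precisely this interplay between the $o_p(1)$ remainder, the subtraction of the $\mathcal{F}_1$-measurable random variable $AB_t$, and the fact that the mode of convergence is stable rather than merely weak. The point is that $AB_t$ is not a constant but a path-dependent random variable (as stressed in Remark \ref{pathbiased}), so one cannot simply invoke ordinary Slutsky. What makes the argument go through is that $AB_t$ is $\mathcal{F}_1$-measurable, which is exactly the hypothesis under which stable convergence is preserved both by subtracting a quantity converging in probability to it and by absorbing the $o_p(1)$ term; these are standard consequences of the definition of stable convergence recalled in Section 4.1.
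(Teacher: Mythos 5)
Your proposal is correct and is essentially the argument the paper intends: the corollary is stated without an explicit proof because it follows immediately from Theorem \ref{main1}, the definition (\ref{biascorrected}), and the consistency footnote $\alpha^{-1}\widehat{AB}_{t,\alpha} = \alpha^{-1}AB_{t,\alpha} + o_p(1)$, via exactly the Slutsky-type argument for stable convergence that you spell out. Your remark that $\mathcal{F}_1$-measurability of $AB_t$ is what makes the subtraction legitimate (and is the very reason stable convergence is used) is the right observation and matches the paper's motivation for introducing stable convergence in Section 4.1.
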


\subsection{Computation of the theoretical asymptotic bias and asymptotic variance}
We warn the reader interested in implementing the bias-corrected estimator that this section is highly technical and we advise her to go directly to Section \ref{estimation} and refer to this section only for the definitions. On the contrary, if the reader wants to understand the main ideas of the proofs, she should take this section as a reference. We also want to emphasize on the fact that the theoretical values of asymptotic bias and asymptotic variance found at the end of this section are rather abstract and don't shed easily light on how the change of parameters $\sigma_t$ and $g_t$ in the model would influence the asymptotic bias and asymptotic variance. The main purpose of this paper is that we don't need to know the theoretical values in order to compute the estimators in Section \ref{estimation}.

\bigskip
We need to introduce some definitions in order to compute the theoretical asymptotic bias $AB_t$ and the 
asymptotic variance term  $AV_t$.
We first need to rewrite the HY estimator (\ref{HY0}) in a different way. For any positive integer $i$, consider the 
$i$th sampling time of the first asset $\tau_{i-1,\alpha}^{(1)}$. We define two random times, $\tau_{i-1,\alpha}^{-}$ 
and $\tau_{i-1,\alpha}^{+}$, which are functions of $\tau_{i-1,\alpha}^{(1)}$ and all the observation times of 
the second asset $\{ \tau_{j,\alpha}^{(2)} \}_{j \geq 0}$, and which correspond respectively to the closest sampling time of the second asset that is strictly smaller than 
$\tau_{i-1,\alpha}^{(1)}$\footnote{Connoisseurs will have noticed that $\tau_{i-1,\alpha}^{-}$ is not a 
$\mathcal{F}_t$-stopping time, which will not be a problem in the proofs}, and the closest sampling time of the second asset that is (not necessarily strictly) 
bigger than $\tau_{i-1,\alpha}^{(1)}$ as
\begin{eqnarray}
\label{tau-0s} \tau_{0,\alpha}^{-} & = & 0, \\
\label{tau-0} \tau_{i-1,\alpha}^{-} & = & \max \{ \tau_{j,\alpha}^{(2)} : \tau_{j,\alpha}^{(2)} < \tau_{i-1,\alpha}^{(1)} \} \text{ for } i \geq 2, \\
\label{tau+0} \tau_{i-1,\alpha}^{+} & = & \min \{ \tau_{j,\alpha}^{(2)} : \tau_{j,\alpha}^{(2)} \geq \tau_{i-1,\alpha}^{(1)} \} \text{ for } i \geq 1.
\end{eqnarray}
We consider $\Delta X_{\tau_{i,\alpha}^{-,+}}^{(2)}$ the increment of the second asset between 
$\tau_{i-1,\alpha}^{-}$ and $\tau_{i,\alpha}^{+}$
\begin{eqnarray}
\label{inc-+0} \Delta X_{\tau_{i,\alpha}^{-,+}}^{(2)} := \Delta X_{[\tau_{i-1,\alpha}^{-},\tau_{i,\alpha}^{+}]}^{(2)}. 
\end{eqnarray}
Rearranging the terms in (\ref{HY0}) gives us (except for a few terms at the 
edge)
\begin{eqnarray}
\label{HY01} \langle \widehat{X^{(1)}, X^{(2)}} \rangle_{t,\alpha} = \sum_{\tau_{i,\alpha}^{+} < t} \Delta X_{\tau_{i,\alpha}^{(1)}}^{(1)} 
\Delta X_{\tau_{i,\alpha}^{-,+}}^{(2)} . 
\end{eqnarray}

The representation in (\ref{HY01}) is very useful in the sense that it gives a natural order between the terms 
in the sum. Nevertheless, any term of this sum is a priori correlated with the other terms. 
We will rearrange once again the terms in (\ref{HY01}), so that each term is only correlated with the previous and 
the next term of the sum. In this case, we say that they are \emph{1-correlated}. For this purpose, we need to introduce 
some notation. 
We remind the reader that $\Tau_{\alpha}$ is the two-dimensional vector of sampling times, where for each $k=1,2$ the $k$th 
component $\Tau_{\alpha}^{(k)}$ is equal to the sequence of sampling times associated with the $k$th asset. We will construct 
a subsequence $\Tau_{\alpha}^{1C}$ of 
$\Tau_{\alpha}^{(1)}$ that also depends on the observation times of the second asset $\Tau_{\alpha}^{(2)}$, and will be 
such that we can write the Hayashi-Yoshida estimator as a 1-correlated sum similar to (\ref{HY01}), except the 
new sampling times $\tau_{i,\alpha}^{1C}$ will replace the original observation times $\tau_{i,\alpha}^{(1)}$. The new sampling 
times $\tau_{i,\alpha}^{1C}$ are obtained using the following algorithm. We define 
$\tau_{0,\alpha}^{1C} := \tau_{0,\alpha}^{(1)}$, 
and recursively for $i$ any nonnegative integer
\begin{eqnarray}
\label{algo1C} \tau_{i+1,\alpha}^{1C} := \min \big\{ \tau_{u,\alpha}^{(1)} : \text{ there exists } j \in \naturels \text{ such that } 
\tau_{i,\alpha}^{1C} \leq \tau_{j,\alpha}^{(2)} < \tau_{u,\alpha}^{(1)} \big\} .
\end{eqnarray}
In words, if we sit at the observation time $\tau_{i,\alpha}^{1C}$ of the first asset, 
we wait first to hit an observation time of the second asset, and we then choose 
the next strictly bigger observation time of the first asset. In analogy with (\ref{tau-0s}), (\ref{tau-0}), (\ref{tau+0}) and (\ref{inc-+0}), we define the following times
\begin{eqnarray}
\label{tau1C-0s} \tau_{0,\alpha}^{1C,-} & := & 0, \\
\label{tau1C-0} \tau_{i-1,\alpha}^{1C,-} & := & \max \{ \tau_{j,\alpha}^{(2)} : \tau_{j,\alpha}^{(2)} < \tau_{i-1,\alpha}^{1C} \} \text{ for } i \geq 2\\
\label{tau1C+0} \tau_{i-1,\alpha}^{1C,+} & := & \min \{ \tau_{j,\alpha}^{(2)} : \tau_{j,\alpha}^{(2)} \geq \tau_{i-1,\alpha}^{1C} \} \text{ for } i \geq 1, \\
\label{inc1C-+0} \Delta X_{\tau_{i,\alpha}^{1C,-,+}}^{(2)} & := & \Delta X_{[\tau_{i-1,\alpha}^{1C,-},\tau_{i,\alpha}^{1C,+}]}^{(2)} \text{ for } i \geq 1.
\end{eqnarray}
First, observe that, except for maybe a few terms at the edge, we can rewrite (\ref{HY01}) as
\begin{eqnarray}
\label{HY02} \widehat{\langle X^{(1)}, X^{(2)} \rangle_{t,\alpha}} = \sum_{\tau_{i,\alpha}^{1C,+} < t} \Delta X_{\tau_{i,\alpha}^{1C}}^{(1)} 
\Delta X_{\tau_{i,\alpha}^{1C,-,+}}^{(2)} .
\end{eqnarray}
Also, we define the following compensated increments of the HY estimator
\begin{eqnarray}
\label{compensated} N_{i,\alpha} = \Delta X_{\tau_{i,\alpha}^{1C}}^{(1)} \Delta X_{\tau_{i,\alpha}^{1C,-,+}}^{(2)} - 
\int_{\tau_{i-1,\alpha}^{1C}}^{\tau_{i,\alpha}^{1C}} \zeta_s^{1,2} ds .
\end{eqnarray}
Note that they are compensated in the sense that they are centered (if we decompose $\Delta X_{\tau_{i,\alpha}^{1C,-,+}}^{(2)}$ 
into a left ($-$), a central and a right ($+$) part and condition the expectation, this is straightforward to show). Similarly, we can show that 
they are 1-correlated. 

\bigskip
The idea of the proof is the following. If we consider the volatility matrix $\sigma_t$ and the grid function $g_t$ 
to be constant over time, we can express the conditional returns of the normalized error of HY as a homogeneous Markov chain 
(of order 1), show that the Markov chain is uniformly ergodic and thus use results in the limit theory of Markov chains 
(see, e.g., Meyn and Tweedie (2009)) to show that it has a stationary distribution. Then, we prove that we can approximate locally the 
returns of the normalized error when the volatility matrix and grid function are not constant by the returns when 
holding them constant on a small block. Finally, using limit theory techniques developed in Mykland and Zhang (2012) together 
with standard probability results of conditional distribution (see, e.g., Breiman (1992)), we can bound uniformly in 
time the error of the returns when holding the volatility matrix and grid function constant. 

\bigskip
Based on the definitions introduced in Appendix \ref{appdef}, we can define the \emph{instantaneous variance} of the normalized HY estimate's error (\ref{psiAV1}), which 
depends on the volatility matrix $\tilde{\sigma}$ and the grid $\tilde{g}$. 
Similarly, we also define the \emph{instantaneous covariance} between the normalized HY's error and the 
first asset price (\ref{psiAC1}), and the \emph{instantaneous covariance} between the error and the 
second asset price (\ref{psiAC2}). Finally, we define the \emph{instantaneous 1-correlated time}, which is the 
approximation of $\esp_{\tau_{i,n}^{1C}} \left[ \Delta \tau_{i+2}^{1C} \right]$, where if $\tau$ is a 
$\left( \mathcal{F}_t \right)$-stopping time, 
$\esp_{\tau} \left[ Y \right]$ is defined as the conditional distribution of $Y$ given $\mathcal{F}_{\tau}$.
\begin{eqnarray}
\label{psiAV1} \psi^{AV} (\tilde{\sigma}, \tilde{g}, x, u) & := & \esp 
\big[ \tilde{N}_2^2 + 2 \tilde{N}_2 \tilde{N}_3 \big],\\
\label{psiAC1} \psi^{AC1} (\tilde{\sigma}, \tilde{g}, x, u) & := & \esp \big[ 
\tilde{N}_2 \Delta \tilde{X}_{\tilde{\tau}_2^{1C}}^{(1)} \big] ,\\
\label{psiAC2} \psi^{AC2} (\tilde{\sigma}, \tilde{g}, x, u) & := & \esp \big[ 
\tilde{N}_2 \Delta \tilde{X}_{\tilde{\tau}_2^{1C,-,+}}^{(2)} \big] ,\\
\label{psitau} \psi^{\tau} (\tilde{\sigma}, \tilde{g}, x, u) & := & \esp \big[ \Delta \tilde{\tau}_2^{1C} \big].
\end{eqnarray}
\begin{N2start}
The reader might expect $\tilde{N}_1$ in lieu of $\tilde{N}_2$ in (\ref{psiAV1}), (\ref{psiAC1}), (\ref{psiAC2}) and (\ref{psitau}). Actually, we cannot use $\tilde{N}_1$ directly from the definition because the corresponding time $\tilde{\tau}_{0}^{1C,-} = 0$. We would need to set it to $-u$ to alter the definition of (\ref{psiAV1}), (\ref{psiAC1}), (\ref{psiAC2}) and (\ref{psitau}), which we have chosen not to do for the sake of clarity.
\end{N2start}
Set $\tilde{Z}_0 := (x,u)$ and for any positive integer $i$ 
\begin{eqnarray}
\label{Zi} \tilde{Z}_i := \big( \Delta \tilde{X}_{[\tilde{\tau}_{i}^{1C,-}, \tilde{\tau}_{i}^{1C}]}^{(4)}, 
\tilde{\tau}_{i}^{1C} - \tilde{\tau}_{i}^{1C,-} \big) .
\end{eqnarray}
For any nonnegative integer $i$, we consider $\tilde{\pi}_{i} ( \tilde{\sigma}, \tilde{g}, x, u )$ the distribution of 
$\tilde{Z}_i$. We also introduce the notation $\Pi (\tilde{\sigma}, \tilde{g}, x, u ) := \{ 
\tilde{\pi}_{i} ( \tilde{\sigma}, \tilde{g}, x, u ) \}_{i \geq 0}$. By the strong Markov property of Brownian motion, we can show that 
$\tilde{Z}_i$ is a homogeneous Markov chain (of order $1$) on the state space 
$\mathcal{S}_{\tilde{g}}$. 
In the following lemma, we show that there exists a stationary distribution of 
$\tilde{\pi}_{i} ( \tilde{\sigma}, \tilde{g}, x, u )$.
\begin{distribexists}
\label{distribexists}
Let $c := (g^-, g^+, K, d)$ be a four-dimensional vector such that 
$c \in \mathcal{C}$ and consider $\tilde{\sigma}$ a constant volatility matrix such that $\tilde{\lambda}^{\min} > 0$ and 
$\tilde{g} \in \mathcal{G} (c) $ a constant grid.
Then, there exists a stationary distribution $\tilde{\pi} (\tilde{\sigma} , \tilde{g} )$. 
\end{distribexists}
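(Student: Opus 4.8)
The plan is to establish that the Markov chain $\{\tilde{Z}_i\}_{i \geq 0}$ defined in (\ref{Zi}) is uniformly ergodic by verifying a Foster--Lyapunov drift condition toward a small set, and then to invoke the standard limit theory for Markov chains (Meyn and Tweedie (2009)), which guarantees that such a chain is positive Harris recurrent and hence admits a unique stationary distribution $\tilde{\pi}(\tilde{\sigma}, \tilde{g})$. Throughout, the two structural ingredients I would exploit are: (i) the non-degeneracy $\tilde{\lambda}^{\min} > 0$, which makes the driving Brownian motion elliptic, so that increments and hitting positions and times of $\tilde{X}^{(4)}$ possess densities bounded below on compacts; and (ii) the grid bounds of Assumption (A3), namely (\ref{A3g}) and (\ref{A3compact}), which confine the down and up boundaries to $[-g^+, -g^-] \times [g^-, g^+]$ with compact support in $s$, uniformly over $\tilde{g} \in \mathcal{G}(c)$.

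First I would record the geometry of the state space $\mathcal{S}_{\tilde{g}}$. Because $\tilde{\tau}_i^{1C,-}$ is by definition the last observation time of the second asset strictly before $\tilde{\tau}_i^{1C}$, the observation-time process $\tilde{X}^{(4)}$ starts a fresh excursion at $\tilde{\tau}_i^{1C,-}$ and has not yet exited $[\tilde{d}^{(2)}, \tilde{u}^{(2)}]$ by $\tilde{\tau}_i^{1C}$; hence the first coordinate of $\tilde{Z}_i$ lives in the bounded interval $(\tilde{d}^{(2)}, \tilde{u}^{(2)}) \subseteq (-g^+, g^+)$. The second coordinate, the elapsed time $\tilde{\tau}_i^{1C} - \tilde{\tau}_i^{1C,-}$, is nonnegative but unbounded, with exponential tails controlled by the exit time of a non-degenerate Brownian motion from a bounded interval. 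Thus $\mathcal{S}_{\tilde{g}} \subseteq (-g^+, g^+) \times (0, \infty)$, compact in the first coordinate only, which is why a genuine drift argument (rather than a bare one-step minorization over a compact space) is required.

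Next I would verify $\psi$-irreducibility and aperiodicity: by ellipticity the one-step kernel dominates an absolutely continuous measure whose density is bounded below on every compact subset of $\mathcal{S}_{\tilde{g}}$, so the chain is irreducible with respect to Lebesgue measure on $\mathcal{S}_{\tilde{g}}$ and aperiodic. I would then show that every set of the form $C := (-g^+, g^+) \times (0, M]$ is small, i.e. admits a minorization $P(z, \cdot) \geq \epsilon\, \nu(\cdot)$ for $z \in C$, again from the uniform lower bound on the transition density, which holds uniformly over $z \in C$ and over $\tilde{g} \in \mathcal{G}(c)$ thanks to (\ref{A3g})--(\ref{A3compact}). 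To control the unbounded time coordinate I would take the Lyapunov function $V(z) = \exp(\gamma s)$, with $s$ the time coordinate and $\gamma > 0$ small, and prove the geometric drift $PV \leq \lambda V + b\, \mathbf{1}_C$ with $\lambda < 1$; the crucial estimate is a uniform exponential-moment bound on the time coordinate of the next state, which follows from the fact that the exit time of $\tilde{X}^{(4)}$ from a bounded interval has exponential moments bounded uniformly in the starting point and in $\tilde{g} \in \mathcal{G}(c)$, a direct consequence of $\tilde{\lambda}^{\min} > 0$ together with $g^+ < \infty$.

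Finally, with a geometric drift toward a small set in hand, the Meyn--Tweedie machinery yields positive Harris recurrence and uniform (indeed geometric) ergodicity, and in particular the existence of the desired stationary distribution $\tilde{\pi}(\tilde{\sigma}, \tilde{g})$. I expect the main obstacle to be the two uniformity statements, namely the lower bound on the transition density for the small-set step and the uniform exponential-moment bound on the exit time for the drift step, each of which must hold uniformly over all constant grids $\tilde{g} \in \mathcal{G}(c)$ and over the starting configuration. Both reduce to quantitative hitting-time and hitting-position estimates for the elliptic process $\tilde{X}^{(4)}$ on bounded intervals, where the ellipticity $\tilde{\lambda}^{\min} > 0$ supplies the lower bounds and the boundedness of the grid supplies the upper bounds.
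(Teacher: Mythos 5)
Your strategy is sound and would deliver the result, but it diverges from the paper's route in one structural respect, and it also contains a mistaken claim of necessity. The paper (in the first step of the proof of Lemma \ref{approxdistrib}) does not use a drift condition at all: it proves that the \emph{entire} state space $\mathcal{S}_{\tilde{g}}$ is $\nu$-small, i.e.\ a one-step Doeblin minorization $P(\tilde{\sigma},\tilde{g})((x,u),B)\geq\nu(B)$ holds uniformly over all starting states $(x,u)\in\mathcal{S}_{\tilde{g}}$, with $\nu$ supported on a fixed rectangle $[-g^-/4,g^-/4]\times[3,4]$; Theorem 16.0.2(v) of Meyn and Tweedie then gives uniform ergodicity and the stationary distribution directly, with the explicit rate $r=1-\nu(\reels^2)$. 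Your assertion that ``a genuine drift argument (rather than a bare one-step minorization) is required'' because the elapsed-time coordinate is unbounded is therefore not correct: by (\ref{A3compact}) the grid function is constant in $s$ beyond $K$, so the transition law is insensitive to large values of the time coordinate and the whole (non-compact) space can be made small in one step. Your alternative — small truncated sets $(-g^+,g^+)\times(0,M]$ plus a geometric drift with $V(z)=e^{\gamma s}$ — does work (indeed, since the next state's time coordinate has exponential moments bounded uniformly in the starting state, $PV$ is uniformly bounded and the drift inequality holds trivially for $M$ large), but it yields a less explicit ergodicity rate, and the uniformity in $(\tilde{\sigma},\tilde{g})$ of that rate is needed later in the third step of Lemma \ref{approxdistrib}; the Doeblin constant $\nu(\reels^2)$ delivers this uniformity for free, whereas with your route you would have to track it through the drift constants. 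Finally, be aware that in either approach the entire technical content is the uniform lower bound on the transition kernel, which you state as a consequence of ``ellipticity'' but do not prove: the paper spends several pages constructing it explicitly via the chain of events $E_0,\dots,E_4$, the factorization $I\times II\times III\times IV\times V$, the decompositions (\ref{W11})--(\ref{W22}) using (\ref{A2rho34}), and the exit-time and exit-position densities $p_1,p_2,p_3$ of Borodin and Salminen. As written, your proposal relocates rather than resolves the hard step.
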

The proof of Lemma \ref{distribexists} can be found in the Appendix (proof of Lemma \ref{approxdistrib}). The next definition is the average (regarding the 
stationary distributions) of the instantaneous variance, covariances and 1-correlated time. 
For any $\theta \in \{\text{AV, AC1, AC2, } \tau \}$,
$$\phi^{\theta} \left(\tilde{\sigma}, \tilde{g} \right) := \int_{\reels^2} \psi^{\theta} \left(\tilde{\sigma}, 
\tilde{g}, y, v \right) d \tilde{\pi} 
\left(\tilde{\sigma}, \tilde{g} \right) \left( y, v \right) .$$
We introduce the notation $\phi^{\theta}_s := \phi^{\theta} \left( \sigma_s, g_s \right)$ and consider the following quantities needed to compute the asymptotic bias and variance.
\begin{eqnarray}
\label{k1} k_s^{(1)} & := & \big( \sigma_s^{(1)} \big)^{-2} \phi_s^{AC1} \big( \phi_s^{\tau} \big)^{-1} ,\\
\label{kperp} k_s^{1,\perp} & := & \big(1 - ( \rho_s^{1,2} )^2 \big)^{-1} \big( ( \sigma_s^{(2)} )^{-2} \phi_s^{AC2}
- (\sigma_s^{(1)} \sigma_s^{(2)} )^{-1} \rho_s^{1,2} \phi_s^{AC1} \big) \big( \phi_s^{\tau} \big)^{-1} .
\end{eqnarray}
We express now $AV_s$ the quantity integrated to obtain the asymptotic variance.
\begin{eqnarray}
\label{AV} AV_s & := & \big( \phi_s^{AV} + 2 \big( k_s^{(1)} (\sigma_s^{(1)} )^{-1} 
\sigma_s^{(2)} \rho_s^{1,2} \phi_s^{AC1} 
- ( k_s^{(1)} + k_s^{1, \perp} ) \phi_s^{AC2} \big) \big) \big( \phi_s^{\tau} \big)^{-1}
\end{eqnarray}
\begin{eqnarray*}
& & + \big( \sigma_s^{(1)} \big)^2 \big( k_s^{(1)} \big)^2 
+ \big( \sigma_s^{(2)} \big)^2 \Big( 1 - ( \rho_s^{1,2} )^2 \big) \big( k_s^{1, \perp} \big)^2 .
\end{eqnarray*}
The asymptotic bias is defined as $AB_t := \int_0^t AB_s^{(1)} dX_s^{(1)} + \int_0^t AB_s^{(2)} dX_s^{(2)}$ where
\begin{eqnarray}
\label{AB1} AB_s^{(1)} & := & k_s^{(1)} - k_s^{1, \perp} \rho_s^{1,2} \sigma^{(2)}_{s} \big( \sigma^{(1)}_{s} \big)^{-1}, \\
\label{AB2} AB_s^{(2)} & := & k_s^{1, \perp}.
\end{eqnarray}

\begin{asymptoticbias}
(asymptotic bias) Looking at the expressions for $AB_s^{(1)}$ and $AB_s^{(2)}$, one can be tempted to think that because of the 
$\big(1 - ( \rho_s^{1,2} )^2 \big)^{-1}$ term in $k_s^{1, \perp}$, the bias 
will increase drastically when both assets are highly correlated. In this case, the reader should keep in mind that the second term of $AB_s^{(1)}$, when 
integrated with respect to $X_s^{(1)}$,
and $AB_s^{(2)}$, when integrated with respect to $X_s^{(2)}$, will be roughly of the 
same magnitude, with opposite signs, and thus there is no explosion of asymptotic bias. 
We chose the above asymptotic bias' representation because it is straightforward to build estimators from it. 
We can also express 
the asymptotic bias differently. For this purpose, we can rewrite the log-price process as 
\begin{eqnarray*}
      dX_t^{(1)} & = & \sigma^{(1)}_{t} dB_{t}^{(1)},\\
      dX_t^{(2)} & = & \rho_t^{1,2} \sigma^{(2)}_{t} dB_{t}^{(1)} + 
      \big( 1 - ( \rho_t^{1,2} )^2 \big)^{1/2} \sigma^{(2)}_{t} dB_{t}^{1,\perp},
\end{eqnarray*}
where $B_t^{(1)}$ and $B_t^{1,\perp}$ are independent Brownian motions. Let 
\begin{eqnarray}
\label{X1perp} dX_t^{1,\perp} = \big( 1 - ( \rho_t^{1,2} )^2 \big)^{1/2} \sigma^{(2)}_{t} dB_{t}^{1,\perp}
\end{eqnarray}
be the part of $X_t^{(2)}$ that is not correlated with $X_t^{(1)}$. We can express the asymptotic bias as 
$AB_t = \int_0^t \tilde{AB}_s^{(1)} dX_s^{(1)} + 
 \int_0^t \tilde{AB}_s^{(2)} dB_s^{1,\perp}$. In this case, $\tilde{AB}_s^{(1)} = k_s^{(1)}$ and 
$$\tilde{AB}_s^{(2)} =\underset{n \rightarrow \infty}{\lim} 
 \langle M^n, B^{1,\perp} \rangle_s$$ 
where $M^n$ is defined in the proofs. We can show that this limit exists, 
 and does not explode when both assets are highly correlated.
\end{asymptoticbias}

\section{Estimation of the bias and variance} \label{estimation}
We need to introduce some new notations. We recall that $N_{1,\alpha}^{(1)}$ is the number of observations corresponding to the first asset before $1$ and 
we also define $N_{1,\alpha}^{1C}$ the number of 1-correlated observations before 1, i.e. 
$N_{1,\alpha}^{1C} := \max \{ i \in \naturels \text{ s.t. } \tau_{i,\alpha}^{1C} < 1 \}$. 
In practice, the first step is to transform the returns of the first asset 
$$\big\{ (\Delta X_{\tau_{i,\alpha}^{(1)}}^{(1)}, \Delta \tau_{i,\alpha}^{(1)} ) 
\big\}_{i=1}^{N_{1,\alpha}^{(1)}}$$ into 1-correlated returns
$$\big\{ (\Delta X_{\tau_{i,\alpha}^{1C}}^{1C}, \Delta \tau_{i,\alpha}^{1C} ) 
\big\}_{i=1}^{N_{1,\alpha}^{1C}}$$ 
using algorithm (\ref{algo1C}). Then, for each asset, we will chop the data into $B_n$ blocks and on each block
$i=1, \ldots, B_n$ we will estimate $\widehat{AV}_{i,\alpha}$, $\widehat{AB}_{i,\alpha}^{(1)}$ and 
$\widehat{AB}_{i,\alpha}^{(2)}$, 
pretending that the volatility matrix $\sigma_t$ and grid $g_t$ are block-constant. 

\bigskip
Because there is asynchronicity in 
the observation times, the blocks of each asset are not exactly equal. Let $h_n$ be the block size. For the first asset, 
we consider \emph{block} $1^{(1)} := [0, \tau_{h_n,\alpha}^{1C}]$, \emph{block} $2^{(1)} := 
[\tau_{h_n,\alpha}^{1C}, \tau_{2h_n,\alpha}^{1C}]$, 
etc. For the second asset, 
we let \emph{block} $1^{(2)} := [\tau_{0,\alpha}^{1C,+}, \tau_{h_n,\alpha}^{1C,+}]$, \emph{block} 
$2^{(2)} := [\tau_{h_n,\alpha}^{1C,+}, \tau_{2 h_n,\alpha}^{1C,+}]$, 
etc. In the following, 
we will say 
$j \in \text{block }i^{(1)}$ when $\tau_{j,\alpha}^{(1)} \in \text{block } i^{(1)}$. Similarly, we say
$j \in \text{block }i^{(2)}$ when $\tau_{j,n}^{(2)} \in \text{block } i^{(2)}$. Finally, we define 
$j \in \text{block }i$ if $j \in \{ (i-1) h_n + 1, \ldots, i h_n \} $.
First, we estimate the volatility of 
both assets using the corrected estimator in Li et al. (2014). To do this, we need to define an estimate of the spot volatility on each block for each asset $k=1,2$ by
\begin{eqnarray*}
 \tilde{\sigma}_{i,\alpha}^{(k)} & := & 
 \Big( \sum_{j \in \text{block } i^{(k)}} (\Delta X_{\tau_{j,\alpha}^{(k)}}^{(k)})^2 \Big)^{1/2}.
\end{eqnarray*}
Then, we estimate the asymptotic bias of the volatility via
\begin{eqnarray*}
 \widehat{AB\sigma}_{i,\alpha}^{(k)} & = & \frac{2}{3 (\tilde{\sigma}_{i,\alpha}^{(k)})^2} 
 \sum_{j \in \text{block } i^{(1)}} (\Delta X_{\tau_{j,\alpha}^{(k)}}^{(k)})^3 .
\end{eqnarray*}
We obtain the bias-corrected estimators of volatility on each block:
\begin{eqnarray*}
 \hat{\sigma}_{i,\alpha}^{(k)} & = & \tilde{\sigma}_{i,\alpha}^{(k)} - \widehat{AB\sigma}_{i,\alpha}^{(k)} .
\end{eqnarray*}
Then, we estimate the correlation between both assets using the \emph{naive} HY estimator
\begin{eqnarray*}
 \hat{\rho}_{i,\alpha}^{1,2} & = & \frac{1}{\hat{\sigma}_{i,\alpha}^{(1)} \hat{\sigma}_{i,\alpha}^{(2)}} 
 \sum_{j \in \text{block } i} \Delta X_{\tau_{j,\alpha}^{1C}}^{(1)} 
\Delta X_{\tau_{j,\alpha}^{1C,-,+}}^{(2)} .
\end{eqnarray*}
We then build an estimator of the compensated increments of the HY estimator, 
following the definition in (\ref{compensated}),
\begin{eqnarray*}
   \widehat{N}_{i,\alpha} & = & \Delta X_{\tau_{i,\alpha}^{1C}}^{(1)} \Delta X_{\tau_{i,\alpha}^{1C,-,+}}^{(2)} - 
   \Delta \tau_{i,\alpha}^{1C} 
   \hat{\sigma}_{i,\alpha}^{(1)} 
   \hat{\sigma}_{i,\alpha}^{(2)} \hat{\rho}_{i,\alpha}^{1,2}.
\end{eqnarray*}
The next step is to estimate the instantaneous variance (\ref{psiAV1}), 
both instantaneous covariances (\ref{psiAC1}) and (\ref{psiAC2}) and the instantaneous 1-correlated time
(\ref{psitau}) on each block. This is done by taking the sample average of the corresponding estimated quantities. 
Note that we don't directly estimate $\psi^{AV}$, $\psi^{AC1}$, $\psi^{AC2}$ and $\psi^{\tau}$, but rather a scaling 
version of them, i.e. $\alpha_n^2 \psi^{AV}$, $\alpha_n \psi^{AC1}$, $\alpha_n \psi^{AC2}$ and $\alpha_n \psi^{\tau}$. 
In practice, we can always assume $\alpha_n := 1$ by scaling $g_t$ by the tick size, 
and thus we match the definitions of the following estimators with (\ref{psiAV1})-(\ref{psitau}). For the sake of 
simplicity, we assume that the number of 1-correlated observations of the last block $B_n$ is also $h_n$. In practice, 
this will be most likely different from $h_n$, and thus the denominator of (\ref{estphiAV})-(\ref{estphitau}) will 
have to be changed so that it is equal to the number of 1-correlated observations in this last block. The estimates are given by
\begin{eqnarray}
\label{estphiAV} \hat{\phi}_{i,\alpha}^{AV} & := & h_n^{-1} 
      \sum_{j \in \text{block } i} \hat{N}_{j,\alpha}^2 + 2 \hat{N}_{j, \alpha} \hat{N}_{j+1, \alpha}, \\ 
\hat{\phi}_{i,\alpha}^{AC1} & := &  h_n^{-1} 
      \sum_{j \in \text{block } i} \hat{N}_{j,\alpha} \Delta X_{\tau_{j,\alpha}^{1C}}^{(1)}, \\
\hat{\phi}_{i,\alpha}^{AC2} & := &  h_n^{-1} 
      \sum_{j \in \text{block } i} \hat{N}_{j,\alpha} \Delta X_{\tau_{j,\alpha}^{1C,-,+}}^{(2)}, \\
\label{estphitau} \hat{\phi}_{i,\alpha}^{\tau} & := &  h_n^{-1} 
      \sum_{j \in \text{block } i}  \Delta \tau_{j,\alpha}^{1C}.
\end{eqnarray}
We estimate now the quantities (\ref{k1}) and (\ref{kperp}) as
\begin{eqnarray}
      \hat{k}_{i,\alpha}^{(1)} & := & \big( \hat{\sigma}_{i,\alpha}^{(1)} \big)^{-2} \hat{\phi}_{i, \alpha}^{AC1} 
\big( \hat{\phi}_{i, \alpha}^{\tau} 
\big)^{-1},\\
\hat{k}_{i,\alpha}^{1,\perp} & := & \big(1 - ( \hat{\rho}_{i,\alpha}^{1,2} )^2 \big)^{-1} \big( 
( \hat{\sigma}_{i, \alpha}^{(2)} )^{-2} \hat{\phi}_{i,\alpha}^{AC2}
- (\hat{\sigma}_{i,\alpha}^{(1)} \hat{\sigma}_{i,\alpha}^{(2)} )^{-1} \hat{\rho}_{i,\alpha}^{1,2} 
\hat{\phi}_{i, \alpha}^{AC1} \big) \big( \hat{\phi}_{i,\alpha}^{\tau} \big)^{-1}.
\end{eqnarray}
We follow (\ref{AB1}) and (\ref{AB2}) to estimate the bias integrated terms $AB_{s}^{(1)}$ and $AB_{s}^{(2)}$ on each block 
\begin{eqnarray*}
\widehat{AB}_{i,\alpha}^{(1)} & := & \hat{k}_{i,\alpha}^{(1)} - \hat{k}_{i,\alpha}^{1, \perp} \hat{\rho}_{i,\alpha}^{1,2} 
\hat{\sigma}^{(2)}_{i, \alpha} \big( \hat{\sigma}^{(1)}_{i, \alpha} \big)^{-1}, \\
\widehat{AB}_{i,\alpha}^{(2)} & := & \hat{k}_{i,\alpha}^{1, \perp} .  
\end{eqnarray*}
For the 
variance term $AV_s$, we decide not to use the direct definition in (\ref{AV}) because it can provide negative 
estimates. Instead, we will be using the following estimator
\begin{eqnarray*}
      \widehat{AV}_{i,\alpha} & := & \Big( \big( \sum_{j \in \text{block } i}  \widehat{N}_{j,\alpha} \big) - 
      \hat{k}_{i,\alpha}^{(1)} (X_{\tau_{i h_n, \alpha}^{1C}}^{(1)} - X_{\tau_{(i-1) h_n, \alpha}^{1C}}^{(1)})  \\
&  & - \hat{k}_{i,\alpha}^{\perp} \big((X_{\tau_{i h_n, \alpha}^{1C,+}}^{(2)} - X_{\tau_{(i-1) h_n, \alpha}^{1C,+}}^{(2)}) 
 - \hat{\rho}_{i,\alpha}^{1,2} \hat{\sigma}_{i,\alpha}^{(2)} (\hat{\sigma}_{i,\alpha}^{(1)})^{-1} 
 (X_{\tau_{i h_n, \alpha}^{1C}}^{(1)} - X_{\tau_{(i-1) h_n, \alpha}^{1C}}^{(1)}) \big) \Big)^2.
 \end{eqnarray*}
We define the final estimators of asymptotic bias and asymptotic variance as
\begin{eqnarray}
\label{estAB} \widehat{AB}_{\alpha} & := & \sum_{i=1}^{B_n} \widehat{AB}_{i,\alpha}^{(1)} 
\big( X_{\tau_{i h_n, \alpha}^{1C}}^{(1)} - X_{\tau_{(i-1) h_n, \alpha}^{1C}}^{(1)} \big)
 + \widehat{AB}_{i,\alpha}^{(2)} \big( X_{\tau_{i h_n, \alpha}^{1C,+}}^{(2)} - 
 X_{\tau_{(i-1) h_n, \alpha}^{1C,+}}^{(2)} \big),\\
\label{estAV} \widehat{AV}_{\alpha} & := & \sum_{i=1}^{B_n} \widehat{AV}_{i,\alpha} 
\big( \tau_{i h_n,\alpha}^{1C} - \tau_{(i-1) h_n,\alpha}^{1C} \big).
\end{eqnarray}
As a corollary of Theorem $1$, we obtain the following result, which states the consistency of 
(\ref{estAB}) and (\ref{estAV}).
\begin{estcor}
\label{estcor}
There exists a choice of the block size $h_n$\footnote{the exact assumptions on $h_n$ can be found in the proofs of 
Theorem 1} such that when 
$\alpha \rightarrow 0$, we have
\begin{eqnarray}
 \alpha^{-1} \widehat{AB}_{\alpha} & \overset{\proba}{\rightarrow} & AB_1,\\
 \alpha^{-2} \widehat{AV}_{\alpha} & \overset{\proba}{\rightarrow} & \int_0^1 AV_s ds .
\end{eqnarray}
In particular, in view of Corollary 2, the bias-corrected estimator $\langle \widehat{ X^{(1)}, X^{(2)}} 
\rangle_{1,\alpha}^{BC} := \langle  \widehat{ X^{(1)}, X^{(2)}} \rangle_{1,\alpha}^{HY} - 
 \widehat{AB}_{\alpha}$ is such that 
\begin{eqnarray}
\label{feasiblestat}
 \frac{\langle \widehat{X^{(1)}, X^{(2)}} \rangle_{1,\alpha}^{BC} - 
 \langle X^{(1)}, X^{(2)} \rangle_{1}}{\widehat{AV}_{\alpha}^{1/2}} \rightarrow \mathcal{N} (0,1).
\end{eqnarray}

\end{estcor}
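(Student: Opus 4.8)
The plan is to treat Corollary \ref{estcor} as a consequence of the local-constancy-plus-ergodicity machinery already built for the proof of Theorem \ref{main1}. Concretely, I would first prove the two consistency statements block by block, then assemble the block quantities into Riemann--Stieltjes sums converging to the integrals $AB_1$ and $\int_0^1 AV_s\,ds$, and finally deduce the studentized central limit theorem from Theorem \ref{main1} together with a stable-convergence Slutsky argument. Throughout, the normalizations are tracked by the convention (spelled out before the statement) that after rescaling $g_t$ by the tick size we may take the scaling factor equal to $1$, so that the block estimators are built to target the $O(1)$ theoretical spot quantities $k_s^{(1)}$, $k_s^{1,\perp}$, $AB_s^{(k)}$ and $AV_s$.

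The first step is block-level consistency of the spot inputs. On block $i$, the bias-corrected estimators $\hat\sigma_{i,\alpha}^{(k)}$ consistently estimate the spot volatility $\sigma_s^{(k)}$ at the block location $s$ (this is the corrected spot-volatility estimator of Li et al.\ (2014), whose validity under endogenous hitting-time sampling is what the cubic correction term $\widehat{AB\sigma}_{i,\alpha}^{(k)}$ supplies), and $\hat\rho_{i,\alpha}^{1,2}$ consistently estimates $\rho_s^{1,2}$. Next I would establish that the sample averages $\hat\phi_{i,\alpha}^{AV},\hat\phi_{i,\alpha}^{AC1},\hat\phi_{i,\alpha}^{AC2},\hat\phi_{i,\alpha}^{\tau}$ converge to the stationary averages $\phi_s^{AV},\phi_s^{AC1},\phi_s^{AC2},\phi_s^{\tau}$. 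This is the crux: within a block the compensated increments $\hat N_{j,\alpha}$ and the associated returns and durations are, up to the error of holding $(\sigma_t,g_t)$ constant, functionals of the locally homogeneous Markov chain $\tilde Z_i$ of (\ref{Zi}); the stationary distribution exists by Lemma \ref{distribexists} and the chain is uniformly ergodic, so the ergodic theorem (e.g.\ Meyn and Tweedie (2009)) gives the in-block law of large numbers. The continuity of $\sigma_t$ (Assumption (A1)) and the H\"{o}lder bound (\ref{A3sup}) on $g_t$ control the error between the true returns and their block-constant surrogates. Since the denominators $\phi_s^{\tau}$, $(\sigma_s^{(k)})^2$ and $1-(\rho_s^{1,2})^2$ are bounded away from $0$ by $g^->0$ in (\ref{A3g}), by $\inf_t\lambda_t^{\min}>0$ in (A1), and by (A2), the continuous-mapping theorem then yields block-level consistency of $\hat k_{i,\alpha}^{(1)}$, $\hat k_{i,\alpha}^{1,\perp}$, $\widehat{AB}_{i,\alpha}^{(k)}$ and $\widehat{AV}_{i,\alpha}$ for $k_s^{(1)}$, $k_s^{1,\perp}$, $AB_s^{(k)}$ and $AV_s$.

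The second step assembles the blocks. For the bias, $\alpha^{-1}\widehat{AB}_{\alpha}=\sum_{i=1}^{B_n}\widehat{AB}_{i,\alpha}^{(1)}\,\Delta X^{(1)}_{\mathrm{block}\,i}+\widehat{AB}_{i,\alpha}^{(2)}\,\Delta X^{(2)}_{\mathrm{block}\,i}$ is a Riemann--Stieltjes sum of the block-constant consistent integrands against the increments of the semimartingales $X^{(1)},X^{(2)}$; since the integrands converge to the continuous adapted processes $AB_s^{(1)},AB_s^{(2)}$, the sum converges in probability to $\int_0^1 AB_s^{(1)}dX_s^{(1)}+\int_0^1 AB_s^{(2)}dX_s^{(2)}=AB_1$. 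For the variance, $\alpha^{-2}\widehat{AV}_{\alpha}=\sum_{i=1}^{B_n}\widehat{AV}_{i,\alpha}\,(\tau^{1C}_{ih_n,\alpha}-\tau^{1C}_{(i-1)h_n,\alpha})$ is a Riemann sum of $\widehat{AV}_{i,\alpha}\rightarrow AV_s$ against the time mesh, hence converges to $\int_0^1 AV_s\,ds$. Both assemblies require a double-asymptotic choice of the block size: $h_n\rightarrow\infty$ (so the in-block ergodic averages stabilize) while the number of blocks $B_n\rightarrow\infty$ and the block time-length $\rightarrow0$ (so the Riemann/stochastic-integral approximation is exact in the limit); I would verify that the same $h_n$ used in Theorem \ref{main1} balances these two error sources.

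Finally, the feasible CLT (\ref{feasiblestat}) follows by Slutsky for stable convergence. Corollary~2 gives $\alpha^{-1}(\langle\widehat{X^{(1)},X^{(2)}}\rangle^{BC}_{1,\alpha}-\langle X^{(1)},X^{(2)}\rangle_1)\rightarrow\int_0^1(AV_s)^{1/2}dZ_s$ stably, a mixed-normal limit whose $\mathcal F_1$-conditional variance is $\int_0^1 AV_s\,ds$; the just-proved $\alpha^{-2}\widehat{AV}_{\alpha}\overset{\proba}{\rightarrow}\int_0^1 AV_s\,ds$ is a consistent estimate of that conditioning variance. Writing the studentized ratio as $\bigl(\alpha^{-1}(\cdots)\bigr)\big/\bigl(\alpha^{-2}\widehat{AV}_{\alpha}\bigr)^{1/2}$, so that the powers of $\alpha$ cancel, stable convergence of the numerator jointly with convergence in probability of the denominator yields the standard normal limit, because dividing a stably convergent mixed normal by the square root of a consistent estimate of its mixing variance removes the randomness of the variance. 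The main obstacle is the block-level step: one must show that, after multiplication by the blow-up factors $\alpha^{-1}$ and $\alpha^{-2}$, the combined errors from (i) holding $(\sigma_t,g_t)$ constant on each block, (ii) the finite-$h_n$ deviation of the in-block averages from their stationary ergodic limits, and (iii) plugging estimated rather than true $\sigma,\rho$ into the nonlinear maps defining $\hat k$, $\widehat{AB}$ and $\widehat{AV}$, all vanish uniformly over blocks. This is precisely the quantitative uniform-ergodicity control flagged in the introduction, and it is where the H\"{o}lder regularity (\ref{A3sup}) and the uniform ergodicity of $\tilde Z_i$ do the essential work.
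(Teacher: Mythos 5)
Your proposal is correct and follows essentially the same route as the paper, whose own proof of Corollary \ref{estcor} is only a brief sketch stating that one holds the asymptotic bias and variance constant on blocks of size $h_n$, observes that $\widehat{AB}_{i,\alpha}^{(1)}$, $\widehat{AB}_{i,\alpha}^{(2)}$ and $\widehat{AV}_{i,\alpha}$ are uniformly consistent under the block-constant model, and then proceeds as in the proof of Theorem \ref{main1}. Your elaboration (ergodic in-block averages, Riemann--Stieltjes assembly, stable-convergence Slutsky for the studentized statistic) fills in exactly the steps the paper leaves implicit.
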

\begin{estrk1} (exchanging $X_t^{(1)}$ and $X_t^{(2)}$) When estimating the asymptotic bias and the asymptotic variance, 
we considered one specific asset to be $X_t^{(1)}$ and the other one to be $X_t^{(2)}$. We could exchange 
$X_t^{(1)}$ and $X_t^{(2)}$, and find new estimators $\tilde{AB}_{\alpha}$ and $\tilde{AV}_{\alpha}$ according 
to the previous definitions. One could then take $\frac{ AB_{\alpha} + \tilde{AB}_{t,\alpha}}{2}$ 
(respectively $\frac{ AV_{\alpha} + \tilde{AV}_{t,\alpha}}{2}$) as final estimators of asymptotic bias (asymptotic variance).
\end{estrk1}
\begin{estrk2} (optimal block size) In practice, the optimal block size $h_n$ is not straightforward to choose. On the one hand, 
$h_n$ should be as small as possible so that the volatility matrix $\sigma_t$ and the grid $g_t$ are almost 
constant on each block, and thus (\ref{estphiAV})-(\ref{estphitau}) are less biased.
On the other hand, we need as many observations as we can on each block, so that the variance of approximations 
(\ref{estphiAV})-(\ref{estphitau}) is not too big. We are facing here the usual bias-variance tradeoff.
\end{estrk2}

\section{Numerical simulations}
We consider four different settings in this part. We describe here the first one. We assume the same setting as the toy model described in Example \ref{hittingbarrier}, in two dimensions. Thus, there exists 
a four-dimensional parameter $\theta := (\theta_u^{(1)}, \theta_d^{(1)}, \theta_u^{(2)}, \theta_d^{(2)})$ such that 
for any $t \geq 0$ and any $s \geq 0$, $u_t^{(1)} (s) := \theta_u^{(1)}$, $d_t^{(1)} (s) := \theta_d^{(1)}$, 
$u_t^{(2)} (s) := \theta_u^{(2)}$ and $d_t^{(2)} (s) := \theta_d^{(2)}$. We assume that the two-dimensional price 
process $(X_t^{(1)}, X_t^{(2)})$ has a null-drift. Also, we assume that the volatility of the first process is $\sigma_t^{(1)} := \tilde{\tilde{\sigma}}^{(1)} $ where $\tilde{\tilde{\sigma}}^{(1)} := 0.016$ and 
the volatility of the second process $\sigma_t^{(2)} := \tilde{\tilde{\sigma}}^{(2)}$ where $\tilde{\tilde{\sigma}}^{(2)} := 0.02$, and that the correlation between both 
assets is $\rho_t^{1,2} := 0.2$. We set $\theta := \big( 0.0007 , 0.0001, 0.0006, 0.0001 \big)$. According to this rule, 
a change of price occurs whenever the price of the first (respectively second) asset increases by $0.07 \%$ ($0.06 \%$) 
or decreases by $0.01 \%$ ($0.01 \%$). Finally, we assume that the price processes $(X_t^{(1)}, X_t^{(2)})$ and the time 
processes $(X_t^{(t,1)}, X_t^{(t,2)})$ are equal. 

\bigskip
The second setting is similar to the first setting, except that we assume now a stochastic volatility Heston model. Specifically, we assume that 
\begin{eqnarray*}
dX_t^{(k)} & := & \mu^{(k)} dt + \sigma_t^{(k)} dB_t^{(k)},\\
d (\sigma_t^{(k)})^2 & := & \kappa^{(k)} \big( (\tilde{\tilde{\sigma}}^{(k)})^2 - (\sigma_t^{(k)})^2 \big) dt + \delta^{(k)} \sigma_t^{(k)} d \tilde{\tilde{B}}_t^{(k)},
\end{eqnarray*}
where the constant high-frequency covariance between $B_t^{(k)}$ and $\tilde{\tilde{B}}_t^{(k)}$ is fixed to $\tilde{\tilde{\rho}}^{(k)}$, and $(\tilde{\tilde{B}}_t^{(1)}, \tilde{\tilde{B}}_t^{(2)})$ are uncorrelated with each other. We choose to work with drift $(\mu^{(1)}, \mu^{(2)}) := (0.03, 0.02)$, and to add leverage effect $(\tilde{\tilde{\rho}}^{(1)}, \tilde{\tilde{\rho}}^{(2)})$ are selected to be $(-0.8, -0.7)$. Finally, $(\kappa^{(1)}, \kappa^{(2)}) := (4.5,5.5)$, the volatility of volatility $(\delta^{(1)},\delta^{(2)}) := (0.4, 0.5)$, and the volatility starting values $(\sigma_0^{(1)}, \sigma_0^{(2)}) := (\tilde{\tilde{\sigma}}^{(1)}, \tilde{\tilde{\sigma}}^{(2)})$.

\bigskip 
We consider now the third setting, which goes one step further than the previous setting. We assume a jump-diffusion model for both the price and the volatility. Formally, we assume that
\begin{eqnarray*}
dX_t^{(k)} & := & \mu^{(k)} dt + \sigma_t^{(k)} dB_t^{(k)} + dJ_t^{(k)},\\
d (\sigma_t^{(k)})^2 & := & \kappa^{(k)} \big( (\tilde{\tilde{\sigma}}^{(k)})^2 - (\sigma_t^{(k)})^2 \big) dt + \delta^{(k)} \sigma_t^{(k)} d \tilde{\tilde{B}}_t^{(k)} + d\tilde{J}_t^{(k)},
\end{eqnarray*}
where the jumps $(J_t^{(1)}, J_t^{(2)}, \tilde{J}_t^{(1)}, \tilde{J}_t^{(2)})$ follow a $4$-dimensional Poisson process with intensity $(\lambda^{(1)}, \lambda^{(2)}, \tilde{\lambda}^{(1)}, \tilde{\lambda}^{(2)})$ $:= (12,11,10,9)$. The jump sizes are taken to be $1$ or $-1$ with probability $\frac{1}{2}$ for  price processes, and $0.0001$ or $-0.0001$ with half-probability for volatility processes.

\bigskip
In the fourth setting, we consider another model of arrival times, namely Example \ref{uncertaintyzones}. We set the tick size $\alpha= 0.0001$ and the friction parameter $\eta = 0.15$. Price and volatility processes are assumed to follow the same model as in the second setting. 

\bigskip
We simulate price processes and observation times for 10 years of $252$ business days. We choose $h_n = n^{\frac{1}{2}}$ for Settings 2 to 4. We provide in Table \ref{num} a summary of the comparison results between HY and the bias-corrected HY. As expected from the theory, the RMSE is 
improved when using the bias-corrected estimator in Example \ref{hittingbarrier}. In Example \ref{uncertaintyzones}, the bias-corrected HY doesn't seem to perform better than HY. We conjecture that there is no asymptotic bias in Example \ref{uncertaintyzones}, and that this is the reason why we don't observe any difference between the two estimators in that simple model. In addition, the sample bias is almost the same when using HY and 
the bias-corrected estimator for the four different settings, which is also expected from Remark \ref{pathbiased}. Furthermore, this sample bias tends to $0$, which comes from the fact that both estimators are consistent. Finally, the standardized feasible statistic (\ref{feasiblestat}) in the first setting is reported in Table \ref{tablefeasiblestat} and plotted in Figure \ref{plotfeasiblestat}.

  \section{Conclusion}
 We have introduced in this paper the HBT model, and we have shown that it is more general than some of the endogenous models of the literature. This model can be extended to a model including more general noise structure in observations, and even noise in sampling times. This is investigated in Potiron (2016).

 \bigskip
Under this model, we have proved the central limit theorem of the Hayashi-Yoshida estimator. Our main theorem states 
that there is an asymptotic bias. Accordingly, we built a bias-corrected HY estimator. We also computed the theoretical standard deviation, and we provided consistent estimates of it. Numerical simulations corroborate the theory. 

\bigskip
The techniques used for the proof of the main theorem could be applied to more general models and to other problems such as the estimation of the integrated variance of noise, integrated betas, etc. In particular, independence between the efficient price process
and the noise is not needed in the model. As long as we can approximate the joint distribution of the noise and the returns by a Markov chain, ideas of our proof can be used.
 
\section{Appendix}
\subsection{Definition of some quantities of approximation}
\label{appdef}
We define in this section some quantities assuming the volatility matrix $\sigma_t$ and the grid function 
$g_t$ are constant. For that purpose, let 
$\tilde{W}_t$ be a four dimensional Wiener process, $c := (g^-, g^+, K, d)$ a four-dimensional vector such that 
$c \in \mathcal{C}$ and $\tilde{\sigma}$ a constant volatility matrix such that the associated $\tilde{\lambda}^{\min}$, which 
is the analog of $\lambda_t^{\min}$ defined in Section $4.1$ when we replace $\sigma_t$ by $\tilde{\sigma}$, is stritcly bigger than $0$ 
and $\tilde{g} \in \mathcal{G} (c) $ a constant grid function. In analogy with the definition of the grid function $g_t$ in $(A3)$, we assume 
that $\tilde{g}$ can be written in terms of the down and up functions of both assets, i.e. 
$\tilde{g} := ( \tilde{g}^{(1)}, \tilde{g}^{(2)}) $ where for each $k=1,2$ we have $\tilde{g}^{(k)} := (\tilde{d}^{(k)}, 
\tilde{u}^{(k)})$. Also, we 
introduce $\mathcal{S}_{\tilde{g}}$ the subspace of $\reels^2$ defined as
$$\mathcal{S}_{\tilde{g}} := \{(y,v) \in \reels \times \reels^+ \text{ s.t. } \tilde{d}^{(2)} (v) \leq y \leq 
\tilde{u}^{(2)} (v) \}.$$
If we set 
$\tilde{X} = \tilde{\sigma} \tilde{W}$ and the corresponding sampling times of both assets $\tilde{\Tau} := 
(\tilde{\Tau}^{(1)}, \tilde{\Tau}^{(2)})$, where for $k=1,2$ we have $\tilde{T}^{(k)} := 
\{ \tilde{\tau}_{i} \}_{i \geq 0}$, we define the observation times of the first asset  
as $\tilde{\tau}_0^{(1)} := 0$ and recursively for $i$ any positive integer
$$\tilde{\tau}_i^{(1)} := \inf \big\{ t > \tilde{\tau}_{i-1}^{(1)} : \Delta \tilde{X}_t^{(3)} \notin [ 
\tilde{d}^{(1)} (t - \tilde{\tau}_{i-1}^{(1)}), 
\tilde{u}^{(1)} (t - \tilde{\tau}_{i-1}^{(1)}) ] \big\} .$$
These stopping times will be seen as approximations of the observation times of the first asset when we hold 
the volatility matrix $\sigma_t$ and the grid $g_t$ constant. We will always start our approximation at a 
\emph{1-correlated} observation 
time $\tau_{i,n}^{1C}$, which corresponds to an observation time of the first asset. As the sampling times of the second asset 
are not synchronized with the ones from the first asset, we need two more quantities $(x,u) \in \mathcal{S}_{\tilde{g}}$ 
to approximate the observation times
of the second asset. They correspond respectively to the increment of the second asset's time process 
$X_t^{(t,2)}$ since the last observation of the second asset occured and the time elapsed since the last observation time 
of the second asset. We define $\tilde{\tau}_0^{(2)} := 0$,  
$$\tilde{\tau}_1^{(2)} := \inf \big\{ t > 0 : x + \Delta \tilde{X}_t^{(4)} \notin [ \tilde{d}_2 (t +u), 
\tilde{u}_2 (t + u) ] \big\},$$
and for any integer $i \geq 2$
$$\tilde{\tau}_i^{(2)} := \inf \big\{ t > \tilde{\tau}_{i-1}^{(2)} : \Delta \tilde{X}_t^{(4)} \notin [ 
\tilde{d}_2 (t - \tilde{\tau}_{i-1}^{(2)}), 
\tilde{u}_2 (t - \tilde{\tau}_{i-1}^{(2)}) ] \big\} .$$
Similarly, we define the analogs of (\ref{tau-0s})-(\ref{tau-0}), (\ref{tau+0}), 
(\ref{inc-+0}), (\ref{algo1C}), (\ref{tau1C-0s})-(\ref{tau1C-0}), (\ref{tau1C+0}), 
(\ref{inc1C-+0}) and (\ref{compensated}) respectively as $\tilde{\tau}_{i-1}^-$, 
$\tilde{\tau}_{i-1}^+$, $\Delta \tilde{X}_{\tilde{\tau}_i^{-,+}}^{(2)}$, $\tilde{\tau}_{i-1}^{1C,-}$, 
$\tilde{\tau}_{i-1}^{1C,+}$, $\Delta \tilde{X}_{\tilde{\tau}_i^{1C,-,+}}^{(2)}$ and $\tilde{N}_i$ by putting tildes on the quantities in the definitions.

\subsection{Preliminary lemmas}
Without loss of generality, we choose to work under the third scenario defined in Section $2.4$, i.e. the asset price is different from the time process for 
both assets. Because we shall prove stable convergence, and because of the local boundedness of $\sigma$ (because by (A1) $\sigma$ is continuous), 
and that $\inf_{t \in (0,1]} \lambda_t^{\min} > 0$
we can without loss of generarality assume that for all $t \in [0,1]$ there exists some nonrandom constants $\sigma^{-}$ and $\sigma^{+}$ such that 
for any eigen-value $\lambda_t$ of $\sigma_t$ we have 
\begin{eqnarray}
\label{smsp}
0 < \sigma^{-} < \lambda_t < \sigma^{+},
\end{eqnarray}
by using a standard localization argument such that the one used in Section 2.4.5 of Mykland and Zhang (2012). One can further supress $\mu$ as in Section $2.2$ (pp. 1407-1409) of Mykland and Zhang (2009), and act as if 
$X$ is a martingale. 

We define the subspace $\mathcal{M}$ of matrices of dimension $4 \times 4$ such that $\forall M \in \mathcal{M}$, for any eigen-value $\lambda_M$ of 
$M$, we have 
\begin{eqnarray}
 \label{M} \sigma^{-} < \lambda_M < \sigma^{+}
\end{eqnarray} 
and $\frac{\left(M M^T \right)^{3,4}}{\left(M M^T \right)^{4,4}}  \in [\rho_-^{3,4}, \rho_+^{3,4} ]$. By (\ref{A2rho34}) of $(A2)$ and (\ref{smsp}),
we will assume in the following that $\forall t \in [0, 1]$, $\sigma_t \in \mathcal{M}$. 

We define $\sigma^p$ the process (of dimension $4 \times 4$) on $\reels^+$ such that 
 $$
   \left \{
   \begin{array}{r c  c l}
      \sigma_t^p & = & \sigma_t & \forall t \in [0,1],\\
      \sigma_t^p & = & \sigma_1 & \forall t \in [1,\infty).
   \end{array}
   \right .
$$
Define now $X^p$ the process such that for all $t \geq 0$
$$
   \left \{
   \begin{array}{r c l}
      d X_t^p & = &  \sigma_t^p dW_t,\\
      X_0^p & = & X_0.
   \end{array}
   \right .
$$
Because $X^p$ and $X$ have the same initial value and follow the same stochastic differential equation on $[0,1]$, they are equal for all $t \in [0,1]$. For simplicity, we keep from now on the notation $X$ for $X^p$.

\bigskip
 In the following, $C$ will be defining a constant which does not depend on $i$ or $n$, but that can vary from a line to another. Also, 
 we are going to use the notation $\tau_{i,n}^{\theta}$ as a subtitute of $\tau_{i,\alpha_n}^{\theta}$, where $\theta$ can take various names, such that 
 $(1),(2)$ and so on. 
Let $h : \naturels \rightarrow \naturels$ a (not strictly) increasing non-random sequence such that
 \begin{eqnarray}
 \label{infty}
 h_n \rightarrow + \infty, \\
 \label{hnalphan}
 h_n \alpha_n \rightarrow 0.
 \end{eqnarray}
 To keep notation as simple as possible, we define $\tau_{i,n}^{h} := \tau_{i h_n,n}^{1C}$, $\tau_{i,n}^{h,-} :=
\tau_{i h_n,n}^{1C,-}$, $\tau_{i,n}^{h,+} := \tau_{i h_n,n}^{1C,+}$. We also let $A_n := \{ i \geq 1 \text{ s.t. } \tau_{i-1,n}^h \leq t \}$, where 
$t \in [0,1]$. Also, we recall the notation $(X_t^{(3)}, X_t^{(4)}) := (X_t^{(t,1)}, X_t^{(t,2)})$
Finally, for $\theta \in \{ (1), (2), 1C, h \}$, we define $s_n^{\theta} = \underset{\tau_{i,n}^{\theta} < T}{\sup} \Delta \tau_{i,n}^{\theta}$. 
We show that these quantities tend to 0 almost surely in the following lemma. 
\begin{snk}
 \label{snk}
 We have $s_n^{\theta} \overset{a.s.}{\rightarrow} 0$.
\end{snk}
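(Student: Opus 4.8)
The plan is to treat the single-asset spacings $\theta \in \{(1),(2)\}$ first with a quantitative tail bound, and then to deduce the $1C$- and $h$-cases from them. After the localization and drift removal already performed, each time process $X^{(t,1)}$ and $X^{(t,2)}$ is a continuous martingale whose quadratic variation has density bounded below by $(\sigma^-)^2$ by (\ref{smsp}); moreover by (\ref{A3g}) every band $[\alpha_n d_t^{(k)}, \alpha_n u_t^{(k)}]$ is contained in $[-\alpha_n g^+, \alpha_n g^+]$. Consequently, on the event $\{\Delta\tau_{i,n}^{(k)} > \delta\}$ the increment of $X^{(t,k)}$ measured from $\tau_{i-1,n}^{(k)}$ stays inside a band of half-width $\alpha_n g^+$ throughout an interval of length $\delta$, so the oscillation of $X^{(t,k)}$ over that interval is at most $2\alpha_n g^+$.

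First I would convert this into a tail estimate. By the Dambis--Dubins--Schwarz theorem, conditionally on $\mathcal{F}_{\tau_{i-1,n}^{(k)}}$ the increment equals a Brownian motion run along the clock $\langle X^{(t,k)} \rangle$, which advances at rate at least $(\sigma^-)^2$; staying in a band of half-width $\alpha_n g^+$ for real time $\delta$ forces that Brownian motion to stay in the band for clock time at least $(\sigma^-)^2\delta$. The standard small-ball (first-eigenvalue) estimate then yields $\proba\big(\Delta\tau_{i,n}^{(k)} > \delta \mid \mathcal{F}_{\tau_{i-1,n}^{(k)}}\big) \le C\exp(-c\,\delta/\alpha_n^2)$ for constants $c,C$ depending only on $\sigma^-$ and $g^+$.

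The main obstacle is that $s_n^{(k)}$ is a maximum over a random and large (order $\alpha_n^{-2}$) number of spacings, so a direct union bound over spacings is awkward. I would circumvent this by a covering argument: if $s_n^{(k)} > \delta_n$, then some interval of length exceeding $\delta_n$ carries oscillation of $X^{(t,k)}$ at most $2\alpha_n g^+$, and hence so does at least one cell of the deterministic grid $\big\{[(j-1)\delta_n/2,\, j\delta_n/2]\big\}$ of $[0,1]$. Applying the small-ball bound to each of the $\le 2\delta_n^{-1}+1$ cells gives $\proba(s_n^{(k)} > \delta_n) \le C\delta_n^{-1}\exp(-c\,\delta_n/\alpha_n^2)$. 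Choosing $\delta_n := M\alpha_n^2\log(1/\alpha_n)$ with $M$ large enough makes the right-hand side summable, so Borel--Cantelli gives $s_n^{(k)} \le \delta_n$ eventually; this yields $s_n^{(k)} \overset{a.s.}{\to} 0$ together with the quantitative rate $s_n^{(k)} = O\big(\alpha_n^2\log(1/\alpha_n)\big)$ a.s.

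Finally I would deduce the remaining two cases from the construction (\ref{algo1C}). Tracing the algorithm shows that any $1C$-spacing is bounded by one $(2)$-spacing followed by one $(1)$-spacing: the first $(2)$-time at or after $\tau_{i,n}^{1C}$ is within $s_n^{(2)}$ of it, and the subsequent $\tau_{i+1,n}^{1C}$ is within $s_n^{(1)}$ of that $(2)$-time, whence $s_n^{1C} \le s_n^{(1)} + s_n^{(2)} \overset{a.s.}{\to} 0$. Since $\tau_{i,n}^h = \tau_{ih_n,n}^{1C}$ aggregates $h_n$ consecutive $1C$-spacings, $s_n^h \le h_n s_n^{1C} \le h_n\big(s_n^{(1)}+s_n^{(2)}\big) = O\big((h_n\alpha_n)\,\alpha_n\log(1/\alpha_n)\big)$, which tends to $0$ a.s. by (\ref{hnalphan}) together with $\alpha_n\log(1/\alpha_n)\to 0$; here the quantitative rate obtained above is essential precisely because $h_n\to\infty$.
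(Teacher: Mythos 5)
Your argument is correct, and for the single-asset and $1C$ cases it is essentially the paper's: the paper handles $\theta\in\{(1),(2)\}$ by invoking the proof of Lemma 4.5 of Robert and Rosenbaum (2012), which is precisely the oscillation-plus-small-ball-plus-deterministic-covering scheme you spell out, and it deduces the $1C$ case from $s_n^{1C}\leq s_n^{(1)}+s_n^{(2)}$ exactly as you do. Where you genuinely diverge is the $h$ case. The paper does \emph{not} use the crude bound $s_n^h\leq h_n s_n^{1C}$; instead it builds an auxiliary continuous process $Z$ by concatenating the increments of $X^{(2)}$ on $[\tau_{i-1,n}^{1C},\tau_{i-1,n}^{1C,+}]$ with those of $X^{(1)}$ on $[\tau_{i-1,n}^{1C,+},\tau_{i,n}^{1C}]$, observes that over a whole $h$-block this process oscillates by at most $C h_n\alpha_n$, and reruns the Robert--Rosenbaum argument for $Z$ with the effective barrier $M_n\leq C h_n\alpha_n\to 0$ by (\ref{hnalphan}); only the qualitative statement is needed. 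Your route instead needs the quantitative almost-sure rate $s_n^{(k)}=O(\alpha_n^2\log(1/\alpha_n))$ to absorb the factor $h_n\to\infty$, and that rate comes from Borel--Cantelli applied to the tail bound $C\delta_n^{-1}\exp(-c\delta_n/\alpha_n^2)$ with $\delta_n=M\alpha_n^2\log(1/\alpha_n)$. This is fine, but note it forces you to require summability in $n$ of $\alpha_n^{cM-2}$, i.e.\ an implicit polynomial decay of the sequence $\alpha_n$ (the same caveat is buried in the cited Robert--Rosenbaum lemma, so you are not worse off than the paper, but your $h$-step leans on the rate in an essential way whereas the paper's concatenation trick does not). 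Both approaches are valid; the paper's is more robust to the speed at which $\alpha_n\to 0$, while yours avoids introducing the auxiliary process and gives an explicit mesh rate as a by-product.
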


\begin{proof}
 We can follow the proof of Lemma $4.5$ in Robert and Rosenbaum (2012) to prove that for $k \in \{1, 2 \}$, $s_n^{(k)} \overset{a.s.}\rightarrow 0$. 
 Then, we can notice that a.s. $s_n^{1C} < s_n^{(1)} + s_n^{(2)}$ to deduce that $s_n^{1C} \overset{a.s.}\rightarrow 0$. To show that 
 $s_n^h \rightarrow 0$, define the process $Z$ such that $Z_0 = 0$ and $\forall i > 0$ we have
 $$
   Z_t := \left \{
   \begin{array}{r l}
      \Delta X_{[\tau_{i-1,n}^{1C},t]}^{(2)} + Z_{\tau_{i-1,n}^{1C}} & \forall t \in [\tau_{i-1,n}^{1C}, \tau_{i-1,n}^{1C,+}],\\
    \Delta X_{[\tau_{i-1,n}^{1C,+},t]}^{(1)} + Z_{\tau_{i-1,n}^{1C,+}} & \forall t \in [\tau_{i-1,n}^{1C,+}, \tau_{i,n}^{1C}].
   \end{array}
   \right .
$$
Substituting $X$ in Lemma 4.5 of Robert and Rosenbaum's proof by our $Z$, we can follow the same reasoning. The only main change will be that in their notation 
$M_n \leq C h_n \alpha_n$, but this tends to 0 by (\ref{hnalphan}).
\end{proof}

Let $f$ be a random process, $s$ a random number, we define
$$S \left(f, s \right) :=  \underset{0 \leq u,v \leq 1, \mid u-v \mid \leq s}{\sup} \Big| 
f_u - f_v \Big|.$$

\begin{s}
 \label{s}
 Let $f$ be a bounded random process such that for all non-random sequence $\left( q_n \right)_{n \geq 0}$, if $q_n \rightarrow 0$, 
 then $S \left(f, q_n \right) \overset{\proba}{\rightarrow} 0$. Let also a random sequence $\left( s_n \right)_{n \geq 0}$ such that $s_n \overset{\proba}{\rightarrow} 0$. Then we have $\forall l \geq 1$ that
  $$ S \left( f, s_n \right) \overset{\mathbf{L}^l}{\rightarrow} 0.$$
\end{s}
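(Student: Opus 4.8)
The plan is to prove the statement in two stages: first establish $S \left( f, s_n \right) \overset{\proba}{\rightarrow} 0$, and then promote this to $\mathbf{L}^l$-convergence using the boundedness of $f$. The conceptual bridge between the hypothesis (which only constrains \emph{deterministic} null sequences) and the \emph{random} argument $s_n$ is the pathwise monotonicity of $s \mapsto S(f,s)$, so I would make that the central device.

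First I would record that $S(f,\eta) \overset{\proba}{\rightarrow} 0$ as the real number $\eta \downarrow 0$; this is merely a reformulation of the hypothesis, since otherwise there would exist $\epsilon, \delta > 0$ and a deterministic sequence $\eta_n \downarrow 0$ with $\proba \left( S(f,\eta_n) > \epsilon \right) \geq \delta$, contradicting $S(f,q_n) \overset{\proba}{\rightarrow} 0$ for every deterministic null sequence $q_n$. Next, for each fixed $\omega$ the map $s \mapsto S(f(\omega),s)$ is non-decreasing, because enlarging $s$ enlarges the set over which the supremum in the definition of $S$ is taken. Consequently, on the event $\{ s_n \leq \eta \}$ one has $S(f,s_n) \leq S(f,\eta)$ pathwise, which gives the inclusion $\{ S(f,s_n) > \epsilon \} \subseteq \{ s_n > \eta \} \cup \{ S(f,\eta) > \epsilon \}$ and hence $\proba \left( S(f,s_n) > \epsilon \right) \leq \proba \left( s_n > \eta \right) + \proba \left( S(f,\eta) > \epsilon \right)$. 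Given $\epsilon, \delta > 0$, I would first use the previous step to fix $\eta$ so small that $\proba \left( S(f,\eta) > \epsilon \right) < \delta/2$, and then use $s_n \overset{\proba}{\rightarrow} 0$ to choose $N$ with $\proba \left( s_n > \eta \right) < \delta/2$ for $n \geq N$. This yields $S(f,s_n) \overset{\proba}{\rightarrow} 0$.

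For the $\mathbf{L}^l$-upgrade I would invoke boundedness of $f$: after the localization carried out in the preceding subsection we may take a deterministic constant $C$ with $\sup_{t \in [0,1]} | f_t | \leq C$, so that $0 \leq S(f,s_n) \leq 2C$ for all $n$. Then for any $\epsilon > 0$ we have $\esp \left[ S(f,s_n)^l \right] \leq \epsilon^l + (2C)^l \proba \left( S(f,s_n) > \epsilon \right)$; letting $n \rightarrow \infty$ and using the convergence in probability gives $\limsup_n \esp \left[ S(f,s_n)^l \right] \leq \epsilon^l$, and then $\epsilon \downarrow 0$ delivers $S(f,s_n) \overset{\mathbf{L}^l}{\rightarrow} 0$. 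The main obstacle is precisely the interface between the first two steps: the hypothesis governs only deterministic sequences, and the pathwise monotonicity of $S(f,\cdot)$ is what allows me to compare the genuinely random quantity $S(f,s_n)$ against the deterministic threshold $S(f,\eta)$, thereby decoupling the randomness of $s_n$ from that of $f$. Once convergence in probability is secured, the $\mathbf{L}^l$ promotion via boundedness is entirely routine.
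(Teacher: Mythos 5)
Your proposal is correct and follows essentially the same route as the paper's proof: both reduce the problem to convergence in probability via boundedness of $f$, pick a deterministic threshold $\chi$ (your $\eta$) from the hypothesis, and split on the event $\{ s_n \leq \chi \}$ using the pathwise monotonicity of $s \mapsto S(f,s)$. You merely make explicit two points the paper leaves implicit (the monotonicity justifying the comparison on $\{ s_n \leq \chi \}$, and the detailed $\mathbf{L}^l$ upgrade), which is fine.
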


\begin{proof}
As $f$ is bounded, convergence in $\proba$ implies convergence in $\mathbf{L}^l$ for any $l \geq 1$. Hence it is sufficient to show 
that $S \left( f, s_n \right) \overset{\proba}{\rightarrow} 0$. Let $\eta > 0$ and $\epsilon > 0$, we want to show that $\exists N > 0$ such that $\forall n \geq N$, we have
$$\proba \left( S \left( f, s_n \right) > \eta \right) < \epsilon.$$
$\exists$ non-random $\chi > 0$ such that $\proba \left( S \left( f, \chi \right) > \eta \right) < \frac{\epsilon}{2}$. Also, 
$\exists N > 0$ such that $\forall n \geq N$, $\proba \left( s_n \geq \chi \right) < \frac{\epsilon}{2}$. Thus 
\begin{eqnarray*}
\proba \left( S \left( f, s_n \right) > \eta \right) & = & \proba \left( S \left( f, s_n \right) > \eta,s_n > \chi \right) 
+ \proba \left( S \left( f, s_n \right) > \eta,s_n \leq \chi \right) \\ 
& \leq & \proba \left( s_n > \chi \right) +  \proba \left( S \left( f, \chi \right) > \eta \right) < \epsilon.
\end{eqnarray*}
\end{proof}
We aim to define the approximations of observation times on blocks 
$$\left( K_{i,n} := [ \tau_{i,n}^h, 
\tau_{i+1,n}^h ] \right)_{i \geq 0}.$$ 
We need some definitions first. Let $( C_t^{(i)} )_{i \geq 0}$ a sequence of independent 4-dimensional Brownian motions (i.e. for each $i$, 
$C_t^{(i)}$ is a 4-dimensional Brownian motion), independent of everything we have 
defined so far. We define $\forall i, n \geq 0$, 
 $$
   S_t^{i,n} := \left \{
   \begin{array}{r l}
      \Delta W_{[\tau_{i,n}^{h},\tau_{i,n}^{h} + .]} & \forall t \in [0, \Delta \tau_{i+1,n}^{h}],\\
    \Delta W_{[\tau_{i,n}^{h},\tau_{i+1,n}^{h}]} + C_{t- \Delta \tau_{i+1,n}^{h}}^{(i)} & \forall t \geq \Delta \tau_{i+1,n}^{h},
   \end{array}
   \right .
$$
and
$$\left( \tilde{\tau}_{i,j,n}^k \right)_{j \geq 0 ; k=1,2} = \tilde{\Tau} \left( S^{i,n}, 
\sigma_{\tau_{i,n}^h}, \alpha_n g_{\tau_{i,n}^h}, \Delta X_{[\tau_{i,n}^{h,-}, \tau_{i,n}^{h}]}^{(4)}, \tau_{i,n}^h - \tau_{i,n}^{h,-} 
\right).$$
To keep symmetry in notations, we define for all integers $i$ and $n$ positive integers, 
$\left( \tau_{i,j,n}^{(1)} \right)_{j \geq 0}$ consisting of the observation times of the process 1 after $\tau_{i,n}^h$, 
substracting the value of $\tau_{i,n}^h$, i.e. $\tau_{i,j,n}^{(1)} = \tau_{i^* + j, n}^{(1)} - \tau_{i^*, n}^{(1)}$ where $i^*$ is the (random) index 
on the original grid of process 1 corresponding to $\tau_{i,n}^h$ ($\tau_{i^*,n}^{(1)} = \tau_{i,n}^h$). For process 2, we define
$\tau_{i,0,n}^{(2)} = 0$ and for integers $j \geq 1$, $\tau_{i,j,n}^{(2)} = \tau_{j^* + j - 1, n}^{(2)} - \tau_{i^*, n}^{(1)}$, where 
$j^*$ is the index on the original grid of process 2 corresponding to the smallest observation time of process 2 bigger (not necessarily strictly) 
than $\tau_{i,n}^h$. 
We also define $\tau_{i,j,n}^{-}$, $\tau_{i,j,n}^{+}$, $\tau_{i,j,n}^{1C}$, $\tau_{i,j,n}^{1C,-}$, $\tau_{i,j,n}^{1C,+}$, 
$\tilde{\tau}_{i,j,n}^{-}$, $\tilde{\tau}_{i,j,n}^{+}$, $\tilde{\tau}_{i,j,n}^{1C}$, $\tilde{\tau}_{i,j,n}^{1C,-}$, $\tilde{\tau}_{i,j,n}^{1C,+}$ 
following the construction we used to define (\ref{tau-0s}), (\ref{tau-0}), (\ref{tau+0}), (\ref{algo1C}), (\ref{tau1C-0s}), (\ref{tau1C-0}) and (\ref{tau1C+0}). We also set
$$\left( \tilde{\pi}_{i,j,n} \right)_{j \geq 0} = \varPi \left( S^{i,n}, 
\sigma_{\tau_{i,n}^h}, \alpha_n g_{\tau_{i,n}^h}, \Delta X_{[\tau_{i,n}^{h,-}, \tau_{i,n}^h]}^{(4)}, \tau_{i,n}^{h} - \tau_{i,n}^{h,-} 
\right).$$

\begin{esptauk}
\label{esptauk}
For $\theta \in \{ (1), (2), 1C \}$, any real $l > 0$, any positive integer $i$ and $n$, any non-negative integer $j$, we have  $0 < C_l^- < C_l^+$ such that
\begin{eqnarray}
\label{esptauk1} C_{l}^- \alpha_n^{2l} < \esp \left[ \left( \Delta \tilde{\tau}_{i,j,n}^{\theta} \right)^{l} \right] \leq C_{l}^+ \alpha_n^{2l},
\end{eqnarray}
where $\Delta \tilde{\tau}_{i,j,n}^{\theta} := \tilde{\tau}_{i,j,n}^{\theta} - \tilde{\tau}_{i,j-1,n}^{\theta}$ and 
\begin{eqnarray}
\label{esptauk2} C_{l}^- \alpha_n^{2l} < \esp \left[ \left( \Delta \tau_{i,n}^{(k)} \right)^{l} \right] \leq C_{l}^+ \alpha_n^{2l}.
\end{eqnarray}
\end{esptauk}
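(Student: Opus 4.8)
The plan is to recognize each of these increments as a first exit time of a continuous martingale from an interval whose endpoints are $\alpha_n$ times order-one functions, and then to peel off the factor $\alpha_n^{2l}$ by Brownian scaling, reducing everything to uniform moment bounds for order-one exit times. I would first treat the tilde quantities, where the driving time process is a martingale with \emph{constant} covariance: in the block construction $\tilde{X} = \tilde{\sigma}\tilde{W}$ with $S^{i,n}$ a Brownian motion, so each coordinate $\tilde{X}^{(3)}, \tilde{X}^{(4)}$ is a one-dimensional Brownian motion whose variance rate $\zeta^{k,k}$ is bounded between two positive constants by the localization (\ref{smsp}) and the definition of $\mathcal{M}$ in (\ref{M}) (the drift having been suppressed in the preliminary reductions). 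For $\theta = (1)$ the increment $\Delta \tilde{\tau}_{i,j,n}^{(1)}$ is exactly the exit time of the increment of $\tilde{X}^{(3)}$ from $[\alpha_n \tilde{d}^{(1)}, \alpha_n \tilde{u}^{(1)}]$; rescaling space by $\alpha_n^{-1}$ and time by $\alpha_n^{-2}$ produces, by Brownian scaling, the exit time $\hat{T}$ of a Brownian motion with the same variance rate from an interval whose endpoints $\tilde{d}^{(1)}, \tilde{u}^{(1)}$ lie in $[g^-, g^+]$ in absolute value by (\ref{A3g}). Hence $\esp[(\Delta \tilde{\tau}^{(1)})^l] = \alpha_n^{2l}\,\esp[\hat{T}^l]$, and it remains to bound $\esp[\hat{T}^l]$ above and below, uniformly over all admissible rates in $\mathcal{M}$ and grids in $\mathcal{G}(c)$.

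For these order-one exit-time moments I would use two standard estimates. Since each one-sided barrier lies in $[g^-, g^+]$, the event $\{\hat{T} > t\}$ forces the rescaled process $\beta$ to stay within $(-g^+, g^+)$ up to time $t$; the reflection principle gives $\proba(\hat{T} > t) \le \proba(\sup_{s \le t}|\beta_s| < g^+)$, and iterating over unit-length time intervals with the Markov property yields a geometric tail $\proba(\hat{T} > m) \le \varrho^{\,m}$ with $\varrho < 1$ depending only on $(g^+, \sigma^-)$, so $\esp[\hat{T}^l] = \int_0^\infty l\, t^{l-1}\,\proba(\hat{T} > t)\,dt \le C_l^+$ for every $l > 0$. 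For the lower bound, exit before time $t$ requires the process to travel a distance at least $g^-$, and $\proba(\hat{T} \le t) \le 4\,\proba(\beta_t \ge g^-)$ is small for small $t$; choosing a fixed $t_0 > 0$ with $\proba(\hat{T} > t_0) \ge \tfrac{1}{2}$ gives $\esp[\hat{T}^l] \ge t_0^{\,l}\,\proba(\hat{T} > t_0) \ge C_l^- > 0$. Both constants depend only on $c \in \mathcal{C}$ and on $(\sigma^-, \sigma^+)$, hence are uniform.

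It then remains to transfer these bounds to $\theta \in \{(2), 1C\}$ and to the genuine increments $\Delta \tau_{i,n}^{(k)}$. The case $\theta = (2)$ is identical to $\theta = (1)$ for the generic step. For $\theta = 1C$ I would argue by domination: from the algorithm (\ref{algo1C}), $\tau^{1C}_{i+1}$ is the first observation time of asset $1$ strictly after the first observation time of asset $2$ that is $\ge \tau_i^{1C}$, so on one hand $\Delta \tilde{\tau}^{1C}$ dominates a full asset-$1$ exit interval (yielding the lower bound), and on the other hand it is at most the sum of one asset-$2$ interval and one asset-$1$ interval (yielding the upper bound via Minkowski's inequality for $l \ge 1$ and the $c_r$-inequality for $0 < l < 1$). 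Finally, for the real increments $\Delta \tau_{i,n}^{(k)}$ the volatility is time-varying, so exact Brownian scaling is replaced by the Dambis--Dubins--Schwarz time change: writing the time process as $\beta_{\langle \cdot \rangle}$ with $\langle \cdot \rangle_t = \int_0^t \zeta^{k,k}_s\,ds$ and $\zeta^{k,k}_s$ bounded between two positive constants, one has $(\sigma^+)^{-2}\,\Delta \langle \cdot \rangle \le \Delta \tau^{(k)} \le (\sigma^-)^{-2}\,\Delta \langle \cdot \rangle$, and $\Delta \langle \cdot \rangle$ is the exit time of the standard Brownian motion $\beta$ from an $\alpha_n$-scaled interval, to which the estimates above apply after sandwiching the time-varying barriers between the constant intervals $[-\alpha_n g^+, \alpha_n g^+]$ and $[-\alpha_n g^-, \alpha_n g^-]$.

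The principal obstacle is making the exit-time moment bounds genuinely uniform over the whole classes $\mathcal{M}$ and $\mathcal{G}(c)$, in particular controlling the time-varying adapted barriers by comparison with constant ones, and handling the $1C$ increment, which is not a single clean exit time but a concatenation whose two pieces must each be controlled. A smaller subtlety is the very first step for asset $2$, where the process starts from an interior point $x$ that the definition of $\mathcal{S}_{\tilde{g}}$ allows to be arbitrarily close to a barrier: the upper bound is unaffected, but the uniform lower bound for that single increment requires that the entrance law of the Markov chain $\tilde{Z}_i$ keep positive mass away from the boundary, which follows from its uniform ergodicity.
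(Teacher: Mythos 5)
Your proposal is correct and follows essentially the same route as the paper's (very terse) proof: Brownian scaling plus standard exit-time moment bounds from Borodin and Salminen for $\theta\in\{(1),(2)\}$ under the uniform bounds (\ref{A3g}) and (\ref{smsp}), the Dambis--Dubins--Schwarz time change for the genuine increments $\Delta\tau_{i,n}^{(k)}$, and the decomposition of the $1C$ increment into an asset-$2$ waiting piece plus an asset-$1$ exit piece. You supply considerably more detail than the paper does (the geometric-tail argument for all moments, the sandwiching of time-varying barriers, and the subtlety of the first asset-$2$ increment starting near a barrier, which the paper does not address), but the underlying argument is the same.
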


\begin{proof}
For $\theta \in \{(1), (2) \}$, because of (\ref{esptauk}) and (\ref{smsp}), we can deduce (\ref{esptauk1}) using well-known result on exit zone of a Brownian 
motion (see for instance Borodin and Salminen (2002)). (\ref{esptauk2}) can be 
deduced using Dubins-Schwarz theorem for continuous local martingale (see, e.g. Th. $V.1.6$ in Revuz and Yor (1999)). 
If $\theta = 1C$ writing $ \Delta \tilde{\tau}_{i,j,n}^{\theta} = \left( \tilde{\tau}_{i,j-1,n}^{\theta,+} - \tilde{\tau}_{i,j-1,n}^{\theta} \right) 
+ \left( \tilde{\tau}_{i,j,n}^{\theta,+} - \tilde{\tau}_{i,j-1,n}^{\theta,+} \right)$ and working those two terms, we can obtain (\ref{esptauk1}) and 
(\ref{esptauk2}).
\end{proof}

Now, we define for $\theta \in \{(1), (2), 1C, h \}$ the number of observation times before $t$ as
$$N_{t,n}^{\theta} = \sup \{i : \tau_{i,n}^{\theta} < t \}.$$
We have the following lemma
\begin{numb}
 \label{numb}
For $\theta \in \{(1), (2), 1C \}$, we have that the sequence $\left( \alpha_n^{2} N_{t,n}^{\theta} \right)_{n \geq 1}$ is tight.
\end{numb}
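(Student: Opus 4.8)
The plan is to show that $\alpha_n^2 N_{t,n}^{\theta}$ has uniformly small upper tails by controlling the Laplace transform of the partial sums of durations. Since $\tau_{0,n}^{\theta}=0$ we have $\tau_{m,n}^{\theta}=\sum_{i=1}^{m}\Delta\tau_{i,n}^{\theta}$, and because the $\tau_{i,n}^{\theta}$ are increasing the events $\{N_{t,n}^{\theta}\geq m\}$ and $\{\tau_{m,n}^{\theta}<t\}$ coincide. Thus, writing $m=m_n:=\lceil M\alpha_n^{-2}\rceil$ for a constant $M>0$ to be chosen, tightness will follow once I show that $\sup_n\proba(\tau_{m_n,n}^{\theta}<t)$ can be made arbitrarily small by taking $M$ large. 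Because $t\leq 1$ and, for each fixed $n$, $N_{t,n}^{\theta}<\infty$ a.s. (the durations are first-exit times of nondegenerate continuous martingales from intervals of half-width at least $g^-\alpha_n>0$, so they cannot accumulate on $[0,1]$), it suffices to control this tail for all large $n$ and then absorb the finitely many remaining indices.

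The key step will be a uniform one-step estimate on the conditional Laplace transform at the scale $\lambda_n:=c\alpha_n^{-2}$. For $\theta\in\{(1),(2)\}$, conditionally on $\mathcal{F}_{\tau_{i-1,n}^{\theta}}$ the increment $\Delta\tau_{i,n}^{\theta}$ is a fresh first-exit time of the associated time process from the scaled interval determined by $\alpha_n d_t^{(k)}$ and $\alpha_n u_t^{(k)}$; by the Dubins--Schwarz time change the time process is a Brownian motion run at a clock whose rate lies in $[(\sigma^-)^2,(\sigma^+)^2]$ by (\ref{smsp}), while the half-widths are bounded below by $g^-\alpha_n$ by (\ref{A3g}). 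Exactly the exit-time computations underlying Lemma \ref{esptauk} then produce a constant $r=r(c)\in(0,1)$, independent of $i$ and $n$, with $\esp[e^{-\lambda_n\Delta\tau_{i,n}^{\theta}}\mid\mathcal{F}_{\tau_{i-1,n}^{\theta}}]\leq r$. For $\theta=1C$ the same bound holds after noting that $\tau_{i-1,n}^{1C}$ is itself an observation time of the first asset and that, by (\ref{algo1C}), $\Delta\tau_{i,n}^{1C}$ dominates the first asset-one duration started at $\tau_{i-1,n}^{1C}$; hence $e^{-\lambda_n\Delta\tau_{i,n}^{1C}}\leq e^{-\lambda_n\Delta\tau^{(1)}}$ and the asset-one estimate transfers.

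Chaining this estimate through the tower property over $i=1,\ldots,m_n$ (each $\tau_{i,n}^{\theta}$ being a stopping time) yields $\esp[e^{-\lambda_n\tau_{m_n,n}^{\theta}}]\leq r^{m_n}$. Markov's inequality then gives $\proba(\tau_{m_n,n}^{\theta}<t)\leq e^{\lambda_n t}\esp[e^{-\lambda_n\tau_{m_n,n}^{\theta}}]\leq\exp(\alpha_n^{-2}(c+M\log r))$, up to the harmless ceiling factor. Since $\log r<0$, choosing $M>c/\log(1/r)$ makes the exponent negative, so the bound tends to $0$ as $\alpha_n\to 0$; enlarging $M$ further to accommodate the finitely many small indices $n$, where $\alpha_n^2 N_{t,n}^{\theta}$ is an a.s.\ finite random variable, establishes tightness of $(\alpha_n^2 N_{t,n}^{\theta})_{n\geq 1}$.

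I expect the main obstacle to be the uniform one-step Laplace estimate: one must verify that the bound $r<1$ at the scale $\lambda_n\sim\alpha_n^{-2}$ holds simultaneously for all $i$ and $n$, handling the possibly asymmetric and random boundaries and the interior starting point of each exit problem, and then transferring this bound to the $1C$ durations via the domination argument. Once this step is secured, the tower-property chaining and the final Markov-inequality tail bound are routine.
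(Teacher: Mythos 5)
Your argument is correct, but it takes a genuinely different route from the paper. The paper disposes of the cases $\theta\in\{(1),(2)\}$ by referring to the proof of Lemma 4.6 in Robert and Rosenbaum (2012) together with Lemma \ref{snk}; the essence of that argument is that each duration forces the relevant time process to move by at least $g^-\alpha_n$, so $(g^-\alpha_n)^2 N_{t,n}^{(k)}$ is dominated by the realized variance of $X^{(t,k)}$ over $[0,t]$, whose boundedness in probability (under the localization (\ref{smsp})) gives tightness. Your proof replaces this quadratic-variation counting by an exponential Chebyshev bound: the pathwise domination of each duration by the exit time of the Dubins--Schwarz Brownian motion from $[-g^-\alpha_n,g^-\alpha_n]$ run at clock rate at most $(\sigma^+)^2$ yields $\esp\big[e^{-c\alpha_n^{-2}\Delta\tau_{i,n}^{\theta}}\mid\mathcal{F}_{\tau_{i-1,n}^{\theta}}\big]\leq r=\esp\big[e^{-c(g^-/\sigma^+)^2T}\big]<1$ with $T$ the exit time of a standard Brownian motion from $[-1,1]$, uniformly in $i$ and $n$; chaining and Markov's inequality then give exponentially small tails for $\proba(\alpha_n^2N_{t,n}^{\theta}\geq M)$. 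This is stronger than what tightness requires, fully self-contained, and it reuses only the ingredients already underlying Lemma \ref{esptauk} ((\ref{A3g}) and (\ref{smsp})). For $\theta=1C$ the paper simply notes $N_{t,n}^{1C}\leq N_{t,n}^{(1)}$, since by (\ref{algo1C}) the $1C$ times form a subsequence of the asset-one times; your domination of $\Delta\tau_{i,n}^{1C}$ by the asset-one duration started at $\tau_{i-1,n}^{1C}$ is the same observation phrased at the level of durations. Both routes are valid; yours buys an explicit, citation-free tail bound at the cost of the Laplace-transform bookkeeping.
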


\begin{proof}
 Here for $\theta \in \{(1), (2)\}$ we can follow the proof of Lemma 4.6 in Robert and Rosenbaum (2012) together with Lemma \ref{snk}. Also, by definition we have 
 $N_{t,n}^{1C} \leq N_{t,n}^{(1)}$ so we also deduce the tightness of $\left( \alpha_n^{2} N_{t,n}^{1C} \right)_{n \geq 1}$.
\end{proof}

\begin{sumtight}
\label{sumtight}
 Let $\left( U_{i,n} \right)_{i, n \geq 1}$ an array of positive random variables and $\theta \in \{(1), (2), 1C \}$. If 
 \begin{eqnarray}
 \label{sumtightcond} \forall u > 0, & \sum_{i=1}^{\llcorner 
 u \alpha_n^{-2} \lrcorner} U_{i,n} \overset{\proba}{\rightarrow} 0
 \end{eqnarray}
 then $\sum_{i=1}^{N_{t,n}^{\theta}} U_{i,n} \overset{\proba}{\rightarrow} 0$. 
 Also, if $\forall u > 0$, $\sum_{i=1}^{\llcorner 
 u \alpha_n^{-2} h \left( n \right)^{-1} \lrcorner} 
 U_{i,n} \overset{\proba}{\rightarrow} 0$, then $\sum_{i=1}^{N_{t,n}^{h}} U_{i,n} \overset{\proba}{\rightarrow} 0$.
\end{sumtight}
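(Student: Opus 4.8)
The plan is to reduce both statements to the given hypotheses on deterministically-indexed partial sums, using the tightness supplied by Lemma \ref{numb} together with the fact that the $U_{i,n}$ are positive, so that $\sum_{i=1}^{m} U_{i,n}$ is nondecreasing in the upper index $m$. First I fix $\eta > 0$ and $\epsilon > 0$ and aim to produce $N$ such that $\proba\big(\sum_{i=1}^{N_{t,n}^{\theta}} U_{i,n} > \eta\big) < \epsilon$ for all $n \geq N$.

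For the first claim, by Lemma \ref{numb} the sequence $(\alpha_n^2 N_{t,n}^{\theta})$ is tight, so there is a nonrandom $u > 0$ with $\proba(\alpha_n^2 N_{t,n}^{\theta} > u) < \epsilon/2$ for every $n$. Since $N_{t,n}^{\theta}$ is integer-valued, $\{ N_{t,n}^{\theta} > \llcorner u \alpha_n^{-2} \lrcorner \} \subseteq \{ \alpha_n^2 N_{t,n}^{\theta} > u \}$, while on the complementary event monotonicity gives $\sum_{i=1}^{N_{t,n}^{\theta}} U_{i,n} \leq \sum_{i=1}^{\llcorner u \alpha_n^{-2} \lrcorner} U_{i,n}$. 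Splitting on these two events yields
$$\proba\Big( \sum_{i=1}^{N_{t,n}^{\theta}} U_{i,n} > \eta \Big) \leq \proba\Big( \sum_{i=1}^{\llcorner u \alpha_n^{-2} \lrcorner} U_{i,n} > \eta \Big) + \proba\big( \alpha_n^2 N_{t,n}^{\theta} > u \big).$$
Applying hypothesis (\ref{sumtightcond}) with this fixed $u$ makes the first term smaller than $\epsilon/2$ for $n$ large, and the second term is already below $\epsilon/2$; hence the left-hand side is $< \epsilon$, which proves the first assertion.

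The second claim follows by the identical argument once tightness of $(\alpha_n^2 h_n N_{t,n}^{h})$ is established. To get it, I would first recall that $\tau_{i,n}^h = \tau_{i h_n,n}^{1C}$, so $\tau_{i,n}^h < t$ holds exactly when $i h_n \leq N_{t,n}^{1C}$; this gives $N_{t,n}^{h} = \llcorner N_{t,n}^{1C} / h_n \lrcorner$, and in particular $h_n N_{t,n}^{h} \leq N_{t,n}^{1C}$. Multiplying by $\alpha_n^2$ and invoking the tightness of $(\alpha_n^2 N_{t,n}^{1C})$ from Lemma \ref{numb} shows that $(\alpha_n^2 h_n N_{t,n}^{h})$ is dominated by a tight sequence of nonnegative variables, hence is itself tight. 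Choosing $u$ with $\proba(\alpha_n^2 h_n N_{t,n}^{h} > u) < \epsilon/2$ and splitting on $\{ N_{t,n}^{h} \leq \llcorner u \alpha_n^{-2} h_n^{-1} \lrcorner \}$ exactly as above, now invoking the second hypothesis in place of (\ref{sumtightcond}), delivers the conclusion.

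There is no genuine analytical obstacle here; this is a routine random-index replacement argument. The only point requiring care is the bookkeeping with the floor function and the integer-valuedness of the counts, ensuring that $\{ N_{t,n}^{\theta} > \llcorner u \alpha_n^{-2} \lrcorner \}$ is correctly contained in $\{ \alpha_n^2 N_{t,n}^{\theta} > u \}$, together with the clean derivation of $h_n N_{t,n}^{h} \leq N_{t,n}^{1C}$ needed to transfer tightness to the blocked counts in the second part.
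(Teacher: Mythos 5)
Your proof is correct and follows essentially the same route as the paper: split on whether the random count exceeds $\llcorner u\alpha_n^{-2}\lrcorner$, use positivity of the $U_{i,n}$ to bound the sum on the good event by the deterministically-indexed sum, and control the bad event via the tightness of $\alpha_n^2 N_{t,n}^{\theta}$ from Lemma \ref{numb}. The only cosmetic difference is quantifier order (you pick $u$ uniformly in $n$ from tightness up front, whereas the paper takes $\limsup_n$ first and then sends $u\to\infty$), and your explicit derivation of $h_n N_{t,n}^{h}\leq N_{t,n}^{1C}$ for the second claim is a detail the paper leaves implicit.
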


\begin{proof}
Let $\epsilon > 0$ and $ u > 0$.
\begin{eqnarray*}
 \proba \left( \sum_{i=1}^{N_{t,n}^{\theta}} U_{i,n} > \epsilon \right) & = & 
 \proba \Bigg( \sum_{i=1}^{\llcorner u \alpha_n^{-2} \lrcorner} U_{i,n} + 
 \sum_{i= \llcorner u \alpha_n^{-2} \lrcorner + 1}^{N_{t,n}^{\theta}} U_{i,n} \mathbf{1}_{ \{ \llcorner u \alpha_n^{-2} \lrcorner < N_{t,n}^{\theta} \} } \\
 & &
 -\sum_{i= N_{t,n}^{\theta} + 1}^{\llcorner u \alpha_n^{-2} \lrcorner} U_{i,n} \mathbf{1}_{ \{ \llcorner u \alpha_n^{-2} \lrcorner > N_{t,n}^{\theta} \} } > 
 \epsilon \Bigg) \\
& \leq & 
 \proba \left( \sum_{i=1}^{\llcorner u \alpha_n^{-2} \lrcorner} U_{i,n} + 
 \sum_{i= \llcorner u \alpha_n^{-2} \lrcorner + 1}^{N_{t,n}^{\theta}} U_{i,n} \mathbf{1}_{ \{ \llcorner u \alpha_n^{-2} \lrcorner < N_{t,n}^{\theta} \} } 
  > \epsilon \right) \\
& \leq & \proba \left( \sum_{i=1}^{\llcorner u \alpha_n^{-2} \lrcorner} U_{i,n} > \frac{\epsilon}{2} \right)
+ \proba \left( \sum_{i= \llcorner u \alpha_n^{-2} \lrcorner + 1}^{N_{t,n}^{\theta}} U_{i,n} \mathbf{1}_{ \{ \llcorner u \alpha_n^{-2} \lrcorner < 
N_{t,n}^{\theta} \} } 
  > \frac{\epsilon}{2} \right) \\
& \leq & \proba \left( \sum_{i=1}^{\llcorner u \alpha_n^{-2} \lrcorner} U_{i,n} > \frac{\epsilon}{2} \right)
+ \proba \left( \llcorner u \alpha_n^{-2} \lrcorner < N_{t,n}^{\theta} \right).
 \end{eqnarray*}
 We take the  $\underset{n \rightarrow \infty}{\limsup}$ and use (\ref{sumtightcond}). We obtain
 $$\underset{n \rightarrow \infty}{\limsup} \proba \left( \sum_{i=1}^{N_{t,n}^{\theta}} U_{i,n} > \epsilon \right) 
 \leq \underset{n \rightarrow \infty}{\limsup} \text{ } \proba \left( \llcorner u \alpha_n^{-2} \lrcorner < N_{t,n}^{\theta} \right).$$
 We now tend $u \rightarrow \infty$ and conclude using Lemma \ref{numb}. 
 The second statement is proved in the same way.
\end{proof}

\begin{scale}
\label{scale}
For any $\alpha > 0$, $\sigma \in \mathcal{M}, g \in \mathcal{G}, (x, u) \in \mathcal{S}_g$, we have that 
\begin{eqnarray*}
\psi^{AV} \left(\sigma, g, x, u \right) & = & \alpha^{-4} \psi^{AV} \left(\sigma, \alpha g, \alpha x, \alpha^2 u \right), \\
\psi^{AC1} \left(\sigma, g, x, u \right) & = & \alpha^{-3} \psi^{AC1} \left(\sigma, \alpha g, \alpha x, \alpha^2 u \right), \\
\psi^{AC2} \left(\sigma, g, x, u \right) & = & \alpha^{-3} \psi^{AC2} \left(\sigma, \alpha g, \alpha x, \alpha^2 u \right), \\
\psi^{\tau} \left(\sigma, g, x, u \right) & = & \alpha^{-2} \psi^{\tau} \left(\sigma, \alpha g, \alpha x, \alpha^2 u \right).
\end{eqnarray*}
\end{scale}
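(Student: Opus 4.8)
The engine of the proof is the self-similarity of Brownian motion: for every $c>0$ the rescaled process $(c^{-1}\tilde{W}_{c^2 t})_{t\ge 0}$ is again a standard four-dimensional Wiener process, so that $\tilde{X}=\tilde{\sigma}\tilde{W}$ satisfies $(\alpha^{-1}\tilde{X}_{\alpha^2 t})_{t\ge 0}\stackrel{d}{=}(\tilde{X}_t)_{t\ge 0}$ for every $\alpha>0$. The plan is to run the construction of Section \ref{appdef} with the right-hand-side parameters $(\tilde{\sigma},\alpha g,\alpha x,\alpha^2 u)$, apply this rescaling, and show that it reproduces exactly the construction with parameters $(\tilde{\sigma}, g, x, u)$, with every time contracted by $\alpha^2$ and every price increment contracted by $\alpha$.

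First I would set up the coupling: let $\tilde{X}$ be the driving process for the parameters $(\tilde{\sigma},\alpha g,\alpha x,\alpha^2 u)$ and put $Y_t:=\alpha^{-1}\tilde{X}_{\alpha^2 t}$, which by the display above has the law of the driving process for $(\tilde{\sigma}, g, x, u)$. Then I would check that the defining hitting times are carried to one another. For the first asset the boundary for $\tilde X$ is $[\alpha\tilde d^{(1)}(\cdot),\alpha\tilde u^{(1)}(\cdot)]$, so after dividing the state by $\alpha$ the crossing condition for $Y$ becomes $\Delta Y^{(3)}\notin[\tilde d^{(1)},\tilde u^{(1)}]$, hence $\tilde\tau_i^{(1)}$ equals $\alpha^2$ times the $i$th $Y$-sampling time; the initialization $(\alpha x,\alpha^2 u)\mapsto(x,u)$ feeding $\tilde\tau_1^{(2)}$ matches because the first coordinate is a price (scaling by $\alpha$) and the second is a time (scaling by $\alpha^2$). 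The same bookkeeping propagates through the interlocking definitions of $\tilde\tau^-,\tilde\tau^+$, the $1C$ algorithm (\ref{algo1C}), and the neighbours $\tilde\tau^{1C,\pm}$, since these are built solely from $\min$ and $\max$ of the two grids; consequently every $1C$ time increment scales by $\alpha^2$ and every increment $\Delta\tilde X^{(k)}$ scales by $\alpha$.

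With the transformation of the building blocks in hand, I would read off the four scalings by counting powers of $\alpha$ in the compensated increments. In the constant-volatility setting $\tilde N_i=\Delta\tilde X^{(1)}\,\Delta\tilde X^{(2)}-\zeta^{1,2}\,\Delta\tilde\tau_i^{1C}$ is a sum of a product of two price increments (order $\alpha^2$) and the fixed constant $\zeta^{1,2}$ times a time increment (again order $\alpha^2$), so $\tilde N_i$ picks up one factor $\alpha^2$. Hence $\psi^{AV}=\esp[\tilde N_2^2+2\tilde N_2\tilde N_3]$ picks up $\alpha^4$, both $\psi^{AC1}=\esp[\tilde N_2\,\Delta\tilde X^{(1)}]$ and $\psi^{AC2}=\esp[\tilde N_2\,\Delta\tilde X^{(2)}]$ pick up $\alpha^2\cdot\alpha=\alpha^3$, and $\psi^{\tau}=\esp[\Delta\tilde\tau_2^{1C}]$ picks up $\alpha^2$. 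Writing $\psi^{\theta}(\tilde\sigma,\alpha g,\alpha x,\alpha^2 u)=\alpha^{p}\psi^\theta(\tilde\sigma,g,x,u)$ with $p\in\{4,3,3,2\}$ and rearranging gives the four stated identities.

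The step I expect to be the main obstacle is verifying that the grid is carried over \emph{exactly}, not merely approximately. Under $Y_t=\alpha^{-1}\tilde X_{\alpha^2 t}$ the argument of the up and down functions is itself dilated, so the effective boundary for $Y$ is $s\mapsto\tilde u^{(k)}(\alpha^2 s)$ rather than $\tilde u^{(k)}(s)$; the identities therefore require that the dependence of $\tilde g$ on its second argument not interfere, which is immaterial for all the grids of Section \ref{ex1}, as they are constant in $s$. I would formulate the argument for grids constant in $s$ — the case used in the sequel — and emphasize that the price-space scaling $\alpha g$ together with the paired space/time dilation $(\alpha x,\alpha^2 u)$ is precisely the unique rescaling compatible with Brownian self-similarity. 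The remaining verifications — that the rescaled $\tilde\tau^{1C}$'s and the $\pm$-neighbours coincide in law with the unrescaled ones, and that expectations may be passed through the almost sure identities — are routine given Lemma \ref{esptauk}, which guarantees that the relevant moments are finite.
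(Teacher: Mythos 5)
Your argument is essentially the paper's: Brownian self-similarity (the process $(\alpha^{-1}\tilde{W}_{\alpha^{2}t})_{t\geq 0}$ is again a standard Wiener process), a coupling carrying the construction for $(\sigma,\alpha g,\alpha x,\alpha^{2}u)$ onto the one for $(\sigma,g,x,u)$ with every time contracted by $\alpha^{2}$ and every increment by $\alpha$, and then a power count in the compensated increments. The paper records exactly this through the one-dimensional identity (\ref{scale1}) for a single exit time and its exit value, and then invokes the strong Markov property to propagate it through the interlocked two-asset construction; your exponents ($\tilde{N}_i$ of order $\alpha^{2}$, hence $4,3,3,2$) are correct.

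The one substantive point is the obstacle you flag yourself: under the coupling the argument of $\tilde{d}^{(k)},\tilde{u}^{(k)}$ is dilated, so the identities hold exactly only if either (i) the grid is constant in $s$, or (ii) ``$\alpha g$'' is read as the joint space--time dilation $s\mapsto\alpha g(\alpha^{-2}s)$. Your resolution --- restricting to grids constant in $s$ --- proves strictly less than the lemma asserts: $\mathcal{G}(c)$ contains $s$-dependent grids (Assumption (A3) only imposes the derivative bound (\ref{A3der})), and the identity is invoked in the proof of Lemma \ref{holdingconst} for $\alpha_n g_{\tau_{i-1,n}^h}$ with a general $g_t\in\mathcal{G}(c)$, not only for the $s$-constant examples of Section \ref{ex1}. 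To reach the stated generality you must either adopt reading (ii), under which your coupling closes verbatim for every $g\in\mathcal{G}$ (and $(\alpha x,\alpha^{2}u)\in\mathcal{S}_{\alpha g}$ holds exactly), or settle for an asymptotic version of the four identities with an error term controlled through (\ref{A3der}). Be aware that the paper's own computation carries the same ambiguity --- the step $\inf\{t>0 : W_{\alpha^{2}t}\notin[\alpha d(t),\alpha u(t)]\}\overset{\mathcal{L}}{=}\alpha^{-2}\tau_{\alpha}$ silently requires (i) or (ii) --- so you have correctly isolated the delicate point, but your proposed fix does not yet cover the lemma as stated.
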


\begin{proof}
 For any Brownian motion $\left( W_t \right)_{t \geq 0}$, by the scale property we have that $\left( W_t \right)_{t \geq 0}
 \overset{\mathcal{L}}{=} \left( \alpha^{-1} W_{\alpha^2 t} \right)_{t \geq 0}$. Thus, if we define $\tau = \inf \{t > 0 \text{ s.t. } W_t 
  \notin [d (t), u (t)] \}$ and $\tau_{\alpha} = \inf \{t > 0 \text{ s.t. } W_{t} \notin [\alpha d (t), \alpha u(t)] \}$, we have that 
\begin{eqnarray*} 
\tau \overset{\mathcal{L}}{=} \inf \{t > 0 \text{ s.t. } W_{\alpha^2 t} 
  \notin [\alpha d (t), \alpha u (t)] \} \overset{\mathcal{L}}{=} \alpha^{-2} \tau_{\alpha}.
\end{eqnarray*}
We deduce that
\begin{eqnarray}
\label{scale1} 
\left( \tau, W_{\tau} \right) \overset{\mathcal{L}}{=} \left( \alpha^{-2} \tau_{\alpha}, W_{\alpha^{-2} \tau_{\alpha}} \right) \overset{\mathcal{L}}{=} \left( \alpha^{-2} \tau_{\alpha}, \alpha W_{\tau_{\alpha}} \right).
\end{eqnarray}
We can prove the lemma based on the way we proved (\ref{scale1}), at the cost of 2-dimension definitions that would be more involved and 
straightforward applications of Strong Markov property of Brownian motions that we won't write, so that we don't lose ourselves in the technicality of this 
proof. 
\end{proof}
We introduce the number of points in the $i$th block in the $k$th process as the following
$$N_{i,n}^{(k)} = \max \{j \geq 0 \text{ s.t. } \tau_{i, n}^h + \tau_{i,j,n}^{(k)} \leq \tau_{i+1,n}^h \}. $$
We also introduce the total number of points in the $i$th block $N_{i,n} = N_{i,n}^{(1)} + N_{i,n}^{(2)}$. 
We show now that we can control uniformly 
the error of the approximations of the observation times.
\begin{tauapp}
\label{tauapp}
 Let $l \geq 1$, we have that 
 \begin{eqnarray}
 \label{tauapp1} \sup_{i \geq 0 \text{ , } 2 \leq j \leq h_n} \esp \left[ \Big| \Delta \tau_{i,j,n}^{1C} - 
 \Delta \tilde{\tau}_{i,j,n}^{1C} \Big|^l \right] = o_p \left( \alpha_n^{2l} \right)
 \end{eqnarray}
 and
 \begin{eqnarray}
 \label{tauapp2} \sup_{i \geq 0 \text{ , } 2 \leq j \leq h_n} \esp \left[ \Big| \Delta \tau_{i,j,n}^{1C,-,+} - 
 \Delta \tilde{\tau}_{i,j,n}^{1C,-,+} \Big|^l \right] = o_p \left( \alpha_n^{2l} \right).
 \end{eqnarray}

\end{tauapp}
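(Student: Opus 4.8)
The plan is to reduce both statements to a single uniform bound on the one-asset hitting-time increments, and then to carry out a sensitivity analysis of those hitting times with respect to the two perturbations introduced by the approximation: freezing the volatility matrix at $\sigma_{\tau_{i,n}^h}$ and freezing the grid at $g_{\tau_{i,n}^h}$.

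First I would reduce (\ref{tauapp1}) and (\ref{tauapp2}) to the analogous bounds for the raw one-asset increments $\Delta \tau_{i,j,n}^{(k)}$ versus $\Delta \tilde{\tau}_{i,j,n}^{(k)}$, for $k=1,2$. By the construction of the algorithm (\ref{algo1C}) and of the quantities $\tau_{i,j,n}^{1C,\pm}$ and $\Delta X^{(2)}_{\tilde\tau^{1C,-,+}}$ (their tilded analogues being defined by putting tildes on the same formulas), each $1C$ and $1C,-,+$ increment is a combination of an a.s.\ bounded number of consecutive single-asset increments. A triangle inequality together with $|a+b|^l \le 2^{l-1}(|a|^l+|b|^l)$ then transfers a uniform $o_p(\alpha_n^{2l})$ bound on the one-asset increments to both statements without losing any factor depending on $i,j,n$.

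Next, for a single step I would couple the real and approximate processes on the common Brownian motion $S^{i,n}$, which by construction equals $\Delta W_{[\tau_{i,n}^h,\cdot]}$ until the block edge, and introduce the intermediate sequence $\bar\tau_{i,j,n}^{(k)}$ obtained by letting the \emph{true} time-process $X^{(k+2)}$ exit the \emph{frozen} corridor $\alpha_n g_{\tau_{i,n}^h}^{(k)}$. This splits the error into a grid-freezing part ($\tau$ versus $\bar\tau$: same path, corridors differing by $\|\alpha_n(g_t-g_{\tau_{i,n}^h})\|_\infty \le \alpha_n K (s_n^h)^d$ via (\ref{A3sup})) and a volatility-freezing part ($\bar\tau$ versus $\tilde\tau$: same frozen corridor, paths differing by $\int(\sigma_s-\sigma_{\tau_{i,n}^h})\,dW_s$, of $L^l$-size $S(\sigma,s_n^h)\alpha_n$ over a step of length $O(\alpha_n^2)$ by Burkholder--Davis--Gundy). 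To turn a corridor- or path-perturbation of size $\varepsilon$ into a hitting-time perturbation I would invoke the nondegeneracy (\ref{smsp}): the relevant one-dimensional projection has variance bounded below by $(\sigma^-)^2$, so its exit position has a bounded density and the exit time is quantitatively continuous in both the boundary and the driving path. Combined with the scaling of Lemma \ref{scale} and the two-sided moment bounds of Lemma \ref{esptauk} (each increment is $\Theta(\alpha_n^2)$ in every $L^l$), a boundary shift $\varepsilon$ costs a time change of order $\alpha_n\varepsilon$ and a volatility shift $\delta$ a time change of order $\alpha_n^2\delta$; feeding in $\varepsilon\sim\alpha_n(s_n^h)^d$ and $\delta\sim S(\sigma,s_n^h)$ produces a one-step $L^l$ bound of order $\big((s_n^h)^{d}+S(\sigma,s_n^h)\big)^l\,\alpha_n^{2l}$. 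Since the moduli $(s_n^h)^d$ and $S(\sigma,s_n^h)$ do not depend on $j$, and $s_n^h\to 0$ a.s.\ by Lemma \ref{snk} while Lemma \ref{s} upgrades $S(\sigma,s_n^h)\overset{\proba}{\to}0$ to convergence in every $L^l$, this prefactor is $o_p(1)$ uniformly in $i$ and $2\le j\le h_n$, which is the desired conclusion.

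I expect the main obstacle to be the accumulation of the coupling lag $e_{i,j,n}:=\tau_{i,j,n}^{(k)}-\tilde\tau_{i,j,n}^{(k)}$ across the up to $h_n$ steps of a block. Because the two sequences restart their increments at their own slightly different hitting times, the increment entering step $j$ reads a portion of $S^{i,n}$ shifted by $e_{i,j-1,n}$, and a time shift of size $e$ displaces the process by order $\sqrt{e}$, which re-enters the step-$j$ hitting error through the sensitivity above. Controlling this requires a Gronwall-type recursion for $\esp[|e_{i,j,n}|^l]$ that must not blow up before $j=h_n$; this is exactly where the precise growth restrictions on $h_n$ referenced in the footnote to Corollary \ref{estcor} (compatible with (\ref{infty})--(\ref{hnalphan})) and the requirement $d>1/2$ in (\ref{A3sup}) are used, so that the contributions summed over a block remain $o_p(\alpha_n^{2l})$ per step. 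A minor additional point is that the approximate times may slightly overshoot the block edge where $S^{i,n}$ switches to the independent continuation $C^{(i)}$; since block lengths vanish by Lemma \ref{snk}, both the probability and the size of such overshoots are negligible and do not affect the bound.
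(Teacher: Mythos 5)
Your overall strategy matches the paper's: couple the true and approximated times on the same driving Brownian motion, split the one-step error into a grid-freezing part controlled by (\ref{A3sup}) and a volatility-freezing part controlled by the modulus $S(\sigma,\cdot)$, convert these perturbations into hitting-time perturbations via a boundary-crossing sensitivity estimate (the paper uses Theorem 2 of P\"otzelberger and Wang (2001) for this), and then control the accumulation over a block by recursion in $j$, exploiting the freedom to slow down the growth of $h_n$. Those parts of your proposal are sound.

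The genuine gap is in your very first reduction. You assert that each $1C$ (and $1C,-,+$) increment is ``a combination of an a.s.\ bounded number of consecutive single-asset increments,'' and that a triangle inequality therefore transfers the one-asset bound to (\ref{tauapp1}) and (\ref{tauapp2}). This fails on two counts. First, which single-asset increments are combined into a given $1C$ increment is determined by the \emph{interleaving order} of the two assets' observation times (the algorithm (\ref{algo1C}) waits for an asset-2 time, then takes the next asset-1 time), and this order is a priori different for the true sequence $(\tau_{i,j,n}^{(k)})$ and the approximated sequence $(\tilde\tau_{i,j,n}^{(k)})$. If the orders disagree even once, $\tau_{i,j,n}^{1C}$ and $\tilde\tau_{i,j,n}^{1C}$ correspond to \emph{different} underlying asset-1 times and their difference is of full order $\alpha_n^2$, not $o(\alpha_n^2)$. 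The paper devotes its fifth step to proving $\proba(\forall j\le M_n,\ O_{i,j,n}=\tilde O_{i,j,n})\to 1$ uniformly in $i$, by comparing the time-error $|\tau^{(k)}_{i,j,n}-\tilde\tau^{(k)}_{i,j,n}|\le \alpha_n^2\epsilon_n$ from the fourth step against a lower bound on consecutive gaps in the merged grid; you need an argument of this type. Second, the number of single-asset observations entering a block (hence a $1C$ increment) is not a.s.\ bounded: it is only controlled in probability, via the localization $\max(N_{i,n}^{(1)},N_{i,n}^{(2)})\le M_n\le\beta h_n$ w.h.p., which the paper derives from the uniform alternation probability (\ref{p}) — and that bound is precisely where Assumption (A2) (the correlation $\rho^{3,4}$ bounded away from $\pm 1$) enters. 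Your proposal never uses (A2), which is a sign that this localization step is missing; without it neither the reduction to one-asset increments nor the summation over $j\le M_n$ in the recursion can be closed.
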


\begin{proof}
We introduce the notation $o_p^U$ where $U$ stands for ``uniformly in $i \geq 0$'', 
meaning that the $\sup$ of the rests is of the given order

\bigskip
First step : We define $\tilde{s}_n^h = \underset{i \in A_n}{\sup} \tilde{\tau}_{i,h_n,n}^{1C}$. We show in this step that 
\begin{eqnarray}
 \label{tauappstep1} \tilde{s}_n^h \overset{\proba}{\rightarrow} 0.
\end{eqnarray}
We define the accumulated time of approximated durations, i.e. 
$$\tilde{\tau}_{i,n}^h = \sum_{l=0}^{l=i} \tilde{\tau}_{l,h_n,n}^{1C}.$$
Using Lemma \ref{esptauk} together with Lemma \ref{numb}, $\exists M > 0$ such that 
$$\proba \left( \tilde{\tau}_{N_{n}^h, n}^h \leq M \right) \rightarrow 1.$$
We define $Z_0^n = 0$ and $\forall t \in [\tilde{\tau}_{i-1,n}^h, \tilde{\tau}_{i,n}^h]$, 
$$Z_t^n = Z_{\tilde{\tau}_{i-1,n}^h}^n + S_{t - \tilde{\tau}_{i-1,n}^h}^{i-1,n}.$$ 
A slight modification of the proof of Lemma \ref{snk} will conclude.

\bigskip
Second step : We show that we can do a localization in the number of observations in the i-th block, i.e. there exists a non-random $M_n$ such that
\begin{eqnarray}
\label{step2} \proba \left( \max \left( N_{i,n}^{(1)}, N_{i,n}^{(2)} \right) > M_n \right) 
\end{eqnarray}
converges uniformly (in $i$) towards $0$ and $M_n$ increasing at most linearly 
with $h_n$, i.e. we have $M_n \leq \beta h_n$ where $\beta > 0$. 

To prove (\ref{step2}), we need some definitions. Define for $i \geq 0$ the order of observation times $O_{i,k,n}$ and the order of the approximated observation times $\tilde{O}_{i,k,n}$ in the 
following way. Let $\Tau_{i,n}^O := \left( \tau_{i,j,n}^O \right)_{j \geq 0}$ the sorted set of all observation times (corresponding to process 1 and 2) strictly greater than $\tau_{i,n}^h$. Then for 
$j \geq 1$, we will set $O_{i,j,n} = 1$ if the j-th observation time in $\Tau_{i,n}^O$ corresponds to an observation of the first process and 
$O_{i,j,n} = 2$ if it corresponds to an observation of the 
second process. Similarly, we set $\tilde{\Tau}_{i,n}^O$ the sorted set of all approximated times $\left( \tilde{\tau}_{i,j,n}^{(k)} \right)_{j \geq 0,k=1,2}$ . 
$\tilde{O}_{i,j,n}$ are defined in the same way. There exists a $p > 0$ such that for all integers $i,j,n$ : 
\begin{eqnarray}
 \label{p} \proba \left( O_{i,j+1,n} = 1 \Big| O_{i,j,n} = 2 \right) \geq p \text{ and } \proba \left( O_{i,j+1,n} = 2 \Big| O_{i,j,n} = 1 \right) \geq p.
\end{eqnarray}
Indeed, let $l$ the (random) index such that $\tau_{i,l,n}^{(1)} = \tau_{i,j,n}^O$. Conditionally on $\Big\{ O_{i,j,n} = 1 \Big\}$,
we know that $O_{i,j+1,n} = 2$ if $\Delta X_{[\tau_{i,n}^h + \tau_{i,j,n}^l, .]}^{(4)}$ crosses $g^+$ or $- g^+$ 
before $\Delta X_{[\tau_{i,n}^h + 
\tau_{i,j,n}^O,.]}^{(3)}$ crosses $g^-$ or $-g^-$. Using (\ref{A2rho34}) of (A2) and (\ref{smsp}), we can easily bound away from 0 this probability, thus we 
deduce (\ref{p}). Now, using (\ref{algo1C}) together with (\ref{p}) and strong Markov property of Brownian motions, we deduce (\ref{step2}).

\bigskip
Third step : let $g=(d,u)$ such that $(g,g) \in \mathcal{G}$, $\sigma \in [ \sigma^-, \sigma^+ ]$ and $\epsilon \leq \frac{g^-}{2}$. We define $\tau \left(g, \sigma, \epsilon \right) 
= \inf \{ t > 0 : \sigma W_t = u (t) + \epsilon \text{ or }  \sigma W_t = d(t) - \epsilon \}$, where $W_t$ is a standard Brownian motion. We show that 
\begin{eqnarray}
\label{tauappstep2} \esp \left[ \Big| \tau \left(g, \sigma, \epsilon \right) - \tau \left(g, \sigma, 0 \right) \Big|^l \right] \leq \gamma^{(l)} \left( \epsilon \right)
\end{eqnarray}
where $\gamma^{(l)} \left( \epsilon \right) \overset{\epsilon \rightarrow 0}{\rightarrow} 0$.

In order to show (\ref{tauappstep2}), let $$\tau^1 \left(g, \sigma, \epsilon \right) = \inf \{ t > 0 : \sigma W_{t + \tau \left(g, \sigma, 0 \right)} = \min \left( u ( \tau(g, \sigma, 0)) 
+Kt + \epsilon, g^+ \right)$$ 
$$\text{  or } \sigma W_{t + \tau \left(g, \sigma, 0 \right)} = \max \left( d ( \tau(g, \sigma, 0)) 
- Kt - \epsilon , g^- \right) \}.$$ By (\ref{A3g}) and (\ref{A3der}) of (A3), we have $\tau \left(g, \sigma, \epsilon \right) - \tau \left(g, \sigma, 0 \right) 
\leq \tau^1 \left(g, \sigma, \epsilon \right)$. Conditionally on $\Big\{ \tau \left(g, \sigma, \epsilon \right) \Big\}$, using strong Markov property of 
Brownian motions, we can show that $\esp_{\tau \left(g, \sigma, \epsilon \right)} \left[ \Big| \tau^1 \left(g, \sigma, \epsilon \right) \Big|^l \right] \overset{\epsilon \rightarrow 0}{\rightarrow} 0$ 
using Theorem $2$ in Potzelberger and Wang (2001) for instance.

\bigskip
Fourth step : let $k \in \{1 , 2 \}$. We show here that
\begin{eqnarray}
\label{tauappstep3} \sum_{j \leq M_n} \esp \left[ \Big| \tau_{i,j,n}^{(k)} - \tilde{\tau}_{i,j,n}^{(k)} \Big|^l \right] = o_p^U \left( 
\alpha_n^{2l} \right).
\end{eqnarray}
The idea is to show that by recurrence in $j$, $\esp \left[ \Big| 
\tau_{i,j,n}^{(k)} - \tilde{\tau}_{i,j,n}^{(k)} \Big|^l \right]$ can be arbitrarily small when $n$ grows. It is then a straightforward analysis exercise to use 
the localization in second step and choose a different sequence $h$ if necessary, that will still be non-random increasing and following 
(\ref{infty}) and (\ref{hnalphan}), so that the sum in (\ref{tauappstep3}) will be also arbitrarily small. Let's start with $j=1$ and $k=1$.
$$\esp \left[ \Big| \tau_{i,1,n}^{(k)} - \tilde{\tau}_{i,1,n}^{(k)} \Big|^l \right] = \esp \left[ \Big| \tau_{i,1,n}^{(k)} - 
\tilde{\tau}_{i,1,n}^{(k)} \Big|^l \mathbf{1}_{E_{i,n}} \right] + \esp \left[ \Big| \tau_{i,1,n}^{(k)} - 
\tilde{\tau}_{i,1,n}^{(k)} \Big|^l \mathbf{1}_{E_{i,n}^C} \right],$$
where $E_{i,n} = E_{i,n}^{(1)} \cap E_{i,n}^{(2)}$ with $$E_{i,n}^{(1)}  =  \Bigg\{ \underset{s \in [\tau_{i,n}^h, \tau_{i,n}^h + \tau_{i,1,n}^{(1)} \vee \tilde{\tau}_{i,1,n}^{(1)}]}{\sup} \big|  \Delta X_{[\tau_{i,n}^h,s]}^{(1)} - \Delta \tilde{X}_{[\tau_{i,n}^h,s]}^{(1)} \big| < \eta_{1,n} \Bigg\},$$ 
 $$E_{i,n}^{(2)}  =  \Bigg\{ \underset{s \in [\tau_{i,n}^h, \tau_{i,n}^h + \tau_{i,1,n}^{(1)} \vee \tilde{\tau}_{i,1,n}^{(1)}]}{\sup} \big\| 
 g_s^{(1)} - g_{\tau_{i,n}^h}^{(1)} \big\|_{\infty} < \eta_{1,n} \Bigg\},$$ $\eta_{1,n} = q_n \alpha_n$, $q_n = \max \left( \alpha_n^{d-1/2}, z_n^{1/2} \right)$ and 
$z_n = \underset{1 \leq u,v \leq 4}{\sup} \left( \esp \left[ \left( S \left( \sigma^{u,v}, s_n^h \vee \tilde{s}_n^h \right) \right)^2 \right] \right)^{1/2}.$
By (\ref{scale1}) and (\ref{tauappstep2}), 
$$\esp \left[ \Big| \tau_{i,1,n}^{(k)} - 
\tilde{\tau}_{i,1,n}^{(k)} \Big|^l \mathbf{1}_{E_{i,n}} \right] \leq C \alpha_n^{2l} \left( \gamma^{(l)} \left(2 q_n \right) + \gamma^{(l)} \left(- 2 q_n \right) 
\right).$$
Using Cauchy-Schwarz inequality and Lemma \ref{esptauk},
$$\esp \left[ \Big| \tau_{i,1,n}^{(k)} - 
\tilde{\tau}_{i,1,n}^{(k)} \Big|^l \mathbf{1}_{E_{i,n}^C} \right] \leq C \alpha_n^{2l} \proba \left( E_{i,n}^C \right)^{1/2} \leq 
C \alpha_n^{2l} \left( \proba \left( \left( E_{i,n}^{(1)} \right)^C \right) + \proba \left( \left( E_{i,n}^{(2)} \right)^C \right) \right)^{1/2}.$$
On the one hand, 
\begin{eqnarray*}
\proba \left( \left( E_{i,n}^{(1)} \right)^C \right) & \leq & \left( \eta_{1,n} \right)^{-1} \esp \left[ 
\underset{s \in [\tau_{i,n}^h, \tau_{i,n}^h + \tau_{i,1,n}^{(1)} \vee \tilde{\tau}_{i,1,n}^{(1)}]}{\sup} \big| 
 \Delta X_{[\tau_{i,n}^h,s]}^{(1)} - \Delta \tilde{X}_{[\tau_{i,n}^h,s]}^{(1)} \big| \right]\\
 & \leq & C \left( \eta_{1,n} \right)^{-1} \underset{1 \leq u, v \leq 4}{\max} \esp \left[ \left( \int_{\tau_{i,n}^h}^{\tau_{i,n}^h + \tau_{i,1,n}^{(1)} \vee 
 \tilde{\tau}_{i,1,n}^{(1)}} \left( \sigma_s^{u,v} - \sigma_{\tau_{i,n}^h}^{u,v} \right)^2 ds \right)^{1/2} \right] \\
 & \leq & C \left( \eta_{1,n} \right)^{-1} \underset{1 \leq u, v \leq 4}{\max} \esp \left[ \left( \left( \tau_{i,1,n}^{(1)} \vee 
 \tilde{\tau}_{i,1,n}^{(1)} \right) S \left( \sigma^{u,v}, s_n^h \vee \tilde{s}_n^h \right)^2 \right)^{1/2} \right] \\
 & \leq & C \left( \eta_{1,n} \right)^{-1} \left( \esp \left[ \tau_{i,1,n}^{(1)} \vee \tilde{\tau}_{i,1,n}^{(1)} \right] \right)^{1/2} z_n \\
 & \leq & C z_n^{1/2}.
\end{eqnarray*}
where we used Markov inequality in the first inequality, conditional Burkholder-Davis-Gundy inequality in the second inequality, Cauchy-Schwarz inequality 
in the fourth inequality, Lemma \ref{esptauk} in the last inequality. On the other hand,
\begin{eqnarray*}
\proba \left( \left( E_{i,n}^{(2)} \right)^C \right) & \leq & \left( \eta_{1,n} \right)^{-1} \esp \left[ 
\underset{s \in [\tau_{i,n}^h, \tau_{i,n}^h + \tau_{i,1,n}^{(1)} \vee \tilde{\tau}_{i,1,n}^{(1)}]}{\sup} \big\| 
 g_s^{(1)} - g_{\tau_{i,n}^h}^{(1)} \big\|_{\infty} \right] \\
 & \leq & C \left( \eta_{1,n} \right)^{-1} \esp \left[ \left( \tau_{i,1,n}^{(1)} \vee \tilde{\tau}_{i,1,n}^{(1)} \right)^d \right] \\
 & \leq & C \alpha_n^{d-1/2}.
\end{eqnarray*}
where we used Markov inequality in the first inequality, (\ref{A3sup}) of (A3) in the second inequality, Lemma \ref{esptauk} in the last inequality. 
In summary, we have
$$\esp \left[ \Big| 
\tau_{i,j,n}^{(k)} - \tilde{\tau}_{i,j,n}^{(k)} \Big|^l \right] \leq C \alpha_n^{2l} \left( \gamma^{(l)} \left(2 q_n \right) + \gamma^{(l)} \left(- 2 q_n \right) 
+ z_n^{1/2} + \alpha^{d-1/2} \right).$$
which we can make arbitrarily small, because $z_n \rightarrow 0$ by first step together with Lemma \ref{s} and the continuity of $\sigma$ (A1). 
The case with $k=2$ is very similar. Finally, for $j > 1$, the same kind of computation 
techniques, using in addition (\ref{A3der}) of (A3), will work.

\bigskip
Fifth step : Prove that uniformly (in $i$) 
\begin{eqnarray}
\label{tauappstep4} \proba \left( \forall j \leq M_n, O_{i,j,n} = \tilde{O}_{i,j,n} \right) \rightarrow 1.
\end{eqnarray}
To show (\ref{tauappstep4}), let $j \leq M_n$. We define the (random) index $v$ such that $\tau_{i,v,n}^{O} = \tau_{i,j,n}^{(k)}$. 
Modifying suitably $h$ if needed, there exists (using fourth step) a sequence 
$\left( \epsilon_n \right)$ such that
\begin{eqnarray}
 \label{5a} \proba \left( \Big| \tau_{i,j,n}^{(k)} - \tilde{\tau}_{i,j,n}^{(k)} \Big| \leq \alpha_n^2 \epsilon_n \right) & \rightarrow & 1,\\
\label{5b} \proba \left( \Big| \tau_{i,v+1,n}^{O} - \tau_{i,v,n}^{O} \Big| \leq \alpha_n^2 \epsilon_n \right) & \rightarrow & 0.
 \end{eqnarray}
Using (\ref{5a}) and (\ref{5b}), we can verify (\ref{tauappstep4}) by recurrence.

\bigskip
Sixth step : We prove here (\ref{tauapp1}) and (\ref{tauapp2}). Using Lemma \ref{esptauk} and (\ref{tauappstep4}) we obtain
$$\esp \left[ \Big| \Delta \tau_{i,j,n}^{1C} - 
 \Delta \tilde{\tau}_{i,j,n}^{1C} \Big|^l \right] = \esp \left[ \Big| \Delta \tau_{i,j,n}^{1C} - 
 \Delta \tilde{\tau}_{i,j,n}^{1C} \Big|^l \mathbf{1}_{ \{ \forall j \leq M_n, O_{i,j,n} = \tilde{O}_{i,j,n} \} } \right] + o_p^U \left( \alpha_n^{2l} \right).$$
The first term on the right part of the inequality can be bounded by
$$C \Bigg(\esp \left[ \Big| \tau_{i,j,n}^{1C} - \tilde{\tau}_{i,j,n}^{1C} \Big|^l \mathbf{1}_{ \{ \forall j \leq M_n, O_{i,j,n} = \tilde{O}_{i,j,n} \} } \right]$$ 
$$+ \esp \left[ \Big| \tau_{i,j-1,n}^{1C} - \tilde{\tau}_{i,j-1,n}^{1C} \Big|^l 
\mathbf{1}_{ \{ \forall j \leq M_n, O_{i,j,n} = \tilde{O}_{i,j,n} \} } \right] \Bigg).$$
Both terms can be treated with the same trick. Using the second step and Lemma \ref{esptauk}, the first term is equal to
$$\sum_{v \leq M_n} \esp \left[ \Big| \tau_{i,j,n}^{1C} - \tilde{\tau}_{i,j,n}^{1C} \Big|^l 
\mathbf{1}_{ \{ \forall j \leq M_n, O_{i,j,n} = \tilde{O}_{i,j,n} \} } \mathbf{1}_{ \{ \tau_{i,j,n}^{1C} = \tau_{i,v,n}^{(1)} \} } \right] 
+ o_p^U \left(\alpha_n^{2l} \right).$$
The sum is obviously bounded by
$$\sum_{v \leq M_n} \esp \left[ \Big| \tau_{i,j,n}^{1C} - \tilde{\tau}_{i,j,n}^{1C} \Big|^l \right]. $$
and using (\ref{tauappstep3}), we prove (\ref{tauapp1}). We can deduce (\ref{tauapp2}) with the same 
kind of computations.
\end{proof}

Let $M^n$ the interpolated normalized error, i.e.
$$M_t^n = \alpha_n^{-1} \left( \sum_{i \geq 1} \Delta X_{[\tau_{i-1,n}^{1C} \wedge t,\tau_{i,n}^{1C} \wedge t]}^{(1)} 
\Delta X_{[\tau_{i-1,n}^{1C,-} \wedge t,\tau_{i,n}^{1C,+} \wedge t]}^{(2)} - \int_0^t \sigma_s^{(1)} \sigma_s^{(2)} \rho_s^{1,2} ds \right).$$
$M_t^{n}$ corresponds exactly to the normalized error of the Hayashi-Yoshida estimator if we observe the price of both assets at time $t$. 
We recall the definition of 
$$N_{i,n} = \Delta X_{\tau_{i,n}^{1C}}^{(1)} 
\Delta X_{\tau_{i,n}^{1C,-,+}}^{(2)} - \int_{\tau_{i-1,n}^{1C}}^{\tau_{i,n}^{1C}} \sigma_s^{(1)} \sigma_s^{(2)} \rho_s^{1,2} ds.$$ 

\begin{termsvanishing}
\label{termsvanishing}
 We have $$\sum_{i \in A_n}
\esp_{{\tau}_{i-1,n}^h} \left[ \left( \Delta M_{\tau_{i,n}^h}^n \right)^2 \right]$$
$$= \alpha_n^{-2} \sum_{i \in A_n}
\esp_{\tau_{i-1,n}^h} \left[ \sum_{u=2}^{h_n} \left( N_{(i-1)h_n + u} \right)^2 + 2 N_{(i-1)h_n + u} 
N_{(i-1)h_n + u+1} \right] + o_p (1).$$
\end{termsvanishing}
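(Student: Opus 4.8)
The plan is to relate the conditionally centered quadratic increments of the interpolated error $M^n$ over each block to the compensated one-correlated increments $N_{j,n}$, and then to use the $1$-correlation property together with the moment bounds of Lemma \ref{esptauk} to discard every contribution that is neither a squared increment nor a nearest-neighbour product.

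First I would expand a single block increment. Over the block $[\tau_{i-1,n}^h,\tau_{i,n}^h]=[\tau_{(i-1)h_n,n}^{1C},\tau_{ih_n,n}^{1C}]$, the compensator $\alpha_n^{-1}\int \zeta_s^{1,2}\,ds$ in the definition of $M^n$ telescopes exactly into the sum of the compensators appearing in the $N_{(i-1)h_n+u,n}$ for $u=1,\dots,h_n$ (see \eqref{compensated}), so only the product part needs care. Because $M^n$ interpolates with the truncations $\cdot\wedge t$, the product part of $\Delta M_{\tau_{i,n}^h}^n$ agrees with $\alpha_n^{-1}\sum_{u=1}^{h_n}N_{(i-1)h_n+u,n}$ only up to boundary remainders: the second-asset factor of the last term of the block is cut at $\tau_{ih_n,n}^{1C}$ instead of at $\tau_{ih_n,n}^{1C,+}$, and symmetrically at the left edge. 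I would collect these into a remainder $R_{i,n}$ and write $\Delta M_{\tau_{i,n}^h}^n=\alpha_n^{-1}\sum_{u=1}^{h_n}N_{(i-1)h_n+u,n}+R_{i,n}$. Squaring and taking $\esp_{\tau_{i-1,n}^h}$, I would expand $(\sum_u N_u)^2=\sum_u N_u^2+2\sum_{u<v}N_uN_v$ and invoke the $1$-correlation recorded after \eqref{compensated}: since consecutive terms share at most one second-asset increment, $N_{(i-1)h_n+u,n}$ and $N_{(i-1)h_n+v,n}$ involve disjoint Brownian increments once $v\ge u+2$, so conditioning on the filtration up to the start of the later term and using that the $N$'s are centered yields $\esp_{\tau_{i-1,n}^h}[N_uN_v]=0$ for $|u-v|\ge 2$. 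This leaves exactly the diagonal squares and the adjacent products $2N_uN_{u+1}$.

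Then I would dispose of the edge contributions. The first term $N_{(i-1)h_n+1,n}$ of each block, and the product $N_{(i-1)h_n+1,n}N_{(i-1)h_n+2,n}$, is the one whose left ($-$) endpoint reaches back before the conditioning time $\tau_{i-1,n}^h$; this is the same boundary phenomenon noted in Remark \ref{N2start}, so I would drop it and reindex to $u=2,\dots,h_n$. To justify that every discarded piece sums to $o_p(1)$ — namely $R_{i,n}^2$, the cross term $2\alpha_n^{-1}R_{i,n}\sum_u N_u$, and the dropped $u=1$ terms — I would use Lemma \ref{esptauk}: each $N_{j,n}$ is of order $\alpha_n^2$ in $L^2$, so $\alpha_n^{-2}\esp[N_{j,n}^2]=O(\alpha_n^2)$ and, by Cauchy--Schwarz, each mixed boundary contribution per block is of the same order. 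Since the number of blocks is $O_p(\alpha_n^{-2}h_n^{-1})$ by the tightness of $\alpha_n^2 N_{t,n}^{1C}$ in Lemma \ref{numb}, the total of these boundary terms is $O_p(h_n^{-1})=o_p(1)$; the passage from a deterministic number of blocks $\llcorner u\alpha_n^{-2}h_n^{-1}\lrcorner$ to the random $N_{t,n}^h$ is then handled by the second statement of Lemma \ref{sumtight}.

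The main obstacle will be the careful bookkeeping of the interpolation/truncation remainder $R_{i,n}$: one must show that cutting the overlapping second-asset increments at the block endpoints produces only pieces that are individually of the size of a single $N_{j,n}$, so that the bound $\esp\big[|R_{i,n}\sum_u N_u|\big]\le \esp[R_{i,n}^2]^{1/2}\,\esp[(\sum_u N_u)^2]^{1/2}$ is summable to $o_p(1)$ after multiplying by $\alpha_n^{-2}$, and making the conditional-vanishing argument for the gap-$\ge 2$ products fully rigorous despite the forward ($+$) overlap of the second-asset increments. Everything else — the exact telescoping of the compensator and the order counting — is routine given Lemma \ref{esptauk}, Lemma \ref{numb} and Lemma \ref{sumtight}.
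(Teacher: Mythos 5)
Your proposal is correct and follows essentially the same route as the paper, whose proof of this lemma is a one-sentence appeal to the facts that the $N_{i,n}$ are centered and $1$-correlated and that the leftover (edge and interpolation) terms vanish in probability; you have simply filled in the bookkeeping, using Lemma \ref{esptauk}, Lemma \ref{numb} and Lemma \ref{sumtight} exactly as the surrounding proofs do.
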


\begin{proof}
We obtain this equality noting that $\left( N_{i,n} \right)_{n \geq 0}$ are centered and 1-correlated, and that the terms left
converge to 0 in probability.
\end{proof}

We introduce the observation time at the start of a block, where ``s'' stands for ``start''
$$\tau_{i,n}^{s} = \sup \{ \tau_{j, n}^{h} \text{ s.t. } \tau_{j, n}^{h} < \tau_{i, n}^{1C} \}.$$
\begin{holdingconst}
\label{holdingconst}
We have 
 $$\alpha_n^{-2} \sum_{i \in A_n}
\esp_{\tau_{i-1,n}^h} \left[ \sum_{u=2}^{h_n} \left( N_{(i-1)h_n + u} \right)^2 + 2 N_{(i-1)h_n + u} N_{(i-1)h_n + u+1} \right]$$
$$= \alpha_n^{2} \sum_{i \in A_n} \sum_{j=0}^{h_n - 2} \int_{\reels^2} \psi^{AV} \left(\sigma_{\tau_{i-1,n}^h}, g_{\tau_{i-1,n}^h}, 
\alpha_n^{-1} x, \alpha_n^{-2} v\right) d \tilde{\pi}_{i-1,j,n} \left( x,v \right) + o_p (1).$$
\end{holdingconst}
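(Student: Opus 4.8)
The plan is to freeze the coefficients on each block and pass from the true compensated increments $N_{(i-1)h_n+u}$ to their block-constant approximations $\tilde{N}_{i-1,u,n}$ (the ones built from the frozen volatility $\sigma_{\tau_{i-1,n}^h}$, the frozen grid $\alpha_n g_{\tau_{i-1,n}^h}$ and the driving noise $S^{i-1,n}$ introduced before Lemma \ref{esptauk}), and then to recognize the resulting conditional second moment as an integral of $\psi^{AV}$ against the Markov-chain laws $\tilde{\pi}_{i-1,j,n}$. The whole computation rests on two inputs: the uniform control of the duration-approximation error from Lemma \ref{tauapp}, and the scaling identity of Lemma \ref{scale}.

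First I would establish the replacement
$$\alpha_n^{-2}\sum_{i\in A_n}\esp_{\tau_{i-1,n}^h}\Big[\sum_{u=2}^{h_n}\big(N_{(i-1)h_n+u}\big)^2+2N_{(i-1)h_n+u}N_{(i-1)h_n+u+1}\Big]$$
$$=\alpha_n^{-2}\sum_{i\in A_n}\esp_{\tau_{i-1,n}^h}\Big[\sum_{u=2}^{h_n}\big(\tilde{N}_{i-1,u,n}\big)^2+2\tilde{N}_{i-1,u,n}\tilde{N}_{i-1,u+1,n}\Big]+o_p(1).$$
Writing $N^2-\tilde{N}^2=(N-\tilde{N})(N+\tilde{N})$ and applying Cauchy-Schwarz, together with the moment bounds $\esp[N^2]=O(\alpha_n^4)$ and $\esp[\tilde{N}^2]=O(\alpha_n^4)$ that follow from Lemma \ref{esptauk} and (\ref{smsp}), it suffices to show $\esp[(N_{(i-1)h_n+u}-\tilde{N}_{i-1,u,n})^2]=o(\alpha_n^4)$ uniformly in $i$ and in $u\le h_n$; the cross terms are handled identically. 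The difference $N-\tilde{N}$ splits into the mismatch between the true and frozen $1$-correlated sampling times, controlled by Lemma \ref{tauapp}, and the difference between the true integrated covariation and its frozen counterpart together with the difference of the price increments, controlled by the continuity of $\sigma$ in $(A1)$, by Lemma \ref{s} applied along $s_n^h\vee\tilde{s}_n^h\to 0$, and by the grid regularity (\ref{A3sup}). Summing over $u\le h_n$ and over the $|A_n|=O_p(\alpha_n^{-2}h_n^{-1})$ blocks, the prefactor $\alpha_n^{-2}$ turns the pointwise $o(\alpha_n^4)$ into $o_p(1)$ by the second statement of Lemma \ref{sumtight}.

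Next, on each block the approximated state $\tilde{Z}_{i-1,j,n}$ is a homogeneous Markov chain with laws $\tilde{\pi}_{i-1,j,n}$, by the strong Markov property of Brownian motion. Conditioning on $\tilde{Z}_{i-1,u-2,n}=(x,v)$ and invoking the very definition (\ref{psiAV1}) of $\psi^{AV}$ (which is exactly $\esp[\tilde{N}_2^2+2\tilde{N}_2\tilde{N}_3]$ started from $(x,v)$; the shift to index $u=2$ accounts for the boundary issue flagged in Remark \ref{N2start}), I obtain
$$\esp_{\tau_{i-1,n}^h}\big[(\tilde{N}_{i-1,u,n})^2+2\tilde{N}_{i-1,u,n}\tilde{N}_{i-1,u+1,n}\big]=\int_{\reels^2}\psi^{AV}\big(\sigma_{\tau_{i-1,n}^h},\alpha_n g_{\tau_{i-1,n}^h},x,v\big)\,d\tilde{\pi}_{i-1,u-2,n}(x,v).$$
Lemma \ref{scale} with $\alpha=\alpha_n$ gives $\psi^{AV}(\sigma,\alpha_n g,x,v)=\alpha_n^4\,\psi^{AV}(\sigma,g,\alpha_n^{-1}x,\alpha_n^{-2}v)$, so after multiplying by $\alpha_n^{-2}$ the factors combine to $\alpha_n^2$; re-indexing $j=u-2$ and summing over $i\in A_n$ reproduces the right-hand side of the lemma exactly.

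The hard part will be the $L^2$ estimate $\esp[(N-\tilde{N})^2]=o(\alpha_n^4)$, uniformly in the block index and in $u\le h_n$. This is where Lemma \ref{tauapp} is indispensable, since one must simultaneously control the discrepancy of the asynchronous $1$-correlated sampling times and the coefficient-freezing error, and then verify that these bounds survive the summation of order $\alpha_n^{-2}$ performed through Lemma \ref{sumtight}; the remaining algebra, including the edge terms already discarded in Lemma \ref{termsvanishing}, is routine.
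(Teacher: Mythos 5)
Your proposal is correct and follows essentially the same route as the paper's proof: replace the true compensated increments by their frozen-coefficient, approximated-time counterparts with an $o_p(1)$ error controlled via Cauchy--Schwarz, Burkholder--Davis--Gundy, Lemmas \ref{esptauk}, \ref{tauapp}, \ref{s} and \ref{sumtight} together with the continuity of $\sigma$, then identify the conditional second moment as $\int \psi^{AV}\, d\tilde{\pi}_{i-1,j,n}$ and rescale by Lemma \ref{scale}. The only difference is organizational: the paper performs the replacement in two sub-steps (first freezing the volatility while keeping the true $1$-correlated times, then swapping to the approximated times $\tilde{\tau}_{i,j,n}$), whereas you merge both error sources into a single $L^2$ estimate, which is a harmless repackaging of the same argument.
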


\begin{proof}
First step : approximating with holding volatility constant. Set
$$\tilde{N}_{i,n} = \left( \sigma_{\tau_{i-1,n}^{s}} \Delta W_{\tau_{i,n}^{1C}} \right)^{(1)} 
\left( \sigma_{\tau_{i-1,n}^{s}} \Delta W_{\tau_{i,n}^{1C,-,+}} \right)^{(2)} - \int_{\tau_{i-1,n}^{1C}}^{\tau_{i,n}^{1C}} \zeta_{\tau_{i-1,n}^{s}}^{1,2} ds$$ 
where $A^{(i)}$ is the i-th component of the vector A. We want to show that :
$$\alpha_n^{-2} \sum_{i \in A_n}
\esp_{\tau_{i-1,n}^h} \left[ \sum_{u=2}^{h_n} \left( N_{(i-1)h_n + u} \right)^2 + 2 N_{(i-1)h_n + u} 
N_{(i-1)h_n + u+1} \right]$$
$$= \alpha_n^{-2} \sum_{i \in A_n}
\esp_{\tau_{i-1,n}^h} \left[ \sum_{u=2}^{h_n} \left( \tilde{N}_{(i-1)h_n + u} \right)^2 + 2 \tilde{N}_{(i-1)h_n + u} 
\tilde{N}_{(i-1)h_n + u+1} \right] + o_p \left( 1 \right).$$
Noting $F_{i,n} = \left( N_{i,n} \right)^2 + 2 N_{i,n} N_{i+1,n}$ and $\tilde{F}_{i,n} = \left( \tilde{N}_{i,n} \right)^2 + 
2 \tilde{N}_{i,n} \tilde{N}_{i+1,n}$, it is sufficient to show that 
$$ \alpha_n^{-2} \sum_{i \geq 1} \esp_{\tau_{i-1,n}^{s}} \left[ \bigg| F_{i,n} - \tilde{F}_{i,n} \bigg| \mathbf{1}_{ \{ \tau_{i-1,n}^s < t \} } \right] \overset{\proba}{\rightarrow} 0,$$
that we can rewrite as $ \alpha_n^{-2} \sum_{i \geq 1}^{N_{t,n}^{(1)}} \esp_{\tau_{i-1,n}^{s}} \left[ \Big| F_{i,n} - \tilde{F}_{i,n} \Big| 
\mathbf{1}_{ \{ \tau_{i-1,n}^s < t \} } \right] \overset{\proba}{\rightarrow} 0$. 
Using Lemma \ref{sumtight}, it is sufficient to show that $\forall u > 0$ :
$$ \alpha_n^{-2} \sum_{i=1}^{u \alpha_n^{-2}} 
\esp_{\tau_{i-1,n}^{s}} \left[ \Big| F_{i,n} - \tilde{F}_{i,n} \Big| \mathbf{1}_{ \{ \tau_{i-1,n}^s < t \} } \right] \overset{\proba}{\rightarrow} 0.$$
Thus, it is sufficient to show the convergence $\mathbf{L}^1$ of this quantity, i.e. that 
$$ \alpha_n^{-2} \sum_{i=1}^{u \alpha_n^{-2}} 
\esp \left[ \Big| F_{i,n} - \tilde{F}_{i,n} \Big| \mathbf{1}_{ \{ \tau_{i-1,n}^s < t \} } \right] \rightarrow 0.$$ 
We have that 
$$\bigg| F_{i,n} - \tilde{F}_{i,n} \bigg| \leq B_{i,n}^{(1)} + 2 B_{i,n}^{(2)},$$
where $B_{i,n}^{(1)} = \bigg| N_{i,n}^2 - \tilde{N}_{i,n}^2 \bigg|$ and $B_{i,n}^{(2)} = \bigg| N_{i-1,n} N_{i,n} - \tilde{N}_{i-1,n} 
\tilde{N}_{i,n} \bigg|$. We have that 
$$B_{i,n}^{(1)} \leq C_{i,n}^{(1)} + C_{i,n}^{(2)} + C_{i,n}^{(3)},$$
where
\begin{eqnarray*}
C_{i,n}^{(1)} & = & \bigg| \left( \Delta X_{\tau_{i,n}^{1C}}^{(1)} 
\Delta X_{\tau_{i,n}^{1C,-,+}}^{(2)} \right)^2 - \left( \left( \sigma_{\tau_{i-1,n}^{s}} \Delta W_{\tau_{i,n}^{1C}} \right)^{(1)} 
\left( \sigma_{\tau_{i-1,n}^{s}} \Delta W_{\tau_{i,n}^{1C,-,+}} \right)^{(2)} \right)^2 \bigg|,\\
C_{i,n}^{(2)} & = & \bigg| \left( \int_{\tau_{i-1,n}^{1C}}^{\tau_{i,n}^{1C}} \zeta_s^{1,2} ds \right)^2 
-  \left( \int_{\tau_{i-1,n}^{1C}}^{\tau_{i,n}^{1C}} \zeta_{\tau_{i-1,n}^{s}}^{1,2} ds \right)^2 \bigg|,
\end{eqnarray*}
\begin{eqnarray*}
C_{i,n}^{(3)} & = & 2 \bigg| \Delta X_{\tau_{i,n}^{1C}}^{(1)} 
\Delta X_{\tau_{i,n}^{1C,-,+}}^{(2)} \int_{\tau_{i-1,n}^{1C}}^{\tau_{i,n}^{1C}} \zeta_s^{1,2} ds \\ 
& & - \left( \sigma_{\tau_{i-1,n}^{s}} \Delta W_{\tau_{i,n}^{1C}} \right)^{(1)} 
\left( \sigma_{\tau_{i-1,n}^{s}} \Delta W_{\tau_{i,n}^{1C,-,+}} \right)^{(2)} 
\int_{\tau_{i-1,n}^{1C}}^{\tau_{i,n}^{1C}} \zeta_{\tau_{i-1,n}^{s}}^{1,2} ds \bigg|.
\end{eqnarray*}
Let's show that $\alpha_n^{-2} \sum_{i=1}^{u \alpha_n^{-2}} 
\esp \left[ C_{i,n}^{(1)} \mathbf{1}_{  \{ \tau_{i-1,n}^s < t \}} \right] \rightarrow 0$. 
We can write it as $C_{i,n}^{(1)} \leq D_{i,n}^{(1)} + D_{i,n}^{(2)}$, where 	
\begin{eqnarray*}
D_{i,n}^{(1)} & = & \bigg| \left( \Delta X_{\tau_{i,n}^{1C}}^{(1)} 
\Delta X_{\tau_{i,n}^{1C,-,+}}^{(2)} \right)^2 \left( \left( \sigma_{\tau_{i-1,n}^{s}} \Delta W_{\tau_{i,n}^{1C}} \right)^{(1)} 
\Delta X_{\tau_{i,n}^{1C,-,+}}^{(2)} \right)^2 \bigg|, \\
D_{i,n}^{(2)} & = & \bigg| \left( \left( \sigma_{\tau_{i-1,n}^{s}} \Delta W_{\tau_{i,n}^{1C}} \right)^{(1)} 
\Delta X_{\tau_{i,n}^{1C,-,+}}^{(2)} \right)^2 - \\ & & - 
\left( \left( \sigma_{\tau_{i-1,n}^{s}} \Delta W_{\tau_{i,n}^{1C}} \right)^{(1)} 
\left( \sigma_{\tau_{i-1,n}^{s}} \Delta W_{\tau_{i,n}^{1C,-,+}} \right)^{(2)} \right)^2 \bigg|.
\end{eqnarray*}
We want to show that $\alpha_n^{-2} \sum_{i=1}^{u \alpha_n^{-2}} 
\esp \left[ D_{i,n}^{(1)} \mathbf{1}_{  \{ \tau_{i-1,n}^s < t \}} \right] 
\rightarrow 0 $. We define :
\begin{eqnarray*}
E_{i,n}^{(1)} & = & \Delta X_{\tau_{i,n}^{1C}}^{(1)} 
\Delta X_{\tau_{i,n}^{1C,-,+}}^{(2)}, \\
E_{i,n}^{(2)} & = & \left( \sigma_{\tau_{i-1,n}^{s}} \Delta W_{\tau_{i,n}^{1C}} \right)^{(1)} 
\Delta X_{\tau_{i,n}^{1C,-,+}}^{(2)}.
\end{eqnarray*}

Using Cauchy-Schwarz inequality, we deduce :
\begin{eqnarray*}
 \esp \left[ D_{i,n}^{(1)} \mathbf{1}_{  \{ \tau_{i-1,n}^s < t \}} \right] & = & \esp \left[ \left( E_{i,n}^{(1)} + E_{i,n}^{(2)} \right) 
 \left( E_{i,n}^{(1)} - E_{i,n}^{(2)} \right) \mathbf{1}_{  \{ \tau_{i-1,n}^s < t \}} \right]\\
& \leq & \left( \esp \left[ \left( E_{i,n}^{(1)} + E_{i,n}^{(2)} \right)^2 \right] 
\esp \left[ \left( E_{i,n}^{(1)} - E_{i,n}^{(2)} \right)^2 \mathbf{1}_{  \{ \tau_{i-1,n}^s < t \}} \right] 
 \right)^{1/2}.
 \end{eqnarray*}
 
Using Cauchy-Schwarz inequality together with Burkholder-Davis-Gundy inequality and Lemma \ref{esptauk}, we obtain that :
$$\esp \left[ \left( E_{i,n}^{(1)} + E_{i,n}^{(2)} \right)^2 \right] = O^U \left( \alpha_n^4 \right).$$
where $U$ stands for ``uniformly in $1 \leq i \leq u \alpha_n^{-2}$''. Another application of Cauchy-Schwarz inequality gives us
$$\esp \left[ \left( E_{i,n}^{(1)} - E_{i,n}^{(2)} \right)^2 \mathbf{1}_{  \{ \tau_{i-1,n}^s < t \}} \right]$$ 
$$\leq \left( \esp \left[ \left( 
\Delta X_{\tau_{i,n}^{1C}}^{(1)}  - \left( \sigma_{\tau_{i-1,n}^{s}} \Delta W_{\tau_{i,n}^{1C}} \right)^{(1)} \right)^4 
\mathbf{1}_{  \{ \tau_{i-1,n}^s < t \}} \right] \esp \left[ \left( 
\Delta X_{\tau_{i,n}^{1C,-,+}}^{(2)} \right)^4 \right] \right)^{1/2}.$$
Using once again Cauchy-Schwarz inequality together with Burholder-Davis-Gundy inequality and Lemma \ref{esptauk}, we obtain that :
$$\esp \left[ \left( 
\Delta X_{\tau_{i,n}^{1C,-,+}}^{(2)} \right)^4 \right] = O^U \left( \alpha_n^4 \right).$$
Similarly, we compute using conditional Burkholder-Davis-Gundy in first inequality, Cauchy-Schwarz in third inequality, Lemma \ref{snk}, Lemma \ref{s} 
and Lemma \ref{esptauk} together with the continuity of $\sigma$ (A1) in last equality.

$$ \esp \left[ \left( 
\Delta X_{\tau_{i,n}^{1C}}^{(1)}  - \left( \sigma_{\tau_{i-1,n}^{s}} 
\Delta W_{\tau_{i,n}^{1C}} \right)^{(1)} \right)^4 \mathbf{1}_{  \{ \tau_{i-1,n}^s < t \}} \right]$$  
\begin{eqnarray*}
& = & \esp \left[ \mathbf{1}_{  \{ \tau_{i-1,n}^s < t \}} \esp_{\tau_{i-1,n}^{1C}} \left[ \left( 
\Delta X_{\tau_{i,n}^{1C}}^{(1)}  - \left( \sigma_{\tau_{i-1,n}^{s}} 
\Delta W_{\tau_{i,n}^{1C}} \right)^{(1)} \right)^4 \right] \right]\\
& = & \esp \left[ \mathbf{1}_{  \{ \tau_{i-1,n}^s < t \}} \esp_{\tau_{i-1,n}^{1C}} \left[ \left( \int_{\tau_{i-1,n}^{1C}}^{\tau_{i,n}^{1C}} 
\left( \left( \sigma_{s} - \sigma_{\tau_{i-1,n}^{s}} \right) dW_s \right)^{(1)} \right)^4 \right] \right]\\
& \leq & C \sup_{1 \leq j, l \leq 4} \esp \left[ \mathbf{1}_{  \{ \tau_{i-1,n}^s < t \}} \esp_{\tau_{i-1,n}^{1C}} \left[\left( 
\int_{\tau_{i-1,n}^{1C}}^{\tau_{i,n}^{1C}} 
\left(\sigma_{s}^{j,l} - \sigma_{\tau_{i-1,n}^{s}}^{j,l} \right)^2 ds \right)^2 \right] \right]\\
& = & C \sup_{1 \leq j, l \leq 4} \esp \left[ \mathbf{1}_{  \{ \tau_{i-1,n}^s < t \}} \left( \int_{\tau_{i-1,n}^{1C}}^{\tau_{i,n}^{1C}} 
\left(\sigma_{s}^{j,l} - \sigma_{\tau_{i-1,n}^{s}}^{j,l} \right)^2 ds \right)^2 \right]
\end{eqnarray*}
\begin{eqnarray*}
& \leq & C \sup_{1 \leq j, l \leq 4} \esp \left[ \left( \Delta \tau_{i,n}^{1C} S \left( \sigma^{j,l}, s_n^{h} \right)^2 \right)^2 \right] + o^U \left(\alpha_n^4 \right) \\
& \leq & C \left( \esp \left[ \left( \Delta \tau_{i,n}^{1C} \right)^4 \right] \esp \left[ \sup_{1 \leq j, l \leq 4} \left( S \left( \sigma^{j,l}, s_n^{h} 
\right) \right)^8 \right] \right)^{1/2} + o^U \left(\alpha_n^4 \right)\\
& = & O^U \left(\alpha_n^4 \right).
\end{eqnarray*}
With the same kind of computations, we show that $\alpha_n^{-2} \sum_{i=1}^{u \alpha_n^{-2}} 
\esp \left[ D_{i,n}^{(2)} \mathbf{1}_{  \{ \tau_{i-1,n}^s < t \}} \right] \rightarrow 0$, and we also can show $\alpha_n^{-2} \sum_{i=1}^{u \alpha_n^{-2}} 
\esp \left[ C_{i,n}^{(2)} \mathbf{1}_{  \{ \tau_{i-1,n}^s < t \}} \right] \rightarrow 0$, 
$\alpha_n^{-2} \sum_{i=1}^{u \alpha_n^{-2}} 
\esp \left[ C_{i,n}^{(3)} \mathbf{1}_{  \{ \tau_{i-1,n}^s < t \}} \right] \rightarrow 0$ 
(thus we have also that $\alpha_n^{-2} \sum_{i=1}^{u \alpha_n^{-2}} 
\esp \left[ B_{i,n}^{(1)} \mathbf{1}_{  \{ \tau_{i-1,n}^s < t \}} \right] \rightarrow 0$) 
and 
$$\alpha_n^{-2} \sum_{i=1}^{u \alpha_n^{-2}} 
\esp \left[ B_{i,n}^{(2)} \mathbf{1}_{  \{ \tau_{i-1,n}^s < t \}} \right] \rightarrow 0.$$

\bigskip
Second step : approximating using $\left( \tilde{\tau}_{i,j,n} \right)_{i,j,n \geq 0}$ instead of 
$\left( \tau_{i,n} \right)_{i,n \geq 0}$. We set
$$\tilde{\tilde{N}}_{i,j,n} = \left( \sigma_{\tau_{i,n}^{h}} \Delta W_{\tilde{\tau}_{i,j,n}^{1C}} \right)^{(1)} 
\left( \sigma_{\tau_{i,n}^{h}} \Delta W_{\tilde{\tau}_{i,j,n}^{1C,-,+}} \right)^{(2)} - \int_{\tilde{\tau}_{i,j-1,n}^{1C}}^{\tilde{\tau}_{i,j,n}^{1C}} 
\zeta_{\tau_{i,n}^{h}}^{1,2} ds.$$ 
We want to show that
$$\alpha_n^{-2} \sum_{i \in A_n}
\esp_{\tau_{i-1,n}^h} \left[ \sum_{u=2}^{h_n} \left( \tilde{N}_{(i-1)h_n + u} \right)^2 + 2 \tilde{N}_{(i-1)h_n + u} 
\tilde{N}_{(i-1)h_n + u+1} \right]$$
$$= \alpha_n^{-2} \sum_{i \in A_n}
\esp_{\tau_{i-1,n}^h} \left[ \sum_{u=2}^{h_n} \left( \tilde{\tilde{N}}_{i-1,u,n} \right)^2 + 2 \tilde{\tilde{N}}_{i-1,u,n} 
\tilde{\tilde{N}}_{i,u+1,n} \right] + o_p \left( 1 \right).$$
Using the same kind of computations as in the first step together with Lemma \ref{tauapp}, we conclude.

Third step : express the result as a function of $\psi^{AV}$. Using Lemma \ref{scale} in last equality, we deduce for any integer $u$ such that $2 \leq u \leq h_n$ that
\begin{eqnarray*}
& & \esp_{\tau_{i-1,n}^h} \left[ \left( \tilde{\tilde{N}}_{i-1,u,n} \right)^2 + 2 \tilde{\tilde{N}}_{i-1,u,n} 
\tilde{\tilde{N}}_{i-1,u+1,n} \right]\\
& = & \int_{\reels^2} \psi^{AV} \left(\sigma_{\tau_{i-1,n}^h}, \alpha_n g_{\tau_{i-1,n}^h}, 
x, v \right) d \tilde{\pi}_{i,u-2,n} \left( x,v \right) \\
& = & \alpha_n^4 \int_{\reels^2} \psi^{AV} \left(\sigma_{\tau_{i-1,n}^h}, g_{\tau_{i-1,n}^h}, \alpha_n^{-1} x, \alpha_n^{-2} v \right) 
d \tilde{\pi}_{i,u-2,n} \left( x,v \right).
\end{eqnarray*}

\end{proof}

\begin{approxdistrib}
\label{approxdistrib}
$\forall \sigma \in \mathcal{M}, g \in \mathcal{G}, \exists \pi \left(\sigma, g \right)$ distribution such that :
$$\alpha_n^{2} \sum_{i \in A_n}
\sum_{j=0}^{h_n - 2} \int_{\reels^2} \psi^{AV} \left(\sigma_{\tau_{i-1,n}^h}, g_{\tau_{i-1,n}^h}, 
\alpha_n^{-1} x, \alpha_n^{-2} u \right) d \tilde{\pi}_{i-1,j,n} \left( x,u \right)$$
$$= \alpha_n^{2} \sum_{i \in A_n} 
h_n \phi^{AV} \left(\sigma_{\tau_{i-1,n}^h}, g_{\tau_{i-1,n}^h} \right) + o_p (1).$$
\end{approxdistrib}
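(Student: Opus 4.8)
The plan is to read this as an ergodic-averaging statement for the Markov chain $\tilde Z_j$ of (\ref{Zi}), carried out block by block, and to establish along the way the stationary distribution promised in Lemma~\ref{distribexists}. The first step is to prove that, for every fixed $(\tilde\sigma,\tilde g)\in\mathcal{M}\times\mathcal{G}$, the chain $\tilde Z_j$ on $\mathcal{S}_{\tilde g}$ is uniformly ergodic. Since the barriers are bounded away from $0$ and above by $g^-$ and $g^+$ in (\ref{A3g}), and since $(A2)$ keeps the two time processes apart (so at each step there is a probability bounded below that an observation of the second asset falls between two observations of the first), one can exhibit a small set and a Foster--Lyapunov drift condition and invoke the Meyn--Tweedie theory to obtain a unique stationary law $\pi(\tilde\sigma,\tilde g)=\tilde\pi(\tilde\sigma,\tilde g)$ together with a geometric bound $\|\tilde\pi_j(\tilde\sigma,\tilde g,x_0,u_0)-\tilde\pi(\tilde\sigma,\tilde g)\|_{TV}\le C\rho^{j}$, with $\rho\in(0,1)$. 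The key demand is that $C$ and $\rho$ be chosen uniformly in the starting point $(x_0,u_0)\in\mathcal{S}_{\tilde g}$ and, by continuity of the exit-time kernel and the compactness built into $\mathcal{M}$ and $\mathcal{G}(c)$, uniformly over $(\tilde\sigma,\tilde g)\in\mathcal{M}\times\mathcal{G}$. This step simultaneously discharges Lemma~\ref{distribexists}.

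Next I would remove the tick-size scaling using the Brownian rescaling behind Lemma~\ref{scale}. The law $\tilde\pi_{i-1,j,n}$ is the $j$-step distribution of the chain built with the rescaled grid $\alpha_n g_{\tau_{i-1,n}^h}$, and by the scaling property of Brownian motion (compare (\ref{scale1})) this chain equals in law $(\alpha_n Y_j,\alpha_n^2 V_j)$, where $(Y_j,V_j)$ is the chain built with the unscaled grid $g=g_{\tau_{i-1,n}^h}$ started from the correspondingly unscaled initial condition $(y_0,v_0)$. Hence, writing $\sigma=\sigma_{\tau_{i-1,n}^h}$, the change of variables $(x,u)=(\alpha_n y,\alpha_n^2 v)$ gives
\begin{equation*}
\int_{\reels^2}\psi^{AV}(\sigma,g,\alpha_n^{-1}x,\alpha_n^{-2}u)\,d\tilde\pi_{i-1,j,n}(x,u)=\int_{\reels^2}\psi^{AV}(\sigma,g,y,v)\,d\tilde\pi_j(\sigma,g,y_0,v_0)(y,v),
\end{equation*}
in which the scalings in the argument of $\psi^{AV}$ and in the measure cancel exactly, leaving the unscaled instantaneous variance integrated against the $j$-step law of the unscaled chain.

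Then I would carry out the Ces\`aro averaging and sum over blocks. The integrand $\psi^{AV}(\sigma,g,\cdot)$ is bounded uniformly over $\mathcal{M}\times\mathcal{G}$, the return factors being controlled by the barrier bounds $g^\pm$ and the duration moments of Lemma~\ref{esptauk}; combined with the geometric rate of the first step this yields $\big|\int\psi^{AV}\,d\tilde\pi_j-\phi^{AV}(\sigma,g)\big|\le C\rho^{j}$ with $C,\rho$ independent of the block index $i$. Summing over $j=0,\dots,h_n-2$ then gives $\sum_j\int\psi^{AV}\,d\tilde\pi_j=h_n\,\phi^{AV}(\sigma,g)+O(1)$, the $O(1)$ being the single uniform constant coming from the geometric tail $\sum_{j\ge0}C\rho^{j}=C/(1-\rho)$ together with the one missing Ces\`aro term. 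Finally, multiplying by $\alpha_n^2$ and summing over $i\in A_n$, the total error is $\alpha_n^2\,|A_n|\,O(1)$; since $|A_n|\le N_{t,n}^{h}$ is of order $\alpha_n^{-2}h_n^{-1}$ by the tightness of $\alpha_n^2 N_{t,n}^{1C}$ in Lemma~\ref{numb}, this error is $O_p(h_n^{-1})=o_p(1)$ because $h_n\to\infty$ by (\ref{infty}).

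The main obstacle is the first step: establishing uniform ergodicity with a geometric mixing rate that is uniform not only in the starting state but also across the entire parameter family $(\sigma,g)\in\mathcal{M}\times\mathcal{G}$. Everything downstream is bookkeeping with the scaling identity and the counting bound, but verifying the minorization and drift conditions for the exit-time chain, and transporting that uniformity through the continuity of the Brownian exit kernel in $(\sigma,g)$, is where the genuine work lies.
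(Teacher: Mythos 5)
Your plan follows essentially the same route as the paper: the paper's proof establishes that the entire state space $\mathcal{S}_{\tilde g}$ is $\nu$-small uniformly over $(\sigma,g)\in\mathcal{M}\times\mathcal{G}$ (via an explicit chain of events forcing the process into a fixed target window, using (\ref{A3g}), (\ref{A3compact}) and the explicit exit-time densities), invokes Theorem 16.0.2(v) of Meyn and Tweedie to get the stationary law with a uniform geometric total-variation rate, proves $\|\psi^{AV}\|_\infty<\infty$, performs the Ces\`aro averaging, and controls the block sum via Lemma \ref{sumtight} and regular conditional distributions. Be aware that the uniform minorization you defer as "where the genuine work lies" is in fact the bulk of the paper's argument (and the paper proves Doeblin's condition on the whole space directly rather than a small set plus drift, which is what actually delivers the uniformity in the starting point that your Ces\`aro step requires); your scaling identity and the $\alpha_n^2|A_n|=O_p(h_n^{-1})$ counting bound correctly reproduce the remaining steps.
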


\begin{proof}
 We define the transition functions of the Markov chains $\left( \tilde{Z}_i \left( \sigma, g \right) \right)_{i \geq 0}$ defined in (\ref{Zi}). For 
 $\left(x, u \right) \in \mathcal{S}_g$, $B \in \mathcal{B} \left( \mathcal{S}_g \right)$ (borelians of $\mathcal{S}_g$) 
 $$P \left( \sigma, g \right) \left( \left(x, u \right), B \right) = \proba \left( \tilde{Z}_1 \left( \sigma, g \right) \in B \bigg| 
 \tilde{Z}_0 \left( \sigma, g \right) = \left(x, u \right)\right).$$
 First step : We prove that $\forall \sigma \in \mathcal{M}$, $\forall g \in \mathcal{G}$, the state space $\mathcal{S}_g$ is $\nu$-small, i.e. 
 there exists a non-trivial measure $\nu$ 
 on $\mathcal{B} ( \reels^2 )$ such that $\forall (x,u) \in \mathcal{S}_g, \forall B \in \mathcal{B} (\mathcal{S}_g)$, $P \left( \sigma, g \right) \left((x,u), B \right) \geq \nu \left( B \right)$.
Let $B = [x_a, x_b] \times [u_a, u_b]$. We are choosing $\nu$ such that $\nu = 0$ outside $[- \frac{g^-}{4}, \frac{g^-}{4}] \times [3, 4]$. Thus, 
without loss of generality, 
we have that $[x_a, x_b] \times [u_a, u_b] \subset [- \frac{g^-}{4}, \frac{g^-}{4}] \times [3, 4]$. We want to show 
that $\exists c > 0$ such that 
uniformly
$$P \left( \sigma, g \right) \left( \left(x, u \right), B \right) \geq c \left( x_b - x_a \right) \left( u_b - u_a \right).$$

There are two useful ways to rewrite $(\tilde{X}^{(3)}, \tilde{X}^{(4)})$. The first one is :
\begin{eqnarray}
\label{W11}      \tilde{X}_t^{(3)} & := & \sigma^{(3)} \tilde{B}_{t}^{(3)},\\
\label{W12}      \tilde{X}_t^{(4)} & := & \rho^{3,4} \sigma^{(4)} \tilde{B}_{t}^{(3)} + 
      \left( 1 - \left( \rho^{3,4} \right)^2 \right)^{1/2} \sigma^{(4)} \tilde{B}_{t}^{3,\perp}.
\end{eqnarray}
where $\tilde{B}^{(3)}$ and $\tilde{B}^{3,\perp}$ are independent, $\rho^{3,4} \in [\rho_-^{3,4}, \rho_+^{3,4}]$ 
and $\max \left( - \rho_-^{3,4}, \rho_+^{3,4} \right) < 1$ (because $\sigma \in \mathcal{M}$), 
\begin{eqnarray}
\label{delta}
\delta = \left(1 - \max \left( \left(\rho_-^{3,4} \right)^2, \left( \rho_+^{3,4} \right)^2 \right) \right)^{1/2}.
\end{eqnarray} 
The other way to rewrite it is : 
\begin{eqnarray}
 \label{W21}     \tilde{X}_t^{(4)} & := & \sigma^{(4)} \tilde{B}_{t}^{(4)},\\
 \label{W22}     \tilde{X}_t^{(3)} & := & \rho^{3,4} \sigma^{(3)} \tilde{B}_{t}^{(4)} + 
      \left( 1 - \left( \rho^{3,4} \right)^2 \right)^{1/2} \sigma^{(3)} \tilde{B}_{t}^{4,\perp}.
\end{eqnarray}

where $\tilde{B}^{(4)}$ and $\tilde{B}^{4,\perp}$ are independent. For $\left( B_t \right)_{t \geq 0}$ a standard 
Brownian motion, $a < x < b$, we denote the exiting-zone time of the Brownian motion
$$\tau_x^{a,b} = \inf \{ t > 0 \text{ s.t. } x + B_t = a \text{ or } x + B_t = b \}$$
and $p_1(x,a,b,t)$ the density of $\tau_x^{a,b}$. We also define $p_2 (x,a,b,s,y)$ the distribution of $B_s + x$ conditioned on $\{ \tau_x^{a,b} \geq s \}$. 
Finally, let $p_3 (x,a,b,t)$ the distribution of $\tau_x^{a,b}$ conditioned on $\{ B_{\tau_x^{a,b}} = b \}$. All the formulas can be found in Borodin and Salminen 
(2002). Consider the spaces $C_1 = C_3 = \{ \left(x,a,b,t \right) \in \reels^4 \text{ s.t. } a \leq x \leq b, t > 0 \}$, 
$C_2 = \{ \left(x,a,b,t,y \right) \in \reels^5 \text{ s.t. } a \leq x \leq b\text{ , } a < y < b \text{ , } t > 0 \}$. The functions $p_i$ are 
continuous on $C_i$ and positive. Thus, for all compact set $ K_i \subset C_i$, we have 
\begin{eqnarray}
\label{compact} \underset{k \in K_i}{\inf} p_i (k) > 0.
\end{eqnarray}
We can bound below
\begin{eqnarray*}
 P \left( \sigma, g \right) \left( \left(x, u \right), B \right) & \geq & \proba \left( E_0 \bigcap E_1 
 \bigcap E_2 \bigcap E_3 \bigcap E_4  \bigg| \tilde{Z}_0 = (x,u) \right),
\end{eqnarray*}
where
\begin{eqnarray*}
 E_0 & = & \Big\{ \underset{0 \leq s \leq \tilde{\tau}_1^{(2)}}{\sup} \Big| \tilde{X}_s^{(3)} \Big| < \frac{\epsilon \sigma^- \min(\sigma^-,1)}
 {15 \sigma^+} \text{ , } 
 \tilde{\tau}_1^{(2)} \leq K \Big\},\\
 E_1 & = & \Big\{ \underset{\tilde{\tau}_1^{(2)} \leq s \leq K+1}{\sup} \Big| \tilde{X}_s^{(3)} \Big| < \frac{\epsilon \sigma^-}{10 \sigma^+} \text{ , } 
 \underset{\tilde{\tau}_1^{(2)} \leq s \leq K+1}{\sup} \Big| \Delta \tilde{B}_{[\tilde{\tau}_1^{(2)}, s]}^{3, \perp} \Big| < \frac{g^- \sigma^-}
 {4 \left(\sigma^+ \right)^2} \Big\},\\
 E_2 & = & \Big\{ \underset{K+1 \leq s \leq \tilde{\tau}_2^{(2)}}{\sup} \Big| \tilde{X}_s^{(3)} \Big| \leq \frac{\epsilon}{5} \text{ , } 
 \tilde{\tau}_2^{(2)} \in [K+2, K+3] \Big\}, \\
  E_3 & = & \Big\{ \forall s \in [\tilde{\tau}_2^{(2)}, K+4] \text{ } \tilde{X}_{s}^{(3)} \in [d_1 (K), u_1 (K)] \text{ , } 
 \tilde{X}_{K+4}^{(3)} \in [u_1 (K) - 2 \epsilon, u_1 (K) - \epsilon] \Big\} \\
 & & \bigcap \Big\{ \underset{\tilde{\tau}_2^{(2)} \leq s \leq K+4}{\sup} \Big| \Delta \tilde{X}_{[\tilde{\tau}_2^{(2)},s]}^{(4)} \Big| < \frac{g^-}{12} \Big\}, \\
 E_4 & = & \Big\{ \tilde{\tau}_1^{(1)} \in [u_a + \tilde{\tau}_2^{(2)}, u_b + \tilde{\tau}_2^{(2)}] \text{ , } \underset{K+4 \leq s \leq \tilde{\tau}_1^{(1)}}{\inf} \Delta \tilde{X}_{[K+4,s]}^{(3)} > 
 - 2 \epsilon  \Big\} \\
 & & \bigcap \Big\{  \underset{K+4 \leq s \leq \tilde{\tau}_1^{(1)}}{\sup} \Big| 
 \Delta \tilde{X}_{[\tilde{\tau}_2^{(2)},s]}^{(4)} \Big| < g^- \text{ , } \Delta \tilde{X}_{[\tilde{\tau}_2^{(2)},\tilde{\tau}_1^{(1)}]}^{(4)} \in [x_a, x_b] \Big\},
 \end{eqnarray*}
 where $\epsilon = \frac{g^- \sigma^-}{24 \sigma^+}$. Using extensively Bayes formula, we can rewrite 
 $$\proba \left( E_0 \bigcap E_1 
 \bigcap E_2 \bigcap E_3 \bigcap E_4 \bigcap \{ \tilde{Z}_1 \in B \} \bigg| \tilde{Z}_0 = (x,u) \right) = I \times II \times III \times IV \times V,$$ 
 where 
 $I= \proba \left( E_0 \bigg| \{ \tilde{Z}_0 = (x,u) \} \right)$, $II= \proba \left( E_1 \bigg| E_0 \bigcap \{ \tilde{Z}_0 = (x,u) \} \right)$, 
 and also $III= \proba \left( E_2 \bigg| E_1 \bigcap E_0 \bigcap \{ \tilde{Z}_0 = (x,u) \} \right)$, $IV= \proba \left( E_3 \bigg| E_2 \bigcap E_1 \bigcap E_0 
 \bigcap \{ \tilde{Z}_0 = (x,u) \} \right)$ and $V = \proba \left( E_4 \bigg| E_3 \bigcap E_2 \bigcap E_1 \bigcap E_0 
 \bigcap \{ \tilde{Z}_0 = (x,u) \} \right)$. 
 
 We prove that $I$ is uniformly bounded away from $0$. Using (\ref{M}), (\ref{W11}), (\ref{W12}) and (\ref{delta}), we deduce that $E_0^{(1)} \bigcap E_0^{(2)} \subset 
 E_0$ where
 \begin{eqnarray*}
  E_0^{(1)} & = & \Big\{ \underset{0 \leq s \leq K}{\sup} \Big| \tilde{B}_s^{(3)} \Big| < \frac{\epsilon \sigma^- \min(\sigma^-,1)}
 {15 \left( \sigma^+ \right)^2} \Big\}, \\
 E_0^{(2)} & = & \Big\{ \underset{0 \leq s \leq K}{\sup} \Big| \frac{x}{\sigma^{(4)} \left(1 - \left( \rho^{3,4} \right)^2 \right)^{1/2}} + 
 \tilde{B}_s^{3, \perp} \Big| \geq \frac{g^+}{\delta \sigma^{-}} + 
 \frac{\epsilon \sigma^- \min(\sigma^-,1)}
 {15 \left( \sigma^+ \right)^2} \Big\}.
 \end{eqnarray*}
Conditionally on $\{ \tilde{Z}_0 = (x,u) \}$, $E_0^{(1)}$ and $E_0^{(2)}$ are independent. Thus, we deduce
$$I \geq \proba \left( E_0^{(1)} \bigg| \{ \tilde{Z}_0 = (x,u) \} \right)
\proba \left( E_0^{(2)} \bigg| \{ \tilde{Z}_0 = (x,u) \} \right).$$
Using Markov property of Brownian motions, we obtain that the right part of the inequality is equal to
$$\left(1 - \int_0^K p_1 \left( 0, - \frac{\epsilon \sigma^- \min(\sigma^-,1)}
 {15 \left( \sigma^+ \right)^2}, 
\frac{\epsilon \sigma^- \min(\sigma^-,1)}
 {15 \left( \sigma^+ \right)^2}, t \right) dt \right)
\int_0^K p_1 \left( y_0^{(1)}, 
- y_0^{(2)}, y_0^{(2)}, t \right) dt, $$
where $y_0^{(1)} = \frac{x}{\sigma^{(4)} \left(1 - \left( \rho^{3,4} \right)^2 \right)^{1/2}}$, 
$y_0^{(2)} = \frac{g^+}{\delta \sigma^{-}} + 
 \frac{\epsilon \sigma^- \min(\sigma^-,1)}
 {15 \left( \sigma^+ \right)^2}$, which is uniformly (in $x$, $\sigma$ and $g$) bounded away from 0 using (\ref{M}) and (\ref{compact}).

We prove that $II$ is uniformly bounded away from $0$. Conditionally on $E_0 \bigcap \{ \tilde{Z}_0 = (x,u) \}$, the two quantities of $E_1$ are independent. 
Thus, we bound below $II$ (the same way we did for $I$) by
$$\left(1 - \int_{\tilde{\tau}_1^{(2)}}^{K+1} p_1 \left( \tilde{B}_{\tilde{\tau}_1^{(2)}}^{(3)}, - \frac{\epsilon \sigma^-}
  {10 \sigma^{+} \sigma^{(3)}}, \frac{\epsilon \sigma^-}{10 \sigma^{+} \sigma^{(3)}}, t \right) dt \right)$$
$$ \left(1 - \int_{\tilde{\tau}_1^{(2)}}^{K+1} p_1 \left(0, - \frac{g^- \sigma^{-}}{4 \sigma^+ \sigma^{(4)}}, \frac{g^- \sigma^{-}}{4 \sigma^+ \sigma^{(4)}}, t \right) dt \right), $$
which is uniformly bounded away from 0 using (\ref{M}) together with (\ref{compact}).
 
 We prove that $III$ is uniformly bounded away from $0$. Using (\ref{M}), (\ref{W11}), (\ref{W12}) and (\ref{delta}), we deduce that $E_2^{(1)} \bigcap E_2^{(2)} \subset 
 E_2$ where
 \begin{eqnarray*}
  E_2^{(1)} & = & \Big\{ \underset{K+1 \leq s \leq K+3}{\sup} \Big| \tilde{B}_s^{(3)} \Big| \leq \frac{\epsilon}{5 \sigma^{+}} \Big\}, \\
 E_2^{(2)} & = & \Big\{ \underset{K+1 \leq s \leq K+2}{\sup} \Big| \Delta \tilde{B}_{[\tilde{\tau}_1^{(2)}, s]}^{3, \perp} \Big| < \frac{g^-}{2 \sigma^{+}} 
 \text{ , } \underset{K+2 \leq s \leq K+3}{\sup} \Big| \Delta \tilde{B}_{[\tilde{\tau}_1^{(2)}, s]}^{3, \perp} \Big| \geq \frac{g^+}{\delta \sigma^{-}} + 
 \frac{\epsilon}{5 \sigma^+ \delta} \Big\}.
 \end{eqnarray*}
Conditionally on $E_1 \bigcap E_0 \bigcap \{ \tilde{Z}_0 = (x,u) \}$, $E_2^{(1)}$ and $E_2^{(2)}$ are independent. Thus, we deduce
$$III \geq \proba \left( E_2^{(1)} \bigg| E_1 \bigcap E_0 \bigcap \{ \tilde{Z}_0 = (x,u) \} \right)
\proba \left( E_2^{(2)} \bigg| E_1 \bigcap E_0 \bigcap \{ \tilde{Z}_0 = (x,u) \} \right).$$
Using Markov property of Brownian motions, we obtain that the right part of the inequality conditioned on $\{ \tilde{B}_{K+1}^{(3)} \text{ , } 
\Delta \tilde{B}_{[\tilde{\tau}_1^{(2)}, K+1]}^{3,\perp} \Big| 
E_1 \bigcap E_0 \bigcap \{ \tilde{Z}_0 = (x,u) \} \}$ is equal to
$$\left(1 - \int_0^2 p_1 \left(\tilde{B}_{K+1}^{(3)} , - \frac{\epsilon}{5 \sigma^+}, \frac{\epsilon}{5 \sigma^+}, t \right) dt \right)
\left( 1 - \int_0^1 p_1 \left(\Delta \tilde{B}_{[\tilde{\tau}_1^{(2)}, K+1]}^{3,\perp}, - \frac{g^+}{2 \sigma^+}, \frac{g^+}{2 \sigma^+}, t \right) dt \right)$$
$$\times \int_{-\frac{g^-}{2 \sigma^+}}^{\frac{g^-}{2 \sigma^+}} \int_1^2 p_1 \left(y, - \left( \frac{g^+}{\delta \sigma^{-}} + 
 \frac{\epsilon}{5 \sigma^+ \delta} \right) , \frac{g^+}{\delta \sigma^{-}} + 
 \frac{\epsilon}{5 \sigma^+ \delta}, t \right) dt dq(y),$$
where $q$ is the (conditional) distribution of $\Delta \tilde{B}_{[\tilde{\tau}_1^{(2)}, K+1]}^{3,\perp} + B_1$ 
conditioned on $$\bigg\{ \tau_{\Delta \tilde{B}_{[\tilde{\tau}_1^{(2)}, K+1]}^{3,\perp}}^{-\frac{g^-}{2 \sigma^+}, \frac{g^-}{2 \sigma^+}} \geq 1 \bigg\}.$$ 
Using the definition of $E_1$ together with (\ref{M}) and (\ref{compact}), we have $III$ which is uniformly bounded away from 0.

We prove that $IV$ is uniformly bounded away from $0$. Using (\ref{W21}) and (\ref{W22}), we deduce that $E_3^{(1)} \bigcap E_3^{(2)} \subset 
 E_3$ where
 \begin{eqnarray*}
  E_3^{(1)} & = & \Big\{ \underset{\tilde{\tau}_2^{(2)} \leq s \leq K+4}{\sup} \Big| \Delta \tilde{B}_{[\tilde{\tau}_2^{(2)}, s]}^{(4)} \Big| < 
  \frac{\epsilon \sigma^-}{5 \sigma^{+} \sigma^{(4)}} \Big\}, \\
 E_3^{(2)} & = & \Big\{ \forall s \in [\tilde{\tau}_2^{(2)}, K+4] \text{ } \Delta \tilde{B}_{[\tilde{\tau}_2^{(2)}, s]}^{4, \perp} \in 
 [y_3^{(1)}, y_3^{(2)}] \text{ , } 
 \Delta \tilde{B}_{[\tilde{\tau}_2^{(2)}, K+4]}^{4, \perp} \in [y_3^{(3)}, y_3^{(4)}] \Big\},
 \end{eqnarray*}
 with $y_3^{(1)} = \frac{d_1 (K) + 2 \epsilon/5}{\sigma^{(4)} \left( 1 - \left( \rho^{3,4} \right)^2 \right)^{1/2}}$, 
 $y_3^{(2)} = \frac{u_1 (K) - 2 \epsilon/5}{\sigma^{(4)} \left( 1 - \left( \rho^{3,4} \right)^2 \right)^{1/2}}$, 
 $y_3^{(3)} = \frac{u_1 (K) - 8 \epsilon/5}{\sigma^{(4)} \left( 1 - \left( \rho^{3,4} \right)^2 \right)^{1/2}}$, 
 as well as $y_3^{(4)} = \frac{u_1 (K) - 7 \epsilon/5}{\sigma^{(4)} \left( 1 - \left( \rho^{3,4} \right)^2 \right)^{1/2}}$.
Conditionally on $E_2 \bigcap E_1 \bigcap E_0 \bigcap \{ \tilde{Z}_0 = (x,u) \}$, $E_3^{(1)}$ and $E_3^{(2)}$ are independent. Thus, we deduce
$$IV \geq \proba \left( E_3^{(1)} \bigg| E_2 \bigcap E_1 \bigcap E_0 \bigcap \{ \tilde{Z}_0 = (x,u) \} \right)$$
$$\proba \left( E_3^{(2)} \bigg| E_2 \bigcap E_1 \bigcap E_0 \bigcap \{ \tilde{Z}_0 = (x,u) \} \right).$$
Using Markov property of Brownian motions, we obtain that the right part of the inequality conditioned on $\{ \tilde{\tau}_2^{(2)} \bigg| 
E_2 \bigcap E_1 \bigcap E_0 \bigcap \{ \tilde{Z}_0 = (x,u) \} \}$ is equal to
$$\left( 1 - \int_0^{K+4 - \tilde{\tau}_2^{(2)}} p_1 \left(0, - \frac{\epsilon \sigma^-}{5 \sigma^+ \sigma^{(4)}}, \frac{\epsilon \sigma^-}{5 \sigma^+ \sigma^{(4)}}, 
t \right) dt \right) \left( 1 - \int_0^{K+4 - \tilde{\tau}_2^{(2)}} p_1 \left( 0, y_3^{(1)}, y_3^{(2)}, t \right) dt \right)$$ 
$$ \times \int_{y_3^{(3)}}^{y_3^{(4)}} p_2 \left( 0, y_3^{(1)}, y_3^{(2)}, K + 4 - \tilde{\tau}_2^{(2)}, y \right) dy, $$
which is uniformly bounded away from 0 using (\ref{M}), (\ref{delta}) and (\ref{compact}).

We prove that $V > c (x_b - x_a) (u_b - u_a)$. Using (\ref{W11}) and (\ref{W12}), we deduce that $E_4^{(1)} \bigcap E_4^{(2)} \subset 
 E_4$ where
 \begin{eqnarray*}
  E_4^{(1)} & = & \Big\{ \tilde{\tau} \in [u_a + \tilde{\tau}_2^{(2)}, u_b + \tilde{\tau}_2^{(2)}] \text{ , } \tilde{X}_{\tilde{\tau}}^{(3)} = u_1 (K) \Big\}, \\
 E_4^{(2)} & = & \Big\{ \underset{K+4 \leq s \leq \tilde{\tau}}{\sup} \Big| \Delta \tilde{B}_{[K+4,s]}^{3, \perp} \Big| < \frac{5 g^-}
 {6 \sigma^{(4)} \left( 1 - \left( \rho^{3,4} \right)^2 \right)^{1/2}} \text{ , } \Delta \tilde{B}_{[L+4,\tilde{\tau}]}^{3, \perp} 
 \in [y_4^{(1)}, y_4^{(2)}] \Big\},
 \end{eqnarray*}
$\tilde{\tau} = \inf \{ t > K+4 \text{ : } \tilde{X}_t^{(3)} = u_1 (K) \text{ or } \Delta \tilde{X}_{[K+4,t]}^{(3)} = - 2 \epsilon \}$, 
$$y_4^{(1)} = \frac{x_a - \Delta \tilde{X}_{[\tilde{\tau}_2^{(2)}, K+4]}^{(4)} - \rho^{3,4} \sigma^{(4)} \left( \sigma^{(3)} \right)^{-1} 
\left( u_1 (K) - \tilde{X}_{K+4}^{(3)} \right)}
{\sigma^{(4)} \left( 1 - \left( \rho^{3,4} \right)^2 \right)^{1/2}},$$ 
and $y_4^{(2)} = \frac{x_b - \Delta \tilde{X}_{[\tilde{\tau}_2^{(2)}, K+4]}^{(4)} - \rho^{3,4} \sigma^{(4)} \left( \sigma^{(3)} \right)^{-1} 
\left( u_1 (K) - \tilde{X}_{K+4}^{(3)} \right)}
{\sigma^{(4)} \left( 1 - \left( \rho^{3,4} \right)^2 \right)^{1/2}}$.
We have
$$V = \proba \left( \tilde{X}_{\tilde{\tau}}^{(3)} = u_1 (K) 
 \right) $$ $$\times \proba \left( E_4^{(1)} \bigcap E_4^{(2)} \bigg| E_3 \bigcap E_2 \bigcap E_1 \bigcap E_0 
 \bigcap \{ \tilde{Z}_0 = (x,u) \} \bigcap \{ \tilde{X}_{\tilde{\tau}}^{(3)} = u_1 (K) \} \right). $$
 The first term on the right part of the equation is uniformly bounded away from 0 (Borodin and Salminen (2002)). 
 Because $\tilde{\tau}$ is a function of $\tilde{X}^{(3)}$ and $\tilde{B}^{3,\perp}$ is independent with $\tilde{X}^{(3)}$, $\tilde{\tau}$ and 
 $\tilde{B}^{3, \perp}$ are independent. 
 Thus the second term on the right conditioned on 
 $$\{ y_4^{(1)}, y_4^{(2)}, X_{K+4}^{(3)}, \tilde{\tau}_2^{(2)} \Big| E_3 \bigcap E_2 \bigcap E_1 \bigcap E_0 
 \bigcap \{ \tilde{Z}_0 = (x,u) \} \}$$
 can be expressed as :
 $$\int_{u_a + \tilde{\tau}_2^{(2)} - (K+4)}^{u_b + \tilde{\tau}_2^{(2)} - (K+4)} \int_{y_4^{(1)}}^{y_4^{(2)}} p_3 \left(\frac{X_{K+4}^{(3)}}{\sigma^{(3)}}, \frac{X_{K+4}^{(3)} - 2 \epsilon}{\sigma^{(3)}}, 
 \frac{u_1(K)}{\sigma^{(3)}}, t \right) p_2 \left( 0, - \frac{5 g^-}
 {y_4^{(3)}}, \frac{5 g^-}
 {y_4^{(3)}}, t,y \right) dt dy,$$
 where $y_4^{(3)} = 6 \sigma^{(4)} \left( 1 - \left( \rho^{3,4} \right)^2 \right)^{1/2}$. We have that $y_4^{(1)}$ and $y_4^{(2)}$ are dominated by $\frac{3 g^-}
 {4 \sigma^{(4)} \left( 1 - \left( \rho^{3,4} \right)^2 \right)^{1/2}}$. Using this together with (\ref{M}), (\ref{delta}) and (\ref{compact}), we 
 deduce that $V \geq c (x_b - x_a) (u_b - u_a)$.
 
\bigskip
 Second step : We prove that $\Big\rVert \psi^{AV} \Big\rVert_{\infty} := \underset{\sigma \in \mathcal{M}, g \in \mathcal{G}, (x, u) 
 \in \mathcal{S}_g}{\sup} \Big| \psi^{AV} \left( \sigma, g, x, u \right) \Big| < \infty$.
 To show this, we bound the term as
 $$\esp \left[ \left( \Delta \tilde{X}_{\tilde{\tau}_2^{1C}}^{(1)} 
\Delta \tilde{X}_{\tilde{\tau}_2^{1C,-,+}}^{(2)} - \tilde{\zeta}^{1,2} \Delta \tilde{\tau}_2^{1C} \right)^2 \right] 
\leq 2 \esp \left[ \left( \Delta \tilde{X}_{\tilde{\tau}_2^{1C}}^{(1)} 
\Delta \tilde{X}_{\tilde{\tau}_2^{1C,-,+}}^{(2)} \right)^2 + \left( \tilde{\zeta}^{1,2} \Delta \tilde{\tau}_2^{1C} \right)^2 \right].$$
The second term in the right hand-side of the inequality is uniformly bounded using (\ref{M}) and Lemma \ref{esptauk}. Using successively 
Cauchy-Schwarz and Burholder-Davis-Gundy inequality, (\ref{M}) and Lemma \ref{esptauk}, we can also bound uniformly the first term. 
The other term of (\ref{psiAV1}) can be bounded in the same way.

\bigskip
 Third step : Define $q = (\sigma, g, x, u)$ and $$\mathcal{Q} = \left\{ (\sigma, g, x ,u) \text{ s.t. } \sigma \in \mathcal{M}, g \in \mathcal{G}, (x, u) 
 \in \mathcal{S}_g \right\}.$$ Prove that $\forall q \in \mathcal{Q}$, 
 there exists a measure $\tilde{\pi} \left( \sigma, g \right)$ such that
 $$\underset{q \in \mathcal{Q}}{\sup} \bigg| \sum_{l=0}^{n-1}  \int_{\reels^2} \psi^{AV} \left(\sigma, g, y, v \right) d \tilde{\pi}_l 
 \left(\sigma, g, x, u \right) \left(y, v \right) - n \int_{\reels^2} \psi^{AV} \left(\sigma, g, y, v \right) d \tilde{\pi} 
 \left(\sigma, g \right) \left(y, v \right) \bigg|$$ $$= n o_p(1).$$
 To show this, we use first step together with $Th.16.0.2$ $(v)$ (Meyn and Tweedie (2009)). We obtain that there exists $\tilde{\pi} \left( \sigma, g \right)$ where 
 $$\Big\rVert P^n\left( \sigma, g \right) \left(\left(x, u \right), . \right) - \tilde{\pi} \left( \sigma, g \right) \Big\rVert_{TV} \leq 2 r^{n}$$
 and $r = 1 - \nu \left( \reels^2 \right)$. Thus, we deduce :
 $$\bigg| \int_{\reels^2} \psi^{AV} \left(\sigma, g, y, v \right) d \tilde{\pi}_l 
 \left(\sigma, g, x, u \right) \left(y, v \right) - \int_{\reels^2} \psi^{AV} \left(\sigma, g, y, v \right) d \tilde{\pi} 
 \left(\sigma, g \right) \left(y, v \right) \bigg|$$
 \begin{eqnarray}
 \label{mark1}
 \leq \Big\rVert \psi^{AV} \Big\rVert_{\infty} \Big\rVert \tilde{\pi}_l \left(\sigma, g, x, u \right) - \tilde{\pi} 
 \left(\sigma, g \right) \Big\rVert_{TV} \leq 2 \Big\rVert \psi^{AV} \Big\rVert_{\infty} r^l.
 \end{eqnarray}
 We want to show that $\forall \epsilon > 0$, $\exists N > 0$ such that $\forall n \geq N$ :
 $$\bigg| \sum_{l=0}^{n-1} \int_{\reels^2} \psi^{AV} \left(\sigma, g, y, v \right) d \tilde{\pi}_l 
 \left(\sigma, g, x, u \right) \left(y, v \right) - n \int_{\reels^2} \psi^{AV} \left(\sigma, g, y, v \right) d \tilde{\pi} 
 \left(\sigma, g \right) \left(y, v \right) \bigg| $$
 \begin{eqnarray}
 \label{mark2}
 < \epsilon n.
 \end{eqnarray}
 
The rest is a straightforward analysis exercise. Let $\epsilon > 0$. $\exists N_1 > 0$ such that $r^{N_1} < \frac{\epsilon}{2}$. Choosing 
$N > 8 N_1 \epsilon^{-1} \rVert \psi^{AV} \rVert_{\infty}^{-1}$, we first use the triangular inequality, and then 
split the sum of the left part of (\ref{mark2}) in two parts, one up to $N_1$ and the other one up to N. We use (\ref{mark1}) in the second part to obtain (\ref{mark2}).

\bigskip
Fourth step : Proving the Lemma. Let $w > 0$. From Lemma \ref{sumtight}, we just have to show that
$$ \alpha_n^{2} \sum_{i=1}^{\llcorner 
 w \alpha_n^{-2} h \left( n \right)^{-1} \lrcorner} 
 \bigg| \sum_{j=0}^{h_n - 2} \int_{\reels^2} \psi^{AV} \left(\sigma_{\tau_{i-1,n}^h}, g_{\tau_{i-1,n}^h}, 
\alpha_n^{-1} y, \alpha_n^{-2} v \right) d \tilde{\pi}_{i-1,j,n} \left( y,v \right)$$ 
$$- h_n \phi^{AV} \left(\sigma_{\tau_{i-1,n}^h}, g_{\tau_{i-1,n}^h} \right) \bigg|$$
tends to 0 in probability. Using third step together with standard results on regular conditional distributions (see for instance 
Section $4.3$ (pp. $77-80$) in Breiman (1992)), we prove the lemma.
 
 \end{proof}

\begin{approxtau}
We have
 \label{approxtau}
$$\alpha_n^2 \sum_{i \in A_n} \esp_{\tau_{i-1,n}^h} \left[ 
 \left( \sigma_{\tau_{i-1,n}^h}^{(1)} \right)^2  \left( \sigma_{\tau_{i-1,n}^h}^{(2)} \right)^2 
h_n \phi^{AV} \left(\sigma_{\tau_{i-1,n}^h}, g_{\tau_{i-1,n}^h} \right) \Delta \tau_{i,n}^h \left( 
\esp_{\tau_{i-1}^h} \left[ \Delta \tau_{i,n}^h \right] \right)^{-1} \right]$$
$$ = \sum_{i \in A_n} \esp_{\tau_{i-1,n}^h} \left[  \phi^{AV} \left(\sigma_{\tau_{i-1,n}^h}, g_{\tau_{i-1,n}^h} \right) \Delta \tau_{i,n}^h
\left( \phi_{\tau_{i-1,n}^h}^{\tau} \right)^{-1} \right] + o_p (1).$$
\end{approxtau}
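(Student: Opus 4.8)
The plan is to reduce the identity to a single estimate — that the conditional expected length of a block equals the number of one-correlated steps times the stationary average of the instantaneous one-correlated time — and to exploit a measurability collapse that removes the random factor $\Delta \tau_{i,n}^h$ from the left-hand side. First I would note that, apart from $\Delta \tau_{i,n}^h$ itself, every factor on the left is $\mathcal{F}_{\tau_{i-1,n}^h}$-measurable, as is $\esp_{\tau_{i-1,n}^h}[\Delta \tau_{i,n}^h]$, which is moreover strictly positive and finite by Lemma \ref{esptauk}. Hence
$$\esp_{\tau_{i-1,n}^h}\big[ \Delta \tau_{i,n}^h \big( \esp_{\tau_{i-1,n}^h}[\Delta \tau_{i,n}^h] \big)^{-1} \big] = 1,$$
so the left-hand side collapses to the deterministic block sum $\alpha_n^2 \sum_{i \in A_n} (\sigma_{\tau_{i-1,n}^h}^{(1)})^2 (\sigma_{\tau_{i-1,n}^h}^{(2)})^2 h_n \, \phi^{AV}(\sigma_{\tau_{i-1,n}^h}, g_{\tau_{i-1,n}^h})$. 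By the same measurability argument the right-hand side equals $\sum_{i \in A_n} \phi_{\tau_{i-1,n}^h}^{AV} (\phi_{\tau_{i-1,n}^h}^{\tau})^{-1} \esp_{\tau_{i-1,n}^h}[\Delta \tau_{i,n}^h] + o_p(1)$. Thus the whole statement is equivalent to showing that replacing $\esp_{\tau_{i-1,n}^h}[\Delta \tau_{i,n}^h]$ by $\alpha_n^2 h_n \phi_{\tau_{i-1,n}^h}^{\tau}$ is permissible after summation.

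The heart of the argument is therefore the uniform expected-duration estimate
$$\esp_{\tau_{i-1,n}^h}\big[ \Delta \tau_{i,n}^h \big] = \alpha_n^2 h_n \, \phi^{\tau}\big( \sigma_{\tau_{i-1,n}^h}, g_{\tau_{i-1,n}^h} \big) + o_p^U\big( \alpha_n^2 h_n \big),$$
uniformly in $i$. I would obtain this by replaying the proofs of Lemma \ref{holdingconst} and Lemma \ref{approxdistrib} with the functional $\psi^{\tau}$ of (\ref{psitau}) in place of $\psi^{AV}$: write $\Delta \tau_{i,n}^h$ as the sum of the $h_n$ one-correlated durations in the block, replace each true duration by its constant-volatility, constant-grid analogue via Lemma \ref{tauapp}, extract the factor $\alpha_n^2$ through the scaling identity of Lemma \ref{scale}, and average the per-step expected durations against the stationary distribution $\tilde{\pi}(\sigma_{\tau_{i-1,n}^h}, g_{\tau_{i-1,n}^h})$. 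The ergodic-averaging step transcribes from the proof of Lemma \ref{approxdistrib} essentially verbatim, since the uniform minorization of the transition kernel established there (the $\nu$-small-set bound) does not depend on which bounded functional is integrated, and $\psi^{\tau}$ is bounded, and bounded away from $0$, by Lemma \ref{esptauk}.

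Granting this estimate, the reciprocal is controlled because $\phi^{\tau}$ is bounded away from $0$: Lemma \ref{esptauk} gives $\esp_{\tau_{i-1,n}^h}[\Delta \tilde{\tau}^{1C}] \geq C_1^- \alpha_n^2$, whence $\phi^{\tau} \geq C_1^- > 0$, so $\esp_{\tau_{i-1,n}^h}[\Delta \tau_{i,n}^h] = \alpha_n^2 h_n \phi_{\tau_{i-1,n}^h}^{\tau}(1 + o_p^U(1))$. Substituting this into the right-hand side and using that the remaining multiplicative factors are uniformly bounded — the volatilities by (\ref{smsp}) and $\phi^{AV}$ by $\| \psi^{AV} \|_{\infty} < \infty$, established in the second step of the proof of Lemma \ref{approxdistrib} — the discrepancy between the two sides is bounded by $\sup_i | o_p^U(1) |$ times $\sum_{i \in A_n} \esp_{\tau_{i-1,n}^h}[\Delta \tau_{i,n}^h]$. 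The latter sum is $O_p(1)$, because there are $O_p(\alpha_n^{-2} h_n^{-1})$ blocks by Lemma \ref{numb} and each has expected length $O(\alpha_n^2 h_n)$; an application of Lemma \ref{sumtight} converts this into the desired $o_p(1)$.

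The main obstacle will be the uniform expected-duration estimate. Although it is structurally a copy of Lemma \ref{approxdistrib}, one must verify that the approximation errors supplied by Lemma \ref{tauapp} remain of order $o_p^U(\alpha_n^2 h_n)$ after being summed over the $h_n$ steps of each block, and that the minorization underlying the ergodic average is genuinely uniform over $\sigma \in \mathcal{M}$ and $g \in \mathcal{G}$ so that the relative error $o_p^U(1)$ is the same across all blocks; everything else is routine bookkeeping with Lemma \ref{esptauk}, Lemma \ref{scale} and Lemma \ref{sumtight}.
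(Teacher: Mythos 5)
Your proof is correct and follows essentially the same route as the paper's: the paper likewise replaces $\esp_{\tau_{i-1,n}^h}\left[\Delta\tau_{i,n}^h\right]$ first by the block sum $u_{i,n}=\sum_{j}\int\psi^{\tau}\,d\tilde{\pi}_{i-1,j,n}$ via Lemma \ref{tauapp} and then by $\alpha_n^2 h_n\phi^{\tau}_{\tau_{i-1,n}^h}$ by rerunning the ergodic argument of Lemma \ref{approxdistrib} with $\psi^{\tau}$ in place of $\psi^{AV}$, your preliminary measurability collapse and the final summation bookkeeping via Lemmas \ref{esptauk}, \ref{numb} and \ref{sumtight} being implicit there. (Note only that the factor $\left(\sigma^{(1)}_{\tau_{i-1,n}^h}\right)^2\left(\sigma^{(2)}_{\tau_{i-1,n}^h}\right)^2$ in the displayed left-hand side is a typo in the statement --- it is absent from the paper's own proof and from the lemma's subsequent use --- so you are right to let it drop.)
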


\begin{proof}
 First step : Defining 
 $$
   \left \{
   \begin{array}{r c l}
      u_{i,n} & := & \sum_{j=0}^{h_n - 2} \int_{X} \psi^{\tau} \left(\sigma_{\tau_{i-1,n}^h}, g_{\tau_{i-1,n}^h}, 
x, u \right) d \tilde{\pi}_{i-1,j,n} \left( x,u \right), \\
      A_0 & := & \alpha_n^2 \sum_{i \in A_n} \esp_{\tau_{i-1,n}^h} \left[ h_n \phi^{AV} \left(\sigma_{\tau_{i-1,n}^h}, g_{\tau_{i-1,n}^h} \right) 
      \Delta \tau_{i,n}^h \left( 
  \esp_{\tau_{i-1}^h} \left[ \Delta \tau_{i,n}^h \right] \right)^{-1} \right], \\
A_1 & := & \alpha_n^2 \sum_{i \in A_n} \esp_{\tau_{i-1,n}^h} \left[ h_n \phi^{AV} \left(\sigma_{\tau_{i-1,n}^h}, g_{\tau_{i-1,n}^h} \right) 
\Delta \tau_{i,n}^h \left( u_{i,n} \right)^{-1} \right].
   \end{array}
   \right .
$$
we have that $A_0 = A_1 + o_p \left( 1 \right)$. To show this, in light of Lemma \ref{tauapp}, we have that 
$$\bigg| \esp_{\tau_{i-1,n}^h} \left[ \Delta \tau_{i,n}^h \right] - u_{i,n} \bigg| \leq h \left( n \right) C_n,$$
where $C_n$ tends to 0 in probability. From this, we can easily show that $A_0 = A_1 + o_p \left( 1 \right)$.

\bigskip
Second step : We have that 
$$A_1 = \sum_{i \in A_n} \esp_{\tau_{i-1,n}^h} \left[  \phi^{AV} \left(\sigma_{\tau_{i-1,n}^h}, g_{\tau_{i-1,n}^h} \right) \Delta \tau_{i,n}^h
\left( \phi_{\tau_{i,n}^h}^{\tau} \right)^{-1} \right] + o_p (1).$$
To prove it, we can mimic the proof of Lemma \ref{approxdistrib}, together with Lemma \ref{tauapp}.
\end{proof}

\subsection{Computation of the limits of $\langle M^n \rangle_t$, $\langle M^n, X^{(1)} \rangle_t$ 
and $\langle M^n, X^{(2)} \rangle_t$}

\begin{eqnarray*}
\langle M^n \rangle_t & = & \sum_{i \in A_n}
\esp_{{\tau}_{i-1,n}^h} \left[ \left( \Delta M_{\tau_{i,n}^h}^n \right)^2 \right] + o_p (1) \\
& = & \alpha_n^{-2} \sum_{i \in A_n}
\esp_{\tau_{i-1,n}^h} \left[ \sum_{u=2}^{h_n} \left( N_{(i-1)h_n + u} \right)^2 + 2 N_{(i-1)h_n + u} 
N_{(i-1)h_n + u+1} \right] + o_p (1) \\
& = & \alpha_n^{2} \sum_{i \in A_n} \sum_{j=0}^{h_n - 2} \int_{\reels^2} \psi^{AV} \left( \sigma_{\tau_{i-1,n}^h}, g_{\tau_{i-1,n}^h}, 
\alpha_n^{-1} x, \alpha_n^{-2} u \right) d \tilde{\pi}_{i-1,j,n} \left( x,u \right) + o_p (1),
\end{eqnarray*}
where we used Lemma 2.2.11 of Jacod and Protter (2012) in first equality, Lemma \ref{termsvanishing} in second equality, Lemma \ref{holdingconst} 
in third equality.

We deduce (using Lemma \ref{approxdistrib} in first equality and Lemma \ref{approxtau} 
in third equality)
\begin{eqnarray*}
\langle M^n \rangle_t & = & \alpha_n^{2} \sum_{i \in A_n} 
h_n \phi_{\tau_{i-1,n}^h}^{AV} + o_p (1)\\
 & = & \alpha_n^2 \sum_{i \in A_n} \esp_{\tau_{i-1,n}^h} \left[ h_n \phi_{\tau_{i-1,n}^h}^{AV} \Delta \tau_{i,n}^h \left( 
\esp_{\tau_{i-1}^h} \left[ \Delta \tau_{i,n}^h \right] \right)^{-1} \right] + o_p (1)\\
& = & \sum_{i \in A_n} \esp_{\tau_{i-1,n}^h} \left[ 
\phi_{\tau_{i-1,n}^h}^{AV} \Delta \tau_{i,n}^h
\left( \phi_{\tau_{i,n}^h}^{\tau} \right)^{-1} \right] + o_p (1).
\end{eqnarray*}

Using Lemma 2.2.11 of Jacod and Protter (2012) again, we deduce 
\begin{eqnarray*}
\langle M^n \rangle_t & = & \sum_{i \in A_n} 
\phi_{\tau_{i-1,n}^h}^{AV} \Delta \tau_{i,n}^h
\left( \phi_{\tau_{i,n}^h}^{\tau} \right)^{-1} + o_p (1).\\
\end{eqnarray*}

Using Lemma \ref{snk} together with Prop. $I.4.44$ (page 51) in Jacod and Shiryaev (2003), we obtain
\begin{eqnarray} \label{mm}
   \langle M^n \rangle_t \rightarrow \int_0^{t} \phi_{s}^{AV} \left( \phi_{s}^{\tau} \right)^{-1} ds.
\end{eqnarray}

Using the same approximations and computations, we also compute
\begin{eqnarray} \label{mx1}
   \langle M^n, X^{(1)} \rangle_t & \rightarrow & \int_0^{t} \phi_{s}^{AC1}  
\left( \phi_s^{\tau} \right)^{-1} ds, \\
\label{mx2}
\langle M^n, X^{(2)} \rangle_t & \rightarrow & \int_0^{t} \phi_{s}^{AC2}  
\left( \phi_s^{\tau} \right)^{-1} ds. 
\end{eqnarray}

\subsection{Computation of the asymptotic bias and variance}
We follow the idea in 1-dimension in pp. 155-156 of Mykland and Zhang (2012), and define an auxiliary 
martingale
$$\tilde{M}_t^n = M_t^n - \int_0^t k_s^{(1)} dX_s^{(1)} - \int_0^t k_s^{1,\perp} dX_s^{1,\perp},$$
where $X_t^{1,\perp}$ is defined in (\ref{X1perp}). Using \eqref{mx1}, we deduce 
\begin{eqnarray*}
\langle \tilde{M}^n, X^{(1)} \rangle_t & = & 
\langle M^n, X^{(1)} \rangle_t - \int_0^t k_s^{(1)} d \langle X^{(1)} \rangle_s\\
& \overset{ \proba }{\rightarrow} & \int_0^{t}  
\phi_s^{AC1} \left( \phi_s^{\tau} \right)^{-1} ds - \int_0^t k_s^{(1)} \left( 
\sigma_s^{(1)} \right)^2 ds. 
\end{eqnarray*}
Hence, we choose 
$$k_s^{(1)} = \left( \sigma_s^{(1)} \right)^{-2} \phi_s^{AC1} \left( \phi_s^{\tau} \right)^{-1}.$$
By the same techniques that we used to compute \eqref{mx1}, we have that 
\begin{eqnarray} \label{mx12}
   \langle M^n, \int_0^{.} \rho_s^{1,2} \sigma_s^{(2)} dB_s^{(1)} \rangle_t \rightarrow \int_0^{t} \left( \sigma_s^{(1)} \right)^{-1} \sigma_s^{(2)} 
  \rho_s^{1,2} 
\phi_s^{AC1} \left( \phi_s^{\tau} \right)^{-1} ds. 
\end{eqnarray}
Using \eqref{mx2} and \eqref{mx12} we compute
\begin{eqnarray*}
\langle \tilde{M}^n, X^{1, \perp} \rangle_t & = & 
\langle M^n, X^{1,\perp} \rangle_t - \int_0^t k_s^{1,\perp} d \langle X^{1,\perp} \rangle_s\\
& = & \langle M^n, X^{(2)} - \int_0^{.} \rho_s \sigma_s^{(2)} dB_s^{(1)} \rangle_t - \int_0^t k_s^{1,\perp} d \langle X^{1,\perp} \rangle_s\\
& = & \langle M^n, X^{(2)} \rangle - \langle M^n , \int_0^{.} \rho_s \sigma_s^{(2)} dB_s^{(1)} \rangle_t - 
\int_0^t k_s^{1,\perp} d \langle X^{1,\perp} \rangle_s\\
& \overset{ \proba }{\rightarrow} & \int_0^{t} \left(\phi_s^{AC2} - \left(\sigma_s^{(1)} \right)^{-1} \sigma_s^{(2)} \rho_s^{1,2} \phi_s^{AC1} \right) 
\left( \phi_s^{\tau} \right)^{-1} ds \\
& & - \int_0^t k_s^{1, \perp} \left(1 - \left( \rho_s^{1,2} \right)^2 \right) 
\left( \sigma_s^{(2)} \right)^2 ds.
\end{eqnarray*}
Hence, we choose 
$$k_s^{1,\perp} = \left(1 - \left( \rho_s^{1,2} \right)^2 \right)^{-1} \left( \left( \sigma_s^{(2)} \right)^{-2} \phi_s^{AC2}
- \left(\sigma_s^{(1)} \sigma_s^{(2)} \right)^{-1} \rho_s^{1,2} \phi_s^{AC1} \right) \left( \phi_s^{\tau} \right)^{-1}.$$
By $(A4)$, there exists $S > 0$ such that the $S$ Brownian motions $\{ D^{(1)}, ... , D^{(S)} \}$ generate the filtration $\left( \mathcal{F}_t \right)_{t \geq 0}$. To show that 
$\langle \tilde{M}^n, D^{(s)} \rangle_t$ tends to 0 in probability, we decompose $D^{(s)} = D^{s,1} + D^{s,2}$ where $D^{s,1}$ belongs to the space spanned 
by $\{ X^{(1)}, X^{(2)} \}$, $D^{s,2}$ is orthogonal to this space. By what precedes, we have clearly $\langle \tilde{M}^n, D^{s,1} \rangle_t$ tends to 0 
in probability. Also, $D^{s,2}$ is a martingale that is, conditionally on the observations times of both processes, independent of $\tilde{M}^n$. Thus we also 
deduce that $\langle \tilde{M}^n, D^{s,2} \rangle_t$ converges to 0 in probability.

We can now compute
\begin{eqnarray*}
\langle \tilde{M}^n \rangle_t & = & 
\langle M^n - \int_0^. k_s^{(1)} dX_s^{(1)} - \int_0^. k_s^{1,\perp} dX_s^{1,\perp} \rangle_t\\
& = & \langle M^n \rangle_t + \int_0^{t} \left( \sigma_s^{(1)} \right)^2 \left( k_s^{(1)} \right)^2 ds + 
\int_0^{t} \left( \sigma_s^{(2)} \right)^2 \left( 1 - \left( \rho_s^{1,2} \right)^2 \right) \left( k_s^{1, \perp} \right)^2 ds \\
& - & 2 \int_0^t k_s^{(1)} d \langle X^{(1)}, M^n \rangle_s - 2 \int_0^t k_s^{1, \perp} d \langle X^{1, \perp}, M^n \rangle_s\\
& \overset{ \proba }{\rightarrow} & \int_0^{t} \left( \phi_s^{AV} + 2 \left( k_s^{(1)} \left(\sigma_s^{(1)} \right)^{-1} \sigma_s^{(2)} \rho_s^{1,2} \phi_s^{AC1} 
- \left( k_s^{1} + k_s^{1, \perp} \right) \phi_s^{AC2} \right) \right) \left( \phi_s^{\tau} \right)^{-1}\\
& + & \left( \sigma_s^{(1)} \right)^2 \left( k_s^{(1)} \right)^2 
+ \left( \sigma_s^{(2)} \right)^2 \left( 1 - \left( \rho_s^{1,2} \right)^2 \right) \left( k_s^{1, \perp} \right)^2 ds.
\end{eqnarray*}
By letting
\begin{eqnarray*}
 AV_s & = & \left( \phi_s^{AV} + 2 \left( k_s^{(1)} \left(\sigma_s^{(1)} \right)^{-1} \sigma_s^{(2)} \rho_s^{1,2} \phi_s^{AC1} 
- \left( k_s^{(1)} + k_s^{1, \perp} \right) \phi_s^{AC2} \right) \right) \left( \phi_s^{\tau} \right)^{-1} \\
& & + \left( \sigma_s^{(1)} \right)^2 \left( k_s^{(1)} \right)^2 
+ \left( \sigma_s^{(2)} \right)^2 \left( 1 - \left( \rho_s^{1,2} \right)^2 \right) \left( k_s^{1, \perp} \right)^2,
\end{eqnarray*}
we deduce using Theorem 2.28 in Mykland and Zhang (2012) that stably in law as $\alpha_n \rightarrow 0$,  
 $$\alpha_n^{-1} \left(\widehat{RCV}_{t,n} - RCV_t \right) \rightarrow \int_0^t k_s^{(1)} dX_s^{(1)} + 
 \int_0^t k_s^{1, \perp} dX_s^{1, \perp} + \int_0^t \left( 
 AV_s \right)^{1/2} d \tilde{W}_s.$$
We have just shown Theorem \ref{main1}. Now, we express the asymptotic bias $AB_t = \int_0^t k_s^{(1)} dX_s^{(1)} + 
 \int_0^t k_s^{1, \perp} dX_s^{1, \perp}$ differently as
 \begin{eqnarray*}
  AB_t & = & \int_0^t k_s^{(1)} dX_s^{(1)} + \int_0^t k_s^{1, \perp} (1 - \left( \rho_s^{1,2} \right)^2)^{1/2} \sigma^{(2)}_{s} dB_{s}^{1,\perp} \\
  & = & \int_0^t k_s^{(1)} dX_s^{(1)} - \int_0^t k_s^{1, \perp} \rho_s^{1,2} \sigma^{(2)}_{s} dB_{s}^{(1)} 
  + \int_0^t k_s^{1, \perp} \rho_s^{1,2} \sigma^{(2)}_{s} dB_{s}^{(1)} \\
& & + 
  \int_0^t k_s^{1, \perp} \left( 1 - \left( \rho_s^{1,2} \right)^2 \right)^{1/2} \sigma^{(2)}_{s} dW_{s}^{1,\perp} \\
 & = & \int_0^t \left( k_s^{(1)} - k_s^{1, \perp} \rho_s^{1,2} \sigma^{(2)}_{s} \left( \sigma^{(1)}_{s} \right)^{-1} \right) dX_s^{(1)}
  + \int_0^t k_s^{1, \perp} dX_s^{(2)}.
 \end{eqnarray*}
 We thus deduce the expression of $AB_s^{(1)}$ and $AB_s^{(2)}$.
 
 \bigskip
 The proof of Corollary \ref{estcor} follows in the same way as the proof of Theorem \ref{main1}. We hold constant the asymptotic variance and the asymptotic bias on blocks of size $h_n$. Moreover, we can see that $\widehat{AB}_{i,\alpha}^{(1)}$, $\widehat{AB}_{i,\alpha}^{(2)}$ and $\widehat{AV}_{i,\alpha}$ are uniformly consistent estimators under the constant model.
 
 \subsection{Discussion on the adaptation of Theorem \ref{main1} proofs for more general models}
 \label{adaptationproofs}
We discuss in this section how to adapt the proofs of Theorem \ref{main1} when considering Example \ref{hittingbarriernoisejump} up to Example \ref{autoregressive}. In that case, the HBT can be defined for each $k=1,2$ as $\tau_{0,n} := 0$ and recursively as
\begin{eqnarray}
 \label{generateobstimesdiscussion} \tau_{i,n}^{(k)} := \inf \Big\{ t > \tau_{i-1,n}^{(k)} : \Delta X_{[\tau_{i-1,n}^{(k)}, t]}^{(t,k)} \notin \big[ \alpha_n d_{t,n}^{(k)} \big( t - \tau_{i-1,n}^{(k)} \big), \alpha_n u_{t,n}^{(k)} \big(t - \tau_{i-1,n}^{(k)} \big) 
\big] \Big\}
\end{eqnarray}
for any positive integer $i$. In (\ref{generateobstimesdiscussion}), the grid $g_{t,n}^{(k)} := (d_{t,n}^{(k)}, u_{t,n}^{(k)})$ depends on $n$, thus the term $g_t^{(k)}$ in the asymptotic variance obtained in Theorem \ref{main1} will have a different interpretation. Indeed, $g_t^{(k)}$ will be seen as a (possibly multidimensional) continuous time-varying parameter which generates (\ref{generateobstimesdiscussion}) instead of the scaled grid function itself. In particular, the approximations will not be carried with holding $g_{t,n}$ constant on each block, but rather with holding $g_t$ constant on each block. Also, for any fixed $t \in [0,1]$, $g_t^{(k)}$ will not be a function on $\reels^+$, but a simple vector. The reader can refer to Potiron (2016) for the notion of time-varying parameter. Note that Assumption (A$3$) is only used in Lemma \ref{tauapp} and Lemma \ref{approxdistrib}. Thus, Lemma \ref{tauapp} and Lemma \ref{approxdistrib} are the only parts in the proof which need to be adapted.

\subsubsection{Example \ref{hittingbarriernoisejump} (hitting constant boundaries of the jump size)}
 For each asset $k=1, 2$ we define the jump sizes as $L_{i,n}^{(k)}$. We assume that $L_{i,n}^{(1)}$ and $L_{i,n}^{(2)}$ are independent of each other. We have that $g_{t,n}^{(k)} (s) := ( - L_{i-1,n}^{(k)}, L_{i-1,n}^{(k)})$ for $t \in (\tau_{i-1,n}^{(k)}, \tau_{i,n}^{(k)}]$. We also have a non-time varying parameter $g_t := 1$. 
 
 \bigskip
 As the jump size $L_{i,n}^{(k)}$ are IID and independent of the other quantities, we can consider the same $L_{i,n}^{(k)}$ when making local approximations. Note that in Lemma \ref{tauapp}, the proof is made recursively for each observation time of the block. Thus a "jump" of $g_{t,n}$ is not a problem when it happens exactly at observation times, as long as the same jump is also made in the approximation block. Since $L_{i,n}^{(k)}$ is assumed to be bounded, it is straightforward to adapt the proof of Lemma \ref{tauapp}.
 
 \bigskip
 We discuss now how to adapt the proof of Lemma \ref{approxdistrib}. To do that, we consider the Markov chain $\tilde{Z}_i := \big( \Delta \tilde{X}_{[\tilde{\tau}_{i}^{1C,-}, \tilde{\tau}_{i}^{1C}]}^{(2)}, 
\tilde{\tau}_{i}^{1C} - \tilde{\tau}_{i}^{1C,-}$,  $L_{i'}^{(1)}$, $L_{j'}^{(2)} \big)$, where $i'$ is such that  $\tilde{\tau}_{i'}^{(1)} = \tilde{\tau}_{i}^{1C}$, $j'$ is such that $\tilde{\tau}_{j'}^{(2)}  = \tilde{\tau}_{i}^{1C,-}$, $L_{i}^{(1)}$ and $L_{i}^{(2)}$ are IID sequences independent of each other which follows respectively the distribution of $L_{i,1}^{(1)}$ and $L_{i,1}^{(2)}$. Then, everything follows the same way as in the proof of Lemma \ref{approxdistrib}. 

\subsubsection{Example \ref{uncertaintyzones} (model with uncertainty zones)}
 This model is very similar to Example \ref{hittingbarriernoisejump}, except that the sequence $L_{i,n}^{(k)}$ is obtained as a function of $\chi_{\tau_{i,n}}^{(k)}$, where $\chi_t^{(k)}$ corresponds to the continuous time-varying parameter $\chi_t$ of the $k$th asset introduced in p. 5 of Robert and Rosenbaum (2012). We thus consider $g_t^{(k)} := \chi_t^{(k)}$. The proof of Lemma \ref{tauapp} can be extended using the convenient construction of $L_{i,n}^{(k)}$ provided in p. 11 of Robert and Rosenbaum (2012). We extend this construction in two-dimension assuming that $(W_t')^{(1)}$ and $(W_t')^{(2)}$ are independent. As Example \ref{uncertaintyzones} is slightly more involved than Example \ref{hittingbarriernoisejump}, the Markov chain $\tilde{Z}_i$ needs to include also the type of previous price change (increment or decrement) for each asset. We thus consider $\tilde{Z}_i := \big( \Delta \tilde{X}_{[\tilde{\tau}_{i}^{1C,-}, \tilde{\tau}_{i}^{1C}]}^{(2)}, 
\tilde{\tau}_{i}^{1C} - \tilde{\tau}_{i}^{1C,-}$,  $L_{i'}^{(1)}$, $L_{j'}^{(2)}, \text{sign} (  \Delta \tilde{X}_{\tilde{\tau}_{i'}^{(1)}}^{(1)} ), \text{sign} (  \Delta \tilde{X}_{\tilde{\tau}_{j'}^{(2)}}^{(2)} ) \big)$, and can follow the same line of reasoning as in Lemma \ref{approxdistrib}.
 
\subsubsection{Example \ref{irregulargrid} (times generated by hitting an irregular grid model)}
In this case, the parameter $g_t^{(k)} := 1$ is non time-varying. Lemma \ref{tauapp} can be adapted easily. To show Lemma \ref{approxdistrib}, a further condition is needed on $q_j^{(k)} := p_j^{(k)} - p_{j-1}^{(k)}$. We assume that there exists a positive number $Q^{(k)}$ such that for any non-negative number $j$ and any $l \in \{0, \cdots, Q^{(k)}-1\}$ we have $q_{j Q^{(k)} + l}^{(k)} = q_{l}^{(k)}$. We also define the Markov chain $\tilde{Z}_i := \big( \Delta \tilde{X}_{[\tilde{\tau}_{i}^{1C,-}, \tilde{\tau}_{i}^{1C}]}^{(2)}, 
\tilde{\tau}_{i}^{1C} - \tilde{\tau}_{i}^{1C,-}, l^{(1)}, l^{(2)} \big)$, where $l^{(1)}$ is the index such that there exists a non-negative number $m$ with $p_{m Q^{(1)} + l^{(1)}}^{(1)} = \tilde{X}_{\tilde{\tau}_{i}^{1C}}^{(1)}$, and $l^{(2)}$ is the index such that there exists a non-negative number $m$ with $p_{m Q^{(2)} + l^{(2)}}^{(2)} = \tilde{X}_{\tilde{\tau}_{i}^{1C,-}}^{(2)}$. Under this assumption, we can show Lemma \ref{approxdistrib}.
\subsubsection{Example \ref{autoregressive} (structural autoregressive conditional duration model)}
We assume that the mixing variables $\tilde{d}_{\tau_{i,n}}^{(k)}$ and $\tilde{c}_{\tau_{i,n}}^{(k)}$ are interpolated by time-varying continuous stochastic parameters $(\tilde{d}_{t}^{(k)}, \tilde{c}_{t}^{(k)})$. We have that $g_t^{(k)} := (\tilde{d}_{t}^{(k)}, \tilde{c}_{t}^{(k)})$. The central limit theorem in Example \ref{autoregressive} can be obtained as a straightforward corollary of Theorem \ref{main1}. If we define for any $s \geq 0$ the grid functions $g_t^{(k)} (s) := (\tilde{d}_{t}^{(k)}, \tilde{c}_{t}^{(k)})$, the only difference between the HBT model (\ref{generateobstimes}) and the structural ACD model (\ref{autoregressiveACD}) is that we hold the grid between two observations in the latter model. In view of this specific assumption which implicates that the quantities of approximation are closer to the approximated quantities than under the HBT model, the proof of Lemma \ref{tauapp} simplifies. The proof of Lemma \ref{approxdistrib} remains unchanged as it deals only with quantities of approximation.

\subsection{Jump case: proof of Remark \ref{rkjumps}}
 \label{proofjumps}
 We update in this section the proof in the jump case model (\ref{jumpmodel}). The idea is to exclude all the blocks where we observe a jump. Such blocks will be finitely counted, and we will have at most one jump (either for $Y_t^{(1)}$ or for $Y_t^{(2)}$ but not for both prices at the same time) in each block. This is the main difference with the one-dimensional case.
 
 \bigskip
 We introduce the notation 
 $$A_n^{(no)} := \big\{ i \geq 1 \text{ s.t. } \tau_{i-1,n}^h \leq t \text{ and there is no jumps on } [\tau_{i-1,n}^h, \tau_{i,n}^h] \big\}.$$ 
 The proof of Lemma \ref{snk} can be adapted because of the finiteness of jumps. The proof of Lemma \ref{s} remains unchanged. Lemma \ref{esptauk} and Lemma \ref{numb} remains true in view of the finiteness of jumps. Lemma \ref{sumtight} and Lemma \ref{scale} don't need any change. We modify Lemma \ref{tauapp} as follows. Let $l \geq 1$, we have that 
 \begin{eqnarray*}
 \sup_{i \in A_n^{(no)} \text{ , } 2 \leq j \leq h_n} \esp \left[ \Big| \Delta \tau_{i,j,n}^{1C} - 
 \Delta \tilde{\tau}_{i,j,n}^{1C} \Big|^l \right] = o_p \left( \alpha_n^{2l} \right)
 \end{eqnarray*}
 and
 \begin{eqnarray*}
 \sup_{i \in A_n^{(no)} \text{ , } 2 \leq j \leq h_n} \esp \left[ \Big| \Delta \tau_{i,j,n}^{1C,-,+} - 
 \Delta \tilde{\tau}_{i,j,n}^{1C,-,+} \Big|^l \right] = o_p \left( \alpha_n^{2l} \right)
 \end{eqnarray*}
The proof remains unchanged in view of the independence assumption between jumps and the other quantities. Lemma \ref{termsvanishing} stays true with no further change. We introduce the new following lemma to be inserted between Lemma \ref{termsvanishing} and Lemma \ref{holdingconst} in the proofs.
\begin{lemmajumps}
\label{lemmajumps}
We have
$$\alpha_n^{-2} \sum_{i \in A_n}
\esp_{\tau_{i-1,n}^h} \left[ \sum_{u=2}^{h_n} \left( N_{(i-1)h_n + u} \right)^2 + 2 N_{(i-1)h_n + u} 
N_{(i-1)h_n + u+1} \right]$$
$$= \alpha_n^{-2} \sum_{i \in A_n^{(no)}}
\esp_{\tau_{i-1,n}^h} \left[ \sum_{u=2}^{h_n} \left( N_{(i-1)h_n + u} \right)^2 + 2 N_{(i-1)h_n + u} 
N_{(i-1)h_n + u+1} \right] + o_p (1)$$
\end{lemmajumps}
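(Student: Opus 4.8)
The plan is to show that the blocks which contain a jump contribute only $o_p(1)$, so that replacing the index set $A_n$ by $A_n^{(no)}$ is asymptotically harmless. Since $A_n^{(no)} \subseteq A_n$, the difference between the two sides is exactly
$$\alpha_n^{-2} \sum_{i \in A_n \setminus A_n^{(no)}} \esp_{\tau_{i-1,n}^h} \left[ \sum_{u=2}^{h_n} \left( N_{(i-1)h_n+u} \right)^2 + 2 N_{(i-1)h_n+u} N_{(i-1)h_n+u+1} \right],$$
and it suffices to prove that this quantity is $o_p(1)$.

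First I would control the cardinality of $A_n \setminus A_n^{(no)}$. Because $J_t$ is a finite activity process, there are a.s. finitely many jump times on $[0,1]$, say $P$ of them, with $P$ not depending on $n$. By Lemma \ref{snk} we have $s_n^h \overset{a.s.}{\rightarrow} 0$, so the block durations $\Delta \tau_{i,n}^h$ tend to $0$ uniformly in $i$; hence for $n$ large enough no two distinct jump times fall in the same block and, by the non-simultaneity of jumps across the two assets noted in the setup, each block carries at most one jump in at most one asset. Consequently $\left| A_n \setminus A_n^{(no)} \right| \leq P = O_p(1)$, uniformly in $n$.

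Next I would establish a uniform bound on the per-block conditional expectation: for any block index $i$, whether or not it contains a jump,
$$\esp_{\tau_{i-1,n}^h} \left[ \sum_{u=2}^{h_n} \left( N_{(i-1)h_n+u} \right)^2 + 2 N_{(i-1)h_n+u} N_{(i-1)h_n+u+1} \right] = O \left( h_n \alpha_n^4 \right),$$
uniformly in $i$. The conditional expectation is taken at the \emph{start} of the block, where the jump has not yet occurred; by the independence of $J_t$ from the other quantities and its finite intensity, the probability that a given $1$-correlated return carries a jump is of the order of the expected $1$-correlated duration, namely $O(\alpha_n^2)$ by Lemma \ref{esptauk}, while on that rare event the corresponding $N$ is of order $\alpha_n$ (an $O(1)$ jump in one asset times an $O(\alpha_n)$ diffusive increment of the other), so its square is of order $\alpha_n^2$. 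The jump contribution to $\esp_{\tau_{i-1,n}^h}[(N_u)^2]$ is therefore $O(\alpha_n^2)\cdot O(\alpha_n^2)=O(\alpha_n^4)$, the same order as the purely diffusive contribution already controlled in the proof of Lemma \ref{holdingconst} via Cauchy--Schwarz, the Burkholder--Davis--Gundy inequality and Lemma \ref{esptauk}; the $1$-correlated cross terms are of strictly smaller order. Summing over the $h_n$ returns of the block gives the displayed bound.

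Combining the two estimates yields
$$\alpha_n^{-2} \sum_{i \in A_n \setminus A_n^{(no)}} \esp_{\tau_{i-1,n}^h} \left[ \cdots \right] \leq \alpha_n^{-2} \cdot O_p(1) \cdot O \left( h_n \alpha_n^4 \right) = O_p \left( h_n \alpha_n^2 \right) = o_p(1),$$
the last step using (\ref{hnalphan}), which gives $h_n \alpha_n \to 0$ and hence $h_n \alpha_n^2 \to 0$. The main obstacle is precisely the uniform per-block bound of the previous paragraph: one must observe that the jump-block indicator $\mathbf{1}_{\{ i \in A_n \setminus A_n^{(no)}\}}$ is not $\mathcal{F}_{\tau_{i-1,n}^h}$-measurable, so the inner conditional expectation remains the ``averaged'' quantity computed as if no jump were present, whereby the rareness of jumps (probability $O(\alpha_n^2)$ per return) and their non-simultaneity keep it of the same order $O(h_n\alpha_n^4)$ as in the continuous case, leaving only the finite count of jump blocks to absorb.
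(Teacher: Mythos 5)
Your proof is correct and follows the same route as the paper's (very terse) argument: the paper simply invokes the finiteness of jumps and the fact that at most one jump falls in any single increment asymptotically, which is exactly the pair of facts you quantify as the $O_p(1)$ count of affected blocks and the uniform $O(h_n\alpha_n^4)$ per-block conditional bound. Your additional observations — that the jump contribution to $\esp[N^2]$ is $O(\alpha_n^2)\cdot O(\alpha_n^2)=O(\alpha_n^4)$ and that the non-adaptedness of the jump-block indicator is handled by bounding the summand uniformly — are the details the paper leaves implicit, and they close the argument via $h_n\alpha_n\to 0$.
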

\begin{proof}
This is a simple consequence to the fact that we have at most one jump in $\Delta X_{\tau_{i,n}^{1C}}^{(1)}$ or
$\Delta X_{\tau_{i,n}^{1C,-,+}}^{(2)}$ asymptotically, together with the finiteness of jumps.
\end{proof}
Starting from Lemma \ref{holdingconst} up to the end of the proof of Theorem \ref{main1}, in view of Lemma \ref{lemmajumps}, we can use "$i \in A_n^{(no)}$" in lieu of "$i \in A_n$". We have thus proved that Theorem \ref{main1} is robust to jumps. 

 \bibliographystyle{plain}

 \begin{figure}[p!]
\includegraphics[width=\linewidth]{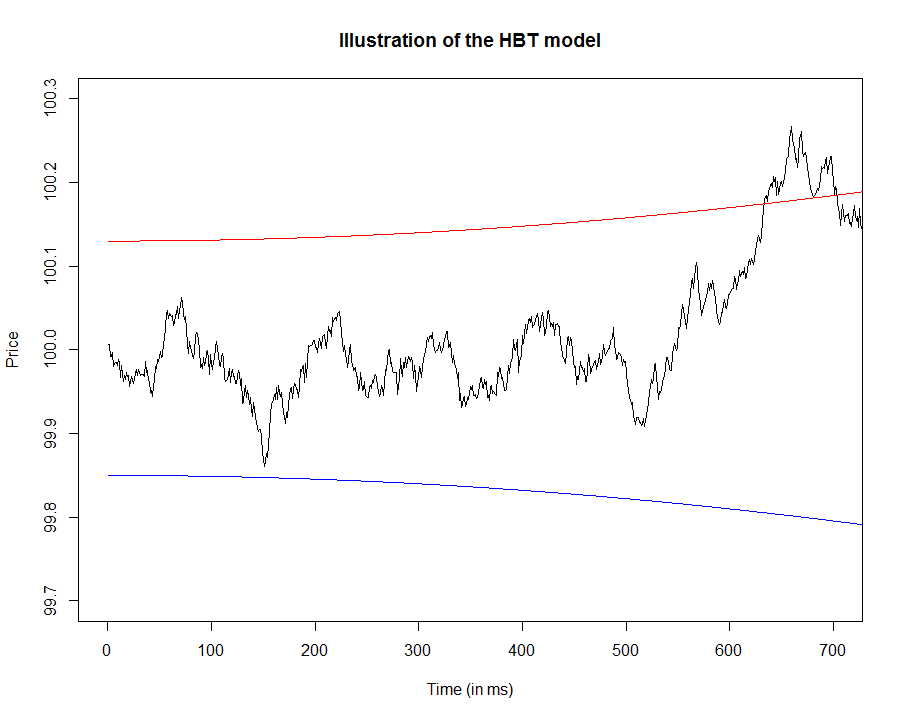}
\centering
\caption{This is an illustration of the HBT model when starting at time $\tau_0 =0$ and with $X_0 = 100$. The black stochastic process represents $X_t$, the red line stands for $100 + u_t (t)$ and the blue line for $100 + d_t (t)$. Furthermore, we assume that $X_t^{(t)} = X_t$. The second observation $\tau_1$ is obtained when $X_t$ crosses the red line for the first time.}
\label{illustration}
\end{figure}

   \begin{figure}[p!]
\includegraphics[width=\linewidth]{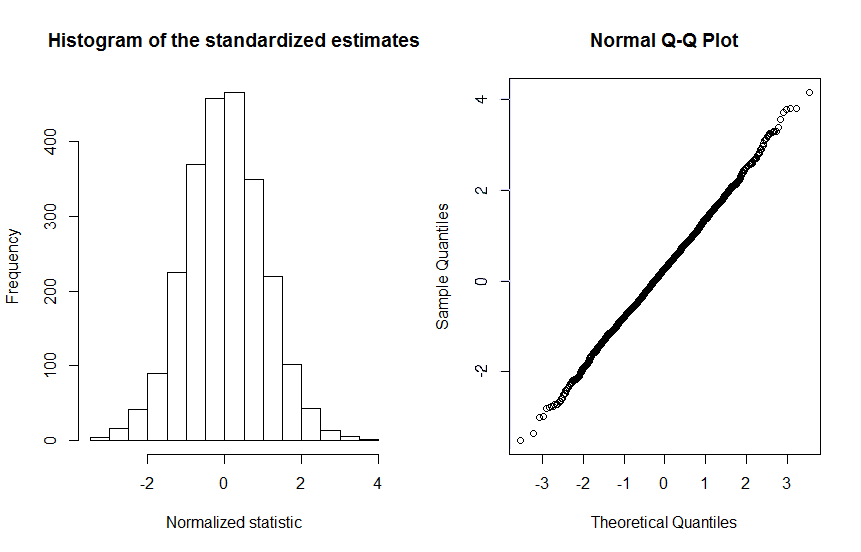}
\centering
\caption{Histogram and Normal QQ-plot of the standardized estimates (\ref{feasiblestat}) in setting $1$ on a $10$-year period of observations.}
\label{plotfeasiblestat}
\end{figure}

\begin{table}[p!]
 \centering
  \begin{tabular}{| c | c | c | c | c | c |}
   \hline
     No. years & estim & setting & sample bias &  RMSE & \% Reduced RMSE\\
  \hline
    1 &  HY & 1 &$5.41e-07$ &   $1.36e-05$ & -\\
    1 & BCHY & 1 & $5.43e-07$ &  $1.19e-05$ & $13 \%$\\
    5 &  HY & 1 & $1.10e-07$ &   $1.42e-05$ & -\\
    5 & BCHY & 1 & $1.07e-07$ &  $1.26e-05$ & $11 \%$\\
    10 &  HY & 1 & $5.54e-08$ &   $1.39e-05$ & -\\
    10 & BCHY & 1 & $5.53e-08$ &  $1.20e-05$ & $14 \%$\\
    1 &  HY & 2 &$5.47e-07$ &   $1.66e-05$ & -\\
    1 & BCHY & 2 & $5.44e-07$ &  $1.50e-05$ & $9 \%$\\
    5 &  HY & 2 & $1.13e-07$ &   $1.71e-05$ & -\\
    5 & BCHY & 2 & $1.15e-07$ &  $1.58e-05$ & $8 \%$\\
    10 &  HY & 2 & $5.58e-08$ &   $1.70e-05$ & -\\
    10 & BCHY & 2 & $5.60e-08$ &  $1.57e-05$ & $8 \%$\\
    1 &  HY & 3 &$5.61e-07$ &   $1.80e-05$ & -\\
    1 & BCHY & 3 & $5.62-07$ &  $1.67e-05$ & $7 \%$\\
    5 &  HY & 3 & $1.14e-07$ &   $1.81e-05$ & -\\
    5 & BCHY & 3 & $1.12e-07$ &  $1.68e-05$ & $7 \%$\\
    10 &  HY & 3 & $5.56e-08$ &   $1.80e-05$ & -\\
    10 & BCHY & 3 & $5.55e-08$ &  $1.68e-05$ & $7 \%$\\
    1 &  HY & 4 & $4.41e-07$ &   $1.10e-05$ & -\\
    1 & BCHY & 4 & $4.44e-07$ &  $1.11e-05$ & $- 1 \%$\\
    5 &  HY & 4 & $8.81e-08$ &   $1.10e-05$ & -\\
    5 & BCHY & 4 & $8.80e-08$ &  $1.09e-05$ & $1 \%$\\
    10 &  HY & 4 & $4.39e-08$ &   $1.08e-05$ & -\\
    10 & BCHY & 4 & $4.43e-08$ &  $1.08e-05$ & $0 \%$\\
    
   \hline
  \end{tabular}
  \caption{Summary statistics based on simulated endogenous data of $1$, $5$ and $10$ years. The RMSE in the table corresponds to 
  the square root of the squared distance between the estimated value and the true value $6.4 e-05$. HY
  stands for the usual Hayashi-Yoshida estimator (\ref{HY}), and BCHY represents the bias-corrected 
  estimator (\ref{biascorrected}).}
\label{num}
 \end{table}

  \begin{table}[p!]
 \centering
  \begin{tabular}{| c | c | c | c | c | c | c |}
   \hline
    No. years &  $0.5$ \% & $2.5$ \% & $5$ \% 
    & $95$ \% & $97.5$ \% & $99.5$ \% \\
  \hline
    1 & -2.48 & -1.99 & -1.59 & 1.66 & 2.13 & 2.57 \\
    5 & - 2.60 & -1.96 & -1.64 & 1.64 & 2.05 & 2.62  \\
    10 &- 2.68 & - 1.98 & -1.60 & 1.65 & 2.01 & 2.73  \\
   \hline
  \end{tabular}
  \caption{In this table, we report the finite sample quartiles of the feasible standardized statistic (\ref{feasiblestat}) in setting $1$. The benchmark quartiles are those for the limit distribution $\mathcal{N} (0, 1)$.}
\label{tablefeasiblestat}
 \end{table}

 \end{document}